\newtheorem{theorem}{Theorem}
\newtheorem{definition}[theorem]{Definition}
\newtheorem{lemma}[theorem]{Lemma}
\newtheorem{lemma*}{Lemma}
{}
{}
\newtheorem{proposition}[theorem]{Proposition}
\newtheorem{remark}[theorem]{Remark}
\newcommand {\C}   {\mathbb C}
\newcommand {\D}   {\mathbb D}
\newcommand {\F}   {\mathbb F}
\newcommand {\Z}   {\mathbb Z}
\newcommand {\Q}   {\mathbb Q}
\newcommand{\lcm}{\tmop{lcm}}
\newcommand{\OO}{\ensuremath{{{O}}}}
\newcommand{\sO}{\ensuremath{\widetilde{{O}}}}
\newcommand{\sOB}{\ensuremath{\widetilde{{O}}_B}}
\newcommand{\tmop}[1]{\ensuremath{\operatorname{#1}}}
\newcommand{\makeremark}[2]{
  \newcommand{#1}[1]
    {
    \color{blue}
     $\longrightarrow$ \textsc{#2: }
     ##1
     $\longleftarrow$
    \color{black}
    }
}    
\makeremark{\SL}{SL}
\makeremark{\FR}{Fabrice says}
\makeremark{\MP}{Marc says}
\makeremark{\ET}{Elias says}
\makeremark{\LP}{Luis shouts}
\makeremark{\YB}{Yacine says}
\definecolor{1ST}{rgb}{1,0,0}
\definecolor{2ND}{rgb}{1,0.5,0}
\definecolor{3RD}{rgb}{1,0,1}
\newcommand{\ideal}[1]{\langle #1 \rangle }
\newcommand{\shutup}[1]{}
\renewcommand{\leq}{\leqslant}  
\renewcommand{\geq}{\geqslant}
\def\cramped                           
\newcommand {\comp}{\ensuremath{\sO_B(d^8+d^7\tau)}} 
\thanks{INRIA Nancy Grand Est, LORIA laboratory, Nancy, France. {\tt Firstname.Name@inria.fr}}
\thanks{INRIA Paris-Rocquencourt and IMJ
           (Institut de Math\'ematiques de Jussieu, Universit\'e Paris 6, CNRS), Paris, France. {\tt Firstname.Name@inria.fr}}
\begin{document}
\makeRR  

\section{Introduction}

There exists many algorithms, in the literature, for ``solving'' algebraic
systems of equations.  Some focus on computing ``\emph{formal solutions}'' such
as rational parameterizations, Gr\"obner bases, and triangular sets, others
focus on isolating the solutions.  By isolating the solution, we mean computing
isolating axis-parallel boxes sets such that every real solution lies in a
unique box and conversely.  In this paper, we focus on the worst-case bit
complexity of these methods (in the RAM model) for systems of {\bf bivariate
  polynomials of total degree $\mathbf{d}$ with integer coefficients of bitsize
  $\mathbf{\tau}$.}

It should be stressed that formal solutions do not necessarily yield, directly,
isolating boxes of the solutions. In particular, from a theoretical complexity
view, it is not proved that the knowledge of a triangular system or Gr\"obner
basis of a system always simplifies the isolation of its solutions. The
difficulty lies in the fact that isolating the solutions of a triangular system
essentially amounts to isolating the roots of univariate polynomials with
algebraic numbers as coefficients, which is not trivial when these polynomials
have multiple roots. For recent work on this problem, we refer to
\cite{cgy-issac-2007,BCLM} where no upper bound of complexity are given for the
roots isolation.  This difficulty also explains why it is not an easy task to
define precisely what a formal solution of a system is, and why usage prevails
in what is usually considered to be a formal solution.

For isolating the real solutions of systems of two bivariate polynomials, the
algorithm with best known bit complexity was recently analyzed by Emeliyanenko
and Sagraloff \cite{sagraloff2012issacBisolve}. They solve the problem in
$\sOB(d^8+d^7\tau)$ bit operations (where $\sO$ refers to the complexity where
polylogarithmic factors are omitted and $O_B$ refers to the bit complexity).
Furthermore, the isolating boxes can easily be refined because the algorithm
computes the univariate polynomials 
that
correspond to the projections of the solutions on each axis (that is, the
resultants of the two input polynomials with respect to each of the variables).
 The main drawback of their approach is, however, that their
output 
(i.e., the isolating boxes and the two resultants) does not seem to help for
performing some important operations on the solutions of the system, such as
computing the sign of a polynomial at one of these real  solutions (referred to as the
\emph{sign\_at} operation), which is a critical operation in many problems, in
particular in geometry.

Other widespread approaches that solve systems 
and
allow for simple \emph{sign\_at} evaluations, are those that consist in computing
rational parameterizations of the (complex) solutions.  Recall that such
  a rational parameterization is a set of univariate polynomials and associated
  rational one-to-one mappings that send the roots of the univariate polynomials
  to the solutions of the system.
  The algorithm with the best known complexity for solving such systems via
  rational parameterizations was, in essence, first introduced by Gonzalez-Vega
  and El Kahoui \cite{VegKah:curve2d:96} (see also
  \cite{lgv-in-etdidapc-02}). The algorithm first applies a generic linear
  change of variables to the two input polynomials, computes a rational
  parameterization using the subresultant sequence of the sheared polynomials and
  finally computes the isolating boxes of the solutions. Its initial bit
  complexity of $\sOB(d^{16}+d^{14}\tau ^2)$ was improved by Diochnos et al.
  \cite[Theorem 19]{det-jsc-2009} to (i) $\sOB(d^{10}+d^9\tau)$ for computing a
  generic shear (i.e., a separating linear form), to (ii) $\sOB(d^{7}+d^6\tau)$
  for computing a rational parameterization and to (iii) $\sOB(d^{10}+d^9\tau)$
  for the isolation phase with a modification of the initial
  algorithm.\footnote{The complexity of the isolation phase in \cite[Theorem
    19]{det-jsc-2009} is stated as $\sOB(d^{12}+d^{10}{\tau}^2)$ but it
    trivially decreases to $\sOB(d^{10}+d^9\tau)$ with the recent result of
    Sagraloff \cite{sagraloff2012issacNewDsc} which improves the complexity of
    isolating the real roots of a univariate polynomial. Note also that Diochnos
    et al.  \cite{det-jsc-2009} present two algorithms, the M\_RUR and G\_RUR
    algorithms, both with bit complexity $\sOB(d^{12}+d^{10}{\tau}^2)$. However,
    this complexity is worst case only for the M\_RUR algorithm. As pointed out
    by Emeliyanenko and Sagraloff \cite{sagraloff2012issacBisolve}, the G\_RUR
    algorithm uses a modular gcd algorithm over an extension field whose
    considered bit complexity is expected.}

  \paragraph{Main results.} We addressed in \cite{bouzidi2013SepElt} the first phase of the above
  algorithm and proved that, given two polynomials $P$ and $Q$ of degree at most $d$ and bitsize at
  most $\tau$, a separating linear form can be computed in $\sOB(d^{8}+d^7\tau)$ bit operations
  (improving by a factor $d^2$ the above complexity).  We suppose computed such a separating linear
    form and address in this paper the second and third phase of the above algorithm, that is the
  computation of a rational parameterization and the isolation of the solutions of the system.  We
  also consider two important related problems, namely, the evaluation of the sign of a polynomial
  at the real solutions of a system and the computation of a rational parameterization of
  over-constrained systems.

  We first show that the Rational Univariate Representation (RUR for short) of Rouillier
  \cite{Rou99} (i) can be expressed with simple polynomial formulas, that (ii) it has a total
  bitsize which is asymptotically smaller than that of Gonzalez-Vega and El Kahoui by a factor $d$,
  and that (iii) it can be computed with the same complexity, that is $\sOB(d^{7}+d^6\tau)$
  (Theorem~\ref{th:rur}). Namely, we prove that the RUR consists of four polynomials of degree at
  most $d^2$ and bitsize $\sO(d^2+d\tau)$ (instead of $O(d)$ polynomials with the same asymptotic
  degree and bitsize for Gonzalez-Vega and El Kahoui parameterization).  Moreover, we prove that
  this bound holds for any ideal containing $P$ and $Q$, that is, for instance the radical ideal of
  $\ideal{P,Q}$ (Proposition~\ref{prop:rur-size}).

We show that, given a RUR, isolating boxes of the solutions of the system can be computed with
$\sOB(d^{8}+d^7\tau)$ bit operations (Proposition~\ref{prop:computing-boxes}).  This decreases by a
factor $d^2$ the best known complexity for this isolation phase of the algorithm (see the discussion
above).  Globally, this brings the overall bit complexity of all three phases of the algorithm to
$\sOB(d^{8}+d^7\tau)$, which also improves by a factor $d^2$ the complexity.

Finally, we show how a rational parameterization can be used to perform efficiently two important
operations on the input system. We first show how a RUR can be used to perform efficiently the
\emph{sign\_at} operation.  Given a polynomial $F$ of total degree at most $d$ with integer
coefficients of bitsize at most $\tau$, we show that the sign of $F$ at one real solution of the
system can be computed in $\sOB(d^8+d^7\tau)$ bit operations, while the complexity of computing its
sign at all the $\Theta(d^2)$ solutions of the system is only $O(d)$ times that for one real
solution (Theorem~\ref{th:sign_at_rur}).  This improves the best known complexities of
$\sOB(d^{10}+d^9\tau)$ and $\sOB(d^{12}+d^{11}\tau)$ for these respective problems (see \cite[Th. 14
\& Cor. 24]{det-jsc-2009} with the improvement of \cite{sagraloff2012issacNewDsc} for the root
isolation).
Similar to the \emph{sign\_at} operation, we show that a RUR can be split in two parameterizations
such that $F$ vanishes at all the solutions of one of them and at none of the other. {We also show
  that these rational parameterizations can be transformed back into RURs in order to reduce their
  total bitsize (see above), within the same complexity, that is, $\sOB(d^8+d^7\tau)$}
(Proposition~\ref{prop:overconstrained}).

\medskip The paper is organized as follows: in Section~\ref{sec:RURalgo}, we present our algorithm
for computing the RUR based on the formulas of Proposition~\ref{prop:rur-res2}.  We then use these
formulas in Section~\ref{sec:size-rur} to prove new bounds on the bitsize of the coefficients of the
polynomials of the RUR.  {The main results of Section~\ref{sec:rur-candidate} are summarized in
  Theorem~\ref{th:rur}.}  In Section~\ref{sec:applications}, we present three applications of the
RUR. We first describe in Section~\ref{sec:boxes} an algorithm for isolating the real solutions.  We
then present in Section~\ref{sec:sign_at} an algorithm for computing the sign of a bivariate
polynomial at these solutions and, finally, we show in Section~\ref{sec:overcontraint} how a RUR can
be split into rational parameterizations whose solutions satisfy some equality and inequality
constraints.

\section{Notation and preliminaries}

We introduce notation and recall the definition of subresultant sequences and
basics of complexity.

The bitsize of an integer $p$ is the number of bits needed to represent it, that
is $\lfloor\log p\rfloor+1$ ($\log$ refer to the logarithm in base 2). For
rational numbers, we refer to the bitsize as to the maximum bitsize of its
numerator and denominator.
The bitsize of a polynomial with integer or rational coefficients is the
\emph{maximum}  bitsize of its coefficients. 
We refer to $\tau_\gamma$ as the bitsize of a polynomial, rational or integer $\gamma$.

We denote by $\D$ a unique factorization domain,
typically $\Z[X,Y]$, $\Z[X]$ or $\Z$. We also denote by
$\F$ a field, typically $\Q$, $\C$.
For any polynomial $P\in \D[X]$, let $Lc_X(P)$ denote its leading coefficient
with respect to the variable $X$ (or simply $Lc(P)$ in the univariate case),
$d_X(P)$ its degree with respect to $X$, and $\overline{P}$ its squarefree
part. The ideal generated by two polynomials $P$ and $Q$ is denoted
$\ideal{P,Q}$, and the affine variety of an ideal $I$ is denoted by $V(I)$; in
other words, $V(I)$ is the set of distinct solutions of the system
$\{P,Q\}$. The solutions are always considered in the algebraic
closure of the fraction field of $\D$,
unless specified otherwise. For a point $\sigma \in V(I)$, $\mu_I(\sigma)$
denotes the multiplicity of $\sigma$ in $I$. For simplicity, we refer
indifferently to the ideal $\ideal{P,Q}$ and to the corresponding system of
polynomials.

\emph{We finally introduce the following notation which are extensively used
  throughout the paper.} Given the two input polynomials $P$ and $Q$, we
consider the ``generic'' change of variables $X=T-SY$, and define the
``sheared'' polynomials $P(T-SY,Y)$, $Q(T-SY,Y)$, and their resultant with
respect to $Y$,
\begin{equation}\label{eq0}
{ R(T,S)}=Res_Y({P}(T-SY,Y),{Q}(T-SY,Y)).
\end{equation}
Let $L_R(S)$ be the leading coefficient of $R(T,S)$ seen as a polynomial in $T$.
Let  ${ L_{P}(S)}$ and ${ L_{Q}(S)}$ be the leading coefficients
 of ${P}(T-SY,Y)$ and $Q(T-SY,Y)$, seen as polynomials in $Y$;  it is straightforward   that these leading coefficients do
 not depend on $T$. In other words: 
%

\begin{equation}\label{eq1}
\begin{array}{c}
L_{P}(S) = Lc_Y({P}(T-SY,Y)),\ \ L_{Q}(S) = Lc_Y({Q}(T-SY,Y)) \\
 L_R(S)= Lc_T(R(T,S)) 
\end{array}
\end{equation}

\paragraph{Complexity.} 
We recall some complexity bounds.
In the sequel, we often consider the gcd of two univariate polynomials $P$ and $Q$ and the gcd-free part of $P$ with respect to $Q$, that is, the divisor $D$ of $P$ such that
$P=\gcd(P,Q)D$. Note that when $Q=P'$, $D$ is the squarefree part $\overline{P}$ of~$P$.

\begin{lemma}[{\cite[Corollary 10.12 \& Remark 10.19]{BPR06}}\footnote{{\cite[Corollary 10.12]{BPR06} states that  $P$ and $Q$ have a gcd in $\Z[X]$
    with bitsize in $O(d+\tau)$.  \cite[Remark 10.19]{BPR06} claims that a gcd and gcd-free parts
    of $P$ and $Q$ can be computed in  $\sOB(d^2\tau)$ bit
    operations. This remark refers to \cite[Corollary
    5.2]{lr-jsc-2001} which proves that the last non-zero Sylvester-Habicht polynomial, which is a
    gcd of $P$ and $Q$ \cite[Corollary 8.32]{BPR06}, can be computed in $\sOB(d^2\tau)$ bit
    operations. Moreover, the corollary proves that the Sylvester-Habicht transition matrices can be
    computed within the same bit complexity, which gives the cofactors of $P$ and $Q$ in the
    sequence of the Sylvester-Habicht polynomials (i.e., $U_i, V_i\in \Z[X]$ such that $U_iP+V_iQ$
    is equal to the $i$-th Sylvester-Habicht polynomials).  The gcd-free part of $P$ with respect to $Q$
    and conversely are the cofactors corresponding to the one-after-last non-zero Sylvester-Habicht
    polynomial \cite[Proposition 10.14]{BPR06}, and can thus be computed in  $\sOB(d^2\tau)$ bit
    operations.
    The gcd (resp.  gcd-free part) of $P$ and $Q$ computed this way is in
    $\Z[X]$, thus dividing it by the gcd of its coefficients yields a gcd (resp.
    gcd-free part) of $P$ and $Q$ of smallest bitsize in $\Z[X]$ which is known
    to be in $O(d+\tau)$. The gcd of the coefficients, which are of bitsize
    $\sO(d\tau)$ \cite[Proposition 8.46]{BPR06}, follows from $O(d)$ gcds of two
    integers of bitsize $\sO(d\tau)$ and each such gcd can be computed
    with $\sOB(d\tau)$ bit operations \cite[\S 2.A.6]{Yap-2000}.  Therefore, a
    gcd (resp.  gcd-free part) of $P$ and $Q$ of bitsize $O(d+\tau)$ can be
    computed in $\sOB(d^2\tau)$ bit complexity.}}]
\label{complexity:gcd}
Two polynomials $P$, $Q$ in $\Z[X]$ with maximum degree $d$ and bitsize at
most $\tau$ have a gcd in $\Z[X]$ 
with coefficients of bitsize in $O(d+\tau)$ which
can be computed with $\sOB(d^2\tau)$ bit operations.  The same bounds hold for
the bitsize and the computation of the gcd-free part of $P$ with respect to
$Q$.  
\end{lemma}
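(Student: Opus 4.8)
The plan is to derive both assertions from the theory of Sylvester--Habicht (subresultant) sequences and their transition matrices. First I would recall the two structural facts from \cite{BPR06}: the last non-zero Sylvester--Habicht polynomial of $P$ and $Q$ is, up to a nonzero constant, a gcd of $P$ and $Q$ in $\Z[X]$ (\cite[Corollary 8.32]{BPR06}); and the cofactors $U_i,V_i\in\Z[X]$ attached to the \emph{one-after-last} non-zero Sylvester--Habicht polynomial satisfy $U_iP+V_iQ=\pm\gcd(P,Q)\cdot D$, where $D$ is the gcd-free part of $P$ with respect to $Q$, so that $V_i$ (resp.\ $U_i$) is, up to a constant, that gcd-free part (\cite[Proposition 10.14]{BPR06}). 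Thus the whole lemma reduces to (a) computing two consecutive members of the sequence together with the corresponding transition matrices within the claimed complexity, and (b) bounding the bitsize of the polynomials so obtained.

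For (a) I would invoke the fast half-gcd--type algorithm for Sylvester--Habicht sequences of \cite{lr-jsc-2001}: it computes the last non-zero polynomial of the sequence \emph{and} the associated transition matrix of $P$ and $Q$ in $\sOB(d^2\tau)$ bit operations (\cite[Corollary 5.2]{lr-jsc-2001}), and since the gcd-free part is read off the cofactors at a neighbouring index of the very same sequence, it comes within the same bound. This already yields a gcd and a gcd-free part in $\Z[X]$, but a priori with coefficients of bitsize only $\sO(d\tau)$, coming from Hadamard/Mignotte determinant bounds on subresultant coefficients (\cite[Proposition 8.46]{BPR06}).

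To reach the sharper $O(d+\tau)$ bitsize bound I would then apply Gauss's lemma: dividing the polynomial just computed by the gcd of its integer coefficients produces a gcd (resp.\ gcd-free part) of $P$ and $Q$ of smallest content in $\Z[X]$, whose coefficients have bitsize $O(d+\tau)$ by the Mignotte factor bound (\cite[Corollary 10.12]{BPR06}). Computing that content amounts to $O(d)$ gcds of integers of bitsize $\sO(d\tau)$, each computable in $\sOB(d\tau)$ bit operations by the fast integer Euclidean algorithm (\cite[\S 2.A.6]{Yap-2000}), for a total of $\sOB(d^2\tau)$; the subsequent $O(d)$ exact divisions of coefficients of bitsize $\sO(d\tau)$ by an integer of the same bitsize are negligible. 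Adding up, both the gcd and the gcd-free part, of bitsize $O(d+\tau)$, are obtained in $\sOB(d^2\tau)$ bit operations. The only delicate point is the index bookkeeping in the Sylvester--Habicht sequence --- one must be sure that the cofactors taken at the correct (possibly defective) index give the gcd-free part and not merely a constant multiple of a proper divisor --- but this is precisely the content of \cite[Proposition 10.14]{BPR06}, so no genuine obstacle remains.
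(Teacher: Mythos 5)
Your proposal follows exactly the same route as the paper's proof: invoke \cite[Corollary 5.2]{lr-jsc-2001} to obtain the last non-zero Sylvester--Habicht polynomial and the transition matrices (hence cofactors) in $\sOB(d^2\tau)$ bit operations, read off the gcd from \cite[Corollary 8.32]{BPR06} and the gcd-free part from \cite[Proposition 10.14]{BPR06}, observe that these integer polynomials a priori have coefficients of bitsize $\sO(d\tau)$ by \cite[Proposition 8.46]{BPR06}, and then divide out the content (computed via $O(d)$ integer gcds of bitsize $\sO(d\tau)$, each in $\sOB(d\tau)$ by \cite[\S 2.A.6]{Yap-2000}) to reach the $O(d+\tau)$ bitsize of \cite[Corollary 10.12]{BPR06}.

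One slip in your supporting argument: the cofactors $U_i,V_i$ at the index one after the last non-zero Sylvester--Habicht polynomial satisfy $U_iP+V_iQ=0$, since that member of the sequence \emph{is} zero. Your stated identity $U_iP+V_iQ=\pm\gcd(P,Q)\cdot D$ reads $\pm P$ (because $\gcd(P,Q)\cdot D=P$ by definition of $D$), which is not the relation and does not in itself force $V_i$ to be proportional to $D$. The correct deduction is: from $U_iP+V_iQ=0$, divide by $g=\gcd(P,Q)$ to get $U_i\,(P/g)=-V_i\,(Q/g)$; coprimality of $P/g$ and $Q/g$ plus degree counting gives $V_i=c\,(P/g)$ and $U_i=c'\,(Q/g)$ for nonzero constants. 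Your conclusion (that $V_i$, resp.\ $U_i$, is the gcd-free part up to a constant) is right, and \cite[Proposition 10.14]{BPR06} carries it in any case, but the Bézout-style identity you wrote down is not the one that proves it.
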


The following is a refinement of the previous lemma for the case of two
  polynomials with different degrees and bitsizes. It is a straightforward
adaptation of \cite[Corollary 5.2]{lr-jsc-2001}
and it is only used in Section~\ref{sec:overcontraint}.

 \begin{lemma}[{\cite{lr-jsc-2001}}\footnote{The algorithm in \cite{lr-jsc-2001} uses
the well-known half-gcd approach to compute any polynomial in the Sylvester-Habicht and cofactors sequence in a
    soft-linear number of arithmetic operations, and it exploits Hadamard's bound
    on determinants to bound the size of intermediate
    coefficients. 
When the two input  polynomials have different degrees and bitsizes, Hadamard's bound reads as
$\sO(p\tau_Q+q\tau_P)$ instead of simply $\sO(d\tau)$ and, 
similarly as in Lemma~\ref{complexity:gcd}, the algorithm in \cite{lr-jsc-2001} 
yields a gcd and gcd-free parts of $P$ and $Q$ in $\sOB(\max(p,q)(p\tau_Q+q\tau_P))$ bit operations.
Furthermore,  the gcd and gcd-free parts computed this way are in $\Z[X]$ with
coefficients of bitsize $\sO(p\tau_Q+q\tau_P)$, thus, dividing them by the gcd of their coefficients
can be done with $\sOB(\max(p,q)(p\tau_Q+q\tau_P))$ bit operations and yields a  gcd and  gcd-free
parts in $\Z[X]$ with minimal bitsize, which is as claimed by Mignotte's bound
(see e.g. \cite[Corollary 10.12]{BPR06}).}]
\label{lem:finegcd}
  Let $P$ and $Q$ be two polynomials in $\Z[X]$ of degrees $p$ and $q$ and of  bitsizes $\tau_P$ and
  $\tau_Q$, respectively. 
A gcd of $P$ and $Q$ of bitsize $O(\min(p+\tau_P,q+\tau_Q))$  in $\Z[X]$, can be computed
in $\sOB(\max(p,q)(p\tau_Q+q\tau_P))$ bit operations. 
A gcd-free part of $P$ with respect to $Q$, of bitsize $O(p+\tau_P)$ in $\Z[X]$, can be computed in the same bit complexity.
 \end{lemma}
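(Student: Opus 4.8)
The plan is to follow the classical route through the Sylvester--Habicht (signed subresultant) sequence, adapting the analysis behind \cite[Corollary 5.2]{lr-jsc-2001} to inputs of unequal degree and bitsize. First I would recall from \cite[Corollary 8.32]{BPR06} that a gcd of $P$ and $Q$ is the last non-zero polynomial of their Sylvester--Habicht sequence, and from \cite[Proposition 10.14]{BPR06} that a gcd-free part of $P$ with respect to $Q$ is the $P$-cofactor $U_i$ attached to the one-after-last non-zero Sylvester--Habicht polynomial, where $U_iP+V_iQ=\mathrm{SH}_i$. Hence both quantities are read off from a single prescribed term of the Sylvester--Habicht-and-cofactors sequence of $(P,Q)$, and it suffices to compute that term and then clean it up.

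Next I would invoke the half-gcd (divide-and-conquer Euclidean) algorithm of \cite{lr-jsc-2001}, which computes any prescribed term of that sequence using a soft-linear number $\sO(\max(p,q))$ of arithmetic operations in $\Z$; this count does not rely on the two inputs being balanced, so it transfers directly. The only new ingredient is the bound on the size of the integers manipulated: every integer occurring, whether in an $\mathrm{SH}_i$ or in a cofactor, is up to sign a minor of the matrix whose rows are the (shifted) coefficient vectors of $P$ — at most $q$ of them, of bitsize $\tau_P$ — and of $Q$ — at most $p$ of them, of bitsize $\tau_Q$. Hadamard's inequality on such a minor yields the asymmetric bound $\sO(p\tau_Q+q\tau_P)$, which collapses to the familiar $\sO(d\tau)$ when all parameters are of the same order. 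Multiplying this bitsize by the $\sO(\max(p,q))$ arithmetic operations, each costing $\sOB(p\tau_Q+q\tau_P)$ on integers of that size, gives the announced $\sOB(\max(p,q)(p\tau_Q+q\tau_P))$.

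To reach the stated \emph{minimal} bitsizes I would then make the output primitive. The computed gcd (resp.\ gcd-free part) lies in $\Z[X]$ with coefficients of bitsize $\sO(p\tau_Q+q\tau_P)$; dividing by its content costs $O(\max(p,q))$ gcds of two integers of that bitsize, i.e.\ $\sOB(p\tau_Q+q\tau_P)$ each by the fast integer Euclidean algorithm \cite[\S 2.A.6]{Yap-2000}, hence stays within budget. Since the primitive gcd divides both $P$ and $Q$, Mignotte's factor bound (see e.g.\ \cite[Corollary 10.12]{BPR06}) caps its bitsize by $O(p+\tau_P)$ and by $O(q+\tau_Q)$, hence by $O(\min(p+\tau_P,q+\tau_Q))$; likewise the primitive gcd-free part of $P$ with respect to $Q$ divides $P$, so its bitsize is $O(p+\tau_P)$.

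The main obstacle, I expect, is not any individual estimate but the bookkeeping needed to certify two points rigorously: that the half-gcd recursion of \cite{lr-jsc-2001} still runs in soft-linear arithmetic complexity when the inputs are far from balanced (tracking how the degree drops propagate through the divide-and-conquer splits), and — more delicately — that the asymmetric Hadamard bound $\sO(p\tau_Q+q\tau_P)$ holds uniformly for \emph{all} intermediate quantities, cofactors and transition matrices included, at every level of the recursion, since these are themselves subresultant-type determinants in the coefficients of $P$ and $Q$ and one must verify their matrices have the same $q$-rows-from-$P$ / $p$-rows-from-$Q$ structure. Once that uniform bound is established, the complexity claim follows by the routine ``arithmetic-operation count $\times$ per-operation bit cost'' product, exactly as in the proof of Lemma~\ref{complexity:gcd}.
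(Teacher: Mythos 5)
Your argument mirrors the paper's own proof, which is given in the footnote attached to the lemma: it appeals to the half-gcd computation of Sylvester--Habicht sequences and cofactors from \cite{lr-jsc-2001}, replaces the balanced Hadamard bound $\sO(d\tau)$ with the asymmetric $\sO(p\tau_Q+q\tau_P)$, multiplies by the soft-linear arithmetic-operation count, and finishes by dividing out the content and invoking Mignotte's bound for the minimal bitsize of the primitive output. The extra detail you supply (the specific citations \cite[Corollary 8.32]{BPR06} and \cite[Proposition 10.14]{BPR06}, and the $q$-rows-from-$P$ / $p$-rows-from-$Q$ structure behind the Hadamard estimate) is consistent with and fills in the same reasoning the paper records for the balanced case in Lemma~\ref{complexity:gcd}.
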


We now state a bound on the complexity of evaluating a univariate
   polynomial which ought to be known, even though we were not able to find a
   proper reference to it.  For completeness, we provide a very simple proof.

\begin{lemma}
\label{lem:comp:evaluation} 
Let $a$ be a rational of bitsize $\tau_a$, the evaluation at $a$ of a univariate 
polynomial $f$ of degree $d$ and rational coefficients of bitsize $\tau$ can be done in
$\sO_B(d(\tau+\tau_a))$ bit operations, while the value $f(a)$ has bitsize in
$O(\tau+d\tau_a)$.
\end{lemma}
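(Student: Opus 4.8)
The plan is to reduce to the case $f\in\Z[X]$ and then compute $f(a)$ by a divide-and-conquer on the degree, taking care to raise the numerator and the denominator of $a$ only to powers that are themselves powers of two. We may assume $f\in\Z[X]$, since in general $f=\widetilde f/c$ with $\widetilde f\in\Z[X]$ and $c\in\Z$, and $f(a)=\widetilde f(a)/c$. Write $a=p/q$ in lowest terms, with $p,q\in\Z$ of bitsize at most $\tau_a$, and let $c_0,\dots,c_d\in\Z$ be the coefficients of $f$. Then
\[
  f(a)=\frac1{q^{d}}\sum_{i=0}^{d}c_i\,p^{i}\,q^{d-i}=\frac{g}{q^{d}},\qquad g:=\sum_{i=0}^{d}c_i\,p^{i}\,q^{d-i}\in\Z .
\]
Each term of $g$ has bitsize $O(\tau+i\tau_a+(d-i)\tau_a)=O(\tau+d\tau_a)$, so $g$, a sum of $d+1$ such integers, has bitsize $O(\tau+d\tau_a)$; since $q^{d}$ has bitsize $O(d\tau_a)$, the rational $f(a)=g/q^{d}$ has bitsize $O(\tau+d\tau_a)$, which is the second assertion.

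For the complexity, the plan is to compute $g$ recursively. After padding $f$ with zero coefficients we may assume its number of coefficients is a power of two, say $N=2^{k}$ with $N<2d$. For a polynomial $h$ with $M=2^{m}$ coefficients $h_0,\dots,h_{M-1}$, set $g_h:=\sum_{i=0}^{M-1}h_i\,p^{i}\,q^{M-1-i}\in\Z$; in particular $f(a)=g_f/q^{N-1}$. Writing $h=A+X^{M/2}B$ with $A$ and $B$ having $M/2$ coefficients each, splitting the defining sum at $i=M/2$ gives
\[
  g_h=q^{M/2}\,g_A+p^{M/2}\,g_B .
\]
Recursing on $A$ and $B$ produces an algorithm whose only operations on $p$ and $q$, besides additions, are multiplications by the powers $p^{2^{j}}$ and $q^{2^{j}}$, $0\le j\le k$. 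These $O(\log d)$ powers are precomputed once by repeated squaring: the $j$-th squaring multiplies two integers of bitsize $O(2^{j-1}\tau_a)$, hence costs $\MM(2^{j-1}\tau_a)=\sO(2^{j}\tau_a)$ bit operations (using $\MM(n)=\sO(n)$ for integer multiplication), for a total of $\sO(N\tau_a)=\sO(d\tau_a)$.

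It then remains to analyse the recursion $T(M)=2\,T(M/2)+C(M)$, where $C(M)=\sO(\tau+M\tau_a)$ bounds one combination step: two products, each of an already-computed value $g_A$ or $g_B$ (of bitsize $O(\tau+M\tau_a)$) by a precomputed power (of bitsize $O(M\tau_a)$), followed by one addition. Unrolling, the $2^{\ell}$ subproblems at depth $\ell$ have $N/2^{\ell}$ coefficients and together cost $2^{\ell}\cdot\sO\!\big(\tau+(N/2^{\ell})\tau_a\big)=\sO(2^{\ell}\tau+N\tau_a)$; summing over the $O(\log N)$ depths gives $T(N)=\sO(N\tau+N\tau_a\log N)=\sO(d(\tau+\tau_a))$. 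Adding the precomputation of the powers and the $O(d\tau)$ cost of reading $f$, the total is $\sO_B(d(\tau+\tau_a))$, as claimed. The one point requiring care — and the only real obstacle — is splitting at a power of two: with an arbitrary split point the combination step would require $p^{m}$ for a general $m\le d$, and recomputing such powers at the $\Theta(d)$ nodes of the recursion tree would cost $\sO(d^{2}\tau_a)$, whereas restricting the exponents to powers of two keeps the set of needed powers to the $O(\log d)$ ones produced by a single repeated-squaring pass.
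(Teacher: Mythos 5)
Your proof is correct and uses the same divide-and-conquer (Estrin-style) splitting as the paper's own proof, just organized over $\Z$ with the powers of $p$ and $q$ cleared out and precomputed rather than over $\Q$ with $a^{d/2}$ recomputed at each recursion level. Two small remarks. First, the opening reduction ``we may assume $f\in\Z[X]$'' is not free in general: if the coefficients of $f$ are $a_i/b_i$ with the $b_i$ pairwise coprime $\tau$-bit integers, then $c=\lcm_i b_i$ and the coefficients of $\widetilde f=cf$ have bitsize $\Theta(d\tau)$, so the reduction inflates $\tau$ by a factor $d$; in fact the lemma's bitsize claim $O(\tau+d\tau_a)$ itself already fails for such $f$ (take $f=\sum_i X^i/p_i$ with distinct $\tau$-bit primes $p_i$ and $a=1$). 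The statement is thus implicitly one about $f\in\Z[X]$ (or a common-denominator representation), which is the only way the paper ever applies it, and in that regime your proof is sound. Second, the closing claim that an arbitrary split point would force $\sO(d^2\tau_a)$ is not quite right: even computing each needed power $p^m$ from scratch by repeated squaring at every node of the recursion tree sums to $\sO(d\tau_a)$, because the exponents halve as one descends the tree (this is essentially what the paper's proof does, charging the power computation to each combine step and letting the geometric decay absorb it). So the power-of-two split is a clean bookkeeping device, not the ``only real obstacle''; the $\sO(d^2\tau_a)$ figure would only arise if one built $p^m$ by $m-1$ successive multiplications by $p$.
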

\begin{proof}
  The complexity $\sO_B(d(\tau+\tau_a))$ can easily be obtained by recursively
  evaluating the polynomial $\sum_{i=0}^d a_i\, x^i$ as $\sum_{i=0}^{d/2} a_i\, x^i
  + x^{d/2}\sum_{i=1}^{d/2} a_{i+d/2}\, x^i$. Evaluating $x^{d/2}$ can be done in
  $O_B(d\tau_a\log^3 d\tau_a)$ time by recursively computing $\log \frac{d}{2}$
  multiplications of rational numbers of bitsize at most $d\tau_a$, each of
  which can be done in $O_B(d\tau_a\log d\tau_a\log\log d\tau_a)$ time by
  Sch\"onhage-Strassen algorithm (see e.g. \cite[Theorem 8.24]{vzGGer}.
  $\sum_{i=0}^{d/2} a_{i+d/2}\,a ^i$ has bitsize at most $d\tau_a+\tau$, hence its
  multiplication with $a^{d/2}$ can be done in $O_B((d\tau_a+\tau)\log^2
  (d\tau_a+\tau))$ time.  Hence, the total complexity of evaluating $f$ is at
  most $T(d,\tau,\tau_a)=2T(d/2,\tau,\tau_a)+O_B((d\tau_a+\tau)\log^3
  (d\tau_a+\tau))$ which is in\footnote{Indeed, $T(d,\tau,\tau_a)=2^{i+1}T(\frac{d}{2^{i+1}},\tau,\tau_a)+O_B(
    (d\tau_a+\tau)\log^3 (d\tau_a+\tau) + \cdots+ 2^i
    (\frac{d}{2^i}\tau_a+\tau)\log^3
    (\frac{d}{2^i}\tau_a+\tau))$

\indent
\newlength{\lengthone}
\settowidth{\lengthone}{Indeed, $T(d,\tau,\tau_a)$}
\hspace{\lengthone}
$ \leq O_B( d\tau_a\log^3 (d\tau_a+\tau) \log d + \tau\log^3 (d\tau_a+\tau)\sum_{i=0}^{\log d} 2^i)$

\indent
\newlength{\lengthtwo}
\settowidth{\lengthtwo}{Indeed, $T(d,\tau,\tau_a)$}
\hspace{\lengthtwo}
$ \leq O_B( d(\tau_a+\tau)\log^4 (d\tau_a+\tau))$.}
$O_B(d(\tau_a+\tau)\log^4 (d\tau_a+\tau))$ that is in $\sOB(d(\tau_a+\tau))$.
\end{proof}

\begin{lemma}[{\cite[Lemma 5]{bouzidi2013SepElt}}]\label{lem:complexity:shear}
  Let $P$ and $Q$ in $\Z[X,Y]$ {be} of total degree {at most} $d$ and maximum bitsize~$\tau$. 
The sheared polynomials $P(T-SY,Y)$ and $Q(T-SY,Y)$ can be
  expanded in $\sOB(d^4+d^3\tau)$ and their bitsizes are in
  $\sO(d+\tau)$. The resultant $R(T,S)$ can be computed in
  $\sOB(d^7+d^6\tau)$ bit operations and $\sO(d^5)$ arithmetic
  operations in $\Z$; its degree is at most $2d^2$ in each variable
  and its bitsize is in $\sO(d^2+d\tau)$.
\end{lemma}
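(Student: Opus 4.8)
The plan is to prove the three assertions separately: the expansion of the sheared polynomials by an explicit binomial computation, the degree and bitsize of $R(T,S)$ by Sylvester-matrix and Hadamard-type estimates, and the cost of computing $R$ by a two-level evaluation--interpolation scheme that reduces to univariate resultants over $\Z$. For the expansion, write $P=\sum_{i+j\le d}a_{ij}X^iY^j$ and substitute $X=T-SY$:
\[
P(T-SY,Y)=\sum_{i+j\le d}\ \sum_{k=0}^{i}(-1)^k\binom{i}{k}a_{ij}\,T^{i-k}S^kY^{j+k}.
\]
The key point is that the exponent triple $(a,b,c)=(i-k,k,j+k)$ determines $(i,j,k)$, so the coefficient of $T^aS^bY^c$ is the single term $(-1)^b\binom{a+b}{b}\,a_{a+b,\,c-b}$ (nonzero only for $b\le c$ and $a+c\le d$); consequently $\deg_T,\deg_S,\deg_Y\le d$, there are $O(d^3)$ nonzero coefficients, and each has bitsize at most $d+\tau+O(1)=\sO(d+\tau)$ since $\binom{a+b}{b}\le 2^d$. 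Forming the expansion then amounts to tabulating the binomials $\binom{i}{k}$, $i\le d$, by Pascal's rule in $\sOB(d^3)$ bit operations and computing $O(d^3)$ products of an integer of bitsize $\le d$ by one of bitsize $\le\tau$, that is $\sOB(d^3(d+\tau))=\sOB(d^4+d^3\tau)$ in total; the same holds for $Q$.

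For the size of $R=Res_Y(P(T-SY,Y),Q(T-SY,Y))$, note that the $Y$-degree of each sheared polynomial is at most $d$, so $R$ is the determinant of a Sylvester matrix of size at most $2d$ whose entries are $Y$-coefficients of the sheared polynomials, that is, elements of $\Z[T,S]$ of degree at most $d$ in each of $T$ and $S$ and of bitsize $\sO(d+\tau)$ by the first part. Expanding the determinant as a signed sum of at most $(2d)!$ products of $2d$ entries yields $\deg_T R\le 2d\cdot d=2d^2$ and likewise $\deg_S R\le 2d^2$; and, collecting by monomials in $T,S$, the coefficient of each $T^aS^b$ in $R$ is a sum of at most $(2d)!\,(d+1)^{4d}$ products of $2d$ integers of bitsize $\sO(d+\tau)$, hence of bitsize $2d\cdot\sO(d+\tau)+\sO(d\log d)=\sO(d^2+d\tau)$, a Hadamard-type bound.

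To compute $R$, I would use evaluation--interpolation, first in $S$ and then in $T$. First I would pick $2d^2+1$ integers $s_\ell$ of bitsize $O(\log d)$ that are not roots of the nonzero polynomial $L_P(S)L_Q(S)$ (of degree at most $2d$), so that $Res_Y$ commutes with the specialization $S=s_\ell$ and $R(T,s_\ell)=Res_Y(P(T-s_\ell Y,Y),Q(T-s_\ell Y,Y))$; each $P(T-s_\ell Y,Y),Q(T-s_\ell Y,Y)\in\Z[T][Y]$ has degree at most $d$ in $Y$ and in $T$ and bitsize $\sO(d+\tau)$, and is obtained by specializing the expanded sheared polynomial at $S=s_\ell$, in $\sO(d^5)$ integer operations over all $\ell$ together (using precomputed powers of the $s_\ell$). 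For each $\ell$, the polynomial $R(T,s_\ell)\in\Z[T]$ has degree at most $2d^2$; I would evaluate it at $2d^2+1$ distinct integers $t_m$ of bitsize $O(\log d)$, which --- since the $Y$-leading coefficients are independent of $T$ --- equals $Res_Y(P(t_m-s_\ell Y,Y),Q(t_m-s_\ell Y,Y))$, a resultant of two polynomials of degree at most $d$ over $\Z$ with coefficients of bitsize $\sO(d+\tau)$. The needed univariate specializations come from multipoint-evaluating in $T$ the $O(d)$ coefficients (each a polynomial of degree at most $d$) of $P(T-s_\ell Y,Y)$ and $Q(T-s_\ell Y,Y)$ at the $t_m$'s, in $\sO(d^3)$ integer operations per $\ell$ by fast multipoint evaluation \cite{vzGGer}; each of the $O(d^4)$ univariate resultants costs $\sO(d)$ integer operations on integers of bitsize $\sO(d^2+d\tau)$ by the fast resultant algorithm \cite{vzGGer}; and reconstructing each $R(T,s_\ell)$ from its values at the $t_m$, and then $R(T,S)$ coefficientwise in $S$, costs $\sO(d^4)$ integer operations. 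The total is $\sO(d^5)$ arithmetic operations in $\Z$ on integers of bitsize $\sO(d^2+d\tau)$ (the intermediate sizes follow from Lemma~\ref{lem:comp:evaluation} and the bitsize bound on $R$), hence $\sOB(d^7+d^6\tau)$ bit operations by Sch\"onhage--Strassen multiplication.

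The step I expect to be the real obstacle is the bitsize bound on $R$: Hadamard's inequality does not apply verbatim when the Sylvester matrix has polynomial entries, so one must control separately the bitsize of the individual integer coefficients of the entries and the number of monomial products of $2d$ entries contributing to a fixed coefficient $T^aS^b$ of $R$, and obtaining $\sO(d^2+d\tau)$ rather than $\sO(d^3+d^2\tau)$ relies on this number being at most $(2d)!\,(d+1)^{4d}$, of bitsize only $\sO(d\log d)$. The one other technical point is the compatibility of $Res_Y$ with the specialization $S=s_\ell$, which requires $L_P(s_\ell)\neq 0$ (or $L_Q(s_\ell)\neq 0$) and explains the choice of the $s_\ell$; no analogous restriction is needed for the $t_m$, since $L_P$ and $L_Q$ do not depend on $T$.
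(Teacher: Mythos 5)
The paper does not prove this lemma internally: it is imported verbatim as [Lemma~5] of the companion paper \cite{bouzidi2013SepElt}, so there is no in-document argument to compare yours against, and I assess your proof on its own. Your argument is correct and takes the natural route. The binomial expansion together with the observation that $(a,b,c)=(i-k,k,j+k)$ determines $(i,j,k)$ gives exactly one term per monomial, hence the $O(d^3)$ support, the $\sO(d+\tau)$ bitsize (from $\binom{a+b}{b}\le 2^d$), and the $\sOB(d^4+d^3\tau)$ expansion cost. The Sylvester-determinant estimates for $\deg_{T}R,\deg_{S}R\le 2d^2$ and for the bitsize via the count $(2d)!\,(d+1)^{4d}$ of monomial products contributing to a fixed $T^aS^b$ are a valid Hadamard-type argument; you are right that this direct count is where the $\sO(d^2+d\tau)$ (rather than $\sO(d^3+d^2\tau)$) comes from. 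The two-level evaluation--interpolation scheme is sound: the $s_\ell$ must avoid the zero set of $L_P L_Q$ (degree at most $2d$), no constraint is needed on the $t_m$ since $L_P,L_Q$ are $T$-free, and the resulting operation counts give $\sO(d^5)$ arithmetic operations and $\sOB(d^7+d^6\tau)$ bit operations.

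One small remark on sourcing rather than substance: the claim that a single resultant of two degree-$d$ polynomials of bitsize $\sO(d+\tau)$ over $\Z$ costs $\sO(d)$ ring operations whose operands stay within bitsize $\sO(d^2+d\tau)$ is not what the generic fast-Euclidean algorithm over $\Q$ gives (which does not control intermediate coefficient growth); it is the signed-subresultant / Lickteig--Roy half-gcd framework, cited elsewhere in this paper as \cite{lr-jsc-2001} (see the footnote to Lemma~\ref{complexity:gcd}), that guarantees the intermediate integers stay within Hadamard's bound $\sO(d(d+\tau))$. Substituting that reference for the generic \cite{vzGGer} citation closes the loop; otherwise, your proof is complete.
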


\section{Rational Univariate Representation}
\label{sec:rur-candidate}

The idea of this section is to express the polynomials of a RUR of two
polynomials in terms of a resultant defined from these polynomials. Given a
separating form, this yields a new algorithm to compute a RUR and it also
enables us to derive the bitsize of the polynomials of a~RUR.
In Section~\ref{sec:RURalgo}, we prove these expressions for the polynomials of
a RUR and present the corresponding algorithm. We prove the bound on the bitsize
of the RUR in Section~\ref{sec:size-rur}. These results are summarized in Theorem~\ref{th:rur}.

Throughout this section we assume that the two input polynomials $P$ and
$Q$ are coprime in $\Z[X,Y]$, that their
  maximum total degree $d$ is at least 2 and that their coefficients have
  maximum bitsize $\tau$. 

We first recall the definition and main properties of Rational Univariate
Representations. In the following, for any polynomial $v\in \Q[X,Y]$ and
$\sigma=(\alpha,\beta)\in \C^2$, we denote by $v(\sigma)$ the image of $\sigma$ by the
polynomial function $v$ (e.g. $X(\alpha,\beta)=\alpha$).

\begin{definition}[\cite{Rou99}]
\label{def:rur}
Let $I\subset \Q[X,Y]$ be a zero-dimensional ideal, $V(I)=\{\sigma \in \C^2,
v(\sigma)=0,\forall v\in I \}$ its associated variety, and a linear form
$T=X+aY$ with $a\in \Q$.  The RUR-candidate of $I$ associated to $X+aY$ (or
simply, to $a$), denoted $RUR_{I,a}$, is the following set of four univariate
polynomials in~$\Q[T]$
\begin{equation}\label{eq:defRUR}
\begin{split}
&\displaystyle f_{I, a} (T) = \prod_{\sigma \in V (I)} (T - X (\sigma) - aY
    (\sigma))^{\mu_I (\sigma)} \\
&\displaystyle    f_{I, a, v} (T) = \sum_{\sigma \in V (I)} \mu_I (\sigma) v (\sigma)
    \prod_{\varsigma \in V (I), \varsigma \neq \sigma} (T - X (\varsigma) - aY
    (\varsigma)),  \quad \mbox{for } v\in \{1,X,Y \}
\end{split}
\end{equation}
where, for $\sigma \in V(I)$, $\mu_I(\sigma)$ denotes the multiplicity of
$\sigma$ in $I$.  If $(X,Y)\mapsto X+aY$ is injective on $V(I)$, we say that the
linear form $X+aY$ separates $V(I)$ (or is separating for $I$) and
$RUR_{I,a}$ is called
  a RUR (the RUR of $I$ associated to $a$) and it defines a bijection
  between $V(I)$ and $V(f_{I,a})=\{\gamma\in \C,f_{I,a}(\gamma)=0\}$:
 $$\begin{array}{ccc}
   V(I) & \rightarrow & V(f_{I,a}) \\
   (\alpha,\beta) & \mapsto & \alpha + a \beta \\
\displaystyle   \left(\frac{f_{I,a,X}}{f_{I,a,1}}(\gamma),\frac{f_{I,a,Y}}{f_{I,a,1}}(\gamma)\right) &  \mapsfrom  &
   \gamma \\ 
 \end{array}$$
Moreover, this bijection preserves the real roots and the multiplicities.
\end{definition}

We prove in this section the following theorem on the RUR of two polynomials. We state it for any separating linear form $X+aY$ with integer $a$ of bitsize $\sO(1)$
with the abuse of notation that polylogarithmic factors in $d$ and $\tau$ are omitted.
Note that it is known that there exists a separating form $X+aY$ with a positive  integer $a<2d^4$ and
that such a separating form can be computed in $\comp$ bit operations \cite{bouzidi2013SepElt}. This
theorem is  a direct consequence of Propositions~\ref{prop:complexity-rur-cand} and~\ref{prop:rur-size}.

\begin{theorem}\label{th:rur}
  Let $P, Q \in \mathbb{Z} [X, Y]$ be two coprime bivariate polynomials of total
  degree at most $d$ and maximum bitsize $\tau$. Given a separating form $X+aY$
  with integer $a$   of bitsize $\sO(1)$, 
the RUR of $\ideal{P,Q}$
  associated to $a$ can be computed using Proposition~\ref{prop:rur-res2} with $\sOB(d^7+d^6\tau)$ bit operations.
  Furthermore, the polynomials of this RUR have degree at most $d^2$ and bitsize
  in~$\sO(d^2+d\tau)$.
\end{theorem}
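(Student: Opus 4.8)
The plan is to decompose Theorem~\ref{th:rur} into its two halves exactly as announced: the complexity bound $\sOB(d^7+d^6\tau)$ comes from Proposition~\ref{prop:complexity-rur-cand}, and the bounds on the degree ($\leq d^2$) and bitsize ($\sO(d^2+d\tau)$) of the four RUR polynomials come from Proposition~\ref{prop:rur-size}. So the real content is already delegated; what remains is to set up the bridge between the defining products in Definition~\ref{def:rur} and the resultant $R(T,S)$ from~\eqref{eq0}, which is the object whose size and computational cost we actually control via Lemma~\ref{lem:complexity:shear}. Concretely, I would first invoke Proposition~\ref{prop:rur-res2} to write each of $f_{I,a}$, $f_{I,a,1}$, $f_{I,a,X}$, $f_{I,a,Y}$ as an explicit polynomial expression in $R(T,S)$ and its partial derivatives (specialized at $S=a$), together with the leading coefficients $L_R$, $L_P$, $L_Q$ and suitable gcd / gcd-free extractions. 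Since $X+aY$ is separating for $\ideal{P,Q}$, these RUR-candidates are genuine RURs, and the bijection of Definition~\ref{def:rur} holds; this is what lets me identify the abstract products with the concrete resultant-based formulas.

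Next I would run the complexity accounting. By Lemma~\ref{lem:complexity:shear}, the sheared polynomials $P(T-aY,Y)$ and $Q(T-aY,Y)$ have bitsize $\sO(d+\tau)$ and can be expanded in $\sOB(d^4+d^3\tau)$ — note $a$ has bitsize $\sO(1)$, so specializing $S=a$ only absorbs polylog factors and keeps us in the same complexity class. The resultant $R(T,a)\in\Z[T]$ then has degree at most $d^2$ in $T$ (the total degree $d$ bound sharpens the ``$2d^2$'' bidegree bound of Lemma~\ref{lem:complexity:shear} once $S$ is fixed generically) and bitsize $\sO(d^2+d\tau)$, computable in $\sOB(d^7+d^6\tau)$ bit operations. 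From $R(T,a)$ one obtains $f_{I,a}$ as (essentially) its squarefree part, and the numerators $f_{I,a,v}$ via the derivative-type formulas of Proposition~\ref{prop:rur-res2}; each of these is a gcd or gcd-free-part computation on univariate polynomials of degree $\sO(d^2)$ and bitsize $\sO(d^2+d\tau)$, which by Lemma~\ref{complexity:gcd} costs $\sOB((d^2)^2(d^2+d\tau))=\sOB(d^6+d^5\tau)$ — dominated by the resultant step. Summing over the four polynomials keeps the bound at $\sOB(d^7+d^6\tau)$.

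For the output bounds I would appeal directly to Proposition~\ref{prop:rur-size}: every coefficient of every RUR polynomial of $\ideal{P,Q}$ (in fact of any ideal containing $P$ and $Q$) has bitsize $\sO(d^2+d\tau)$, and each polynomial has degree at most $d^2$. The degree bound is immediate from the definition, since $\#V(\ideal{P,Q})\leq d^2$ by Bézout and each product in~\eqref{eq:defRUR} ranges over at most that many factors (for $f_{I,a}$ the multiplicities add up to at most $d^2$, again by Bézout); the bitsize bound is the substantive part and is exactly what Proposition~\ref{prop:rur-size} supplies, via the resultant formulas and a height analysis. Assembling: the four explicit formulas of Proposition~\ref{prop:rur-res2}, evaluated within the claimed complexity by the discussion above and with output size controlled by Proposition~\ref{prop:rur-size}, give precisely the statement.

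\textbf{Main obstacle.} The delicate point is making the resultant-to-RUR dictionary rigorous: showing that specialization $S=a$ commutes with the resultant (no unexpected degree drop, i.e. $L_R(a)\neq 0$, which uses that $a$ is chosen generically / separating), and that the derivative and gcd-free manipulations recover $f_{I,a,1},f_{I,a,X},f_{I,a,Y}$ on the nose rather than up to an extraneous polynomial factor. Once Proposition~\ref{prop:rur-res2} is in hand this is bookkeeping, but it is where a naive argument could silently lose a factor of $d$ in bitsize or miss a spurious common factor — so I would treat that identification as the crux and keep the complexity and size estimates downstream of it.
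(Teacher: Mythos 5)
Your proposal follows exactly the paper's route: the paper proves Theorem~\ref{th:rur} by citing it as a direct consequence of Proposition~\ref{prop:complexity-rur-cand} (the $\sOB(d^7+d^6(\tau+\tau_a))$ computation bound, which becomes $\sOB(d^7+d^6\tau)$ when $\tau_a\in\sO(1)$) and Proposition~\ref{prop:rur-size} (degree $\leq d^2$ and bitsize $\sO(d^2\tau_a+d\tau)=\sO(d^2+d\tau)$), which is precisely your decomposition. One minor imprecision to fix in your narrative: $f_{I,a}(T)=R(T,a)/L_R(a)$ is the \emph{monic normalization} of $R(T,a)$, not its squarefree part --- $f_{I,a}$ retains the multiplicities $\mu_I(\sigma)$ and is in general non-squarefree; the squarefree part $\overline{f_{I,a}}$ only enters in the formula for $f_{I,a,X}$ in Proposition~\ref{prop:rur-res2}.
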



\subsection{RUR computation}
\label{sec:RURalgo}

We show here that the polynomials of a RUR can be expressed as combinations of
specializations of the resultant $R$ and its partial derivatives. The seminal
idea has already been used by several authors in various contexts
(see e.g. \cite{Can,ABRW,Schost})
for computing rational
parameterizations of the radical of a given zero-dimensional ideal and mainly for
bounding the size of a Chow form.
Based on the same idea but keeping track of multiplicities, we present a simple
new 
formulation for the polynomials of a RUR, given separating form.

\begin{proposition}\label{prop:rur-res2}
  For any rational $a$ such that $L_{P}(a)L_{Q}(a)\neq 0$ and such that $X+aY$
  is a separating form of $I=\langle P,Q\rangle$, the RUR of $\langle P,Q\rangle$
  associated to $a$ is as follows:
\[\begin{array}{ll}
\displaystyle f_{I,a}(T)=\frac{R(T,a)}{L_R(a)}
&\quad \displaystyle 
f_{I,a,1}(T)=\frac{f'_{I,a}(T)}{\gcd(f_{I,a}(T),{f'}_{\! I,a}(T))}\\
\displaystyle 
f_{I,a,Y}(T)= \frac{ \frac{\partial R}{\partial S}(T,a) -f_{I,a}(T)\frac{\partial L_R}{ \partial S}(a) }{L_R(a) \gcd(f_{I,a}(T),{f'}_{\! I,a}(T))}
&\quad \displaystyle 
f_{I,a,X}(T)=T f_{I,a,1}(T)-d_T(f_{I,a}) \overline{f_{I,a}(T)}
    -af_{I,a,Y}(T).
\end{array}\]
\end{proposition}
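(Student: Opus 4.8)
The plan is to identify each of the four polynomials in the statement with the corresponding polynomial $f_{I,a}, f_{I,a,1}, f_{I,a,Y}, f_{I,a,X}$ defined in Definition~\ref{def:rur}, working from the definition of $R(T,S)$ as a resultant and its standard factorization over the roots of the system. The starting point is the classical Poisson-type product formula for the resultant: for a generic value of $S$ (in particular $S=a$, since $L_P(a)L_Q(a)\neq 0$ guarantees that the degrees in $Y$ of the sheared polynomials do not drop and no leading coefficient vanishes), $R(T,a)$ factors, up to the leading coefficient $L_R(a)$, as $\prod_{\sigma\in V(I)}(T-X(\sigma)-aY(\sigma))^{\mu_I(\sigma)}$. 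This requires knowing that the multiplicity with which the linear form $X(\sigma)+aY(\sigma)$ occurs as a root of $R(T,a)$ equals the multiplicity $\mu_I(\sigma)$ of $\sigma$ in the ideal $\langle P,Q\rangle$; this is where the separating hypothesis is used (distinct solutions give distinct roots, so the multiplicities don't get merged) together with a known compatibility between intersection multiplicities and resultant multiplicities after a generic shear. Dividing by $L_R(a)$ makes it monic, giving exactly $f_{I,a}(T)$.

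Next I would handle $f_{I,a,1}$. From the explicit product form of $f_{I,a}$, its squarefree part is $\overline{f_{I,a}}(T)=\prod_{\sigma\in V(I)}(T-X(\sigma)-aY(\sigma))$, and a direct logarithmic-derivative computation shows $f_{I,a,1}(T)=\sum_{\sigma}\mu_I(\sigma)\prod_{\varsigma\neq\sigma}(T-X(\varsigma)-aY(\varsigma))$ is precisely $f'_{I,a}(T)/\gcd(f_{I,a}(T),f'_{I,a}(T))$; indeed $\gcd(f_{I,a},f'_{I,a})=f_{I,a}/\overline{f_{I,a}}$, so the quotient is $f'_{I,a}\cdot\overline{f_{I,a}}/f_{I,a}$, and differentiating the product $f_{I,a}=\prod(T-\cdots)^{\mu_I(\sigma)}$ gives exactly the claimed sum. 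This matches the $v=1$ case of \eqref{eq:defRUR}.

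The key step is $f_{I,a,Y}$. The idea is to differentiate $R(T,S)$ with respect to $S$ and specialize at $S=a$. Writing $R(T,S)=L_R(S)\prod_{\sigma\in V(I)}(T-X(\sigma)-aY(\sigma)+(a-S)\,\text{(correction)})^{\mu_I(\sigma)}$ is too naive, so instead I would argue directly: for $S$ near $a$ the system $\{P(T-SY,Y),Q(T-SY,Y)\}$ still has the solutions parameterized by $\sigma\in V(I)$ via $Y=Y(\sigma)$ and $T-SY=X(\sigma)$, i.e. $T=X(\sigma)+SY(\sigma)$, so $R(T,S)=L_R(S)\prod_{\sigma}(T-X(\sigma)-SY(\sigma))^{\mu_I(\sigma)}$ as an identity in $T$ and $S$ locally around $S=a$ (the multiplicities being constant in a neighbourhood by the separating and genericity assumptions). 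Taking $\partial/\partial S$, using the product rule and then setting $S=a$, the $L_R'(S)$ term produces $\frac{\partial L_R}{\partial S}(a)\prod_\sigma(T-X(\sigma)-aY(\sigma))^{\mu_I(\sigma)}=\frac{\partial L_R}{\partial S}(a)\,L_R(a)\,f_{I,a}(T)$, while differentiating each factor $(T-X(\sigma)-SY(\sigma))^{\mu_I(\sigma)}$ brings down $-\mu_I(\sigma)Y(\sigma)$ times the product with that factor's exponent lowered by one, yielding $-L_R(a)\sum_\sigma \mu_I(\sigma)Y(\sigma)\prod_{\varsigma\neq\sigma}(T-X(\varsigma)-aY(\varsigma))^{\mu_I(\varsigma)}$. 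After rearranging, $\frac{\partial R}{\partial S}(T,a)-f_{I,a}(T)\frac{\partial L_R}{\partial S}(a) = -L_R(a)\sum_\sigma\mu_I(\sigma)Y(\sigma)\prod_{\varsigma\neq\sigma}(T-X(\varsigma)-aY(\varsigma))^{\mu_I(\varsigma)}$; dividing by $L_R(a)$ and then by $\gcd(f_{I,a},f'_{I,a})=f_{I,a}/\overline{f_{I,a}}$ cancels the excess multiplicities in each product term, leaving exactly $\pm f_{I,a,Y}(T)=\sum_\sigma\mu_I(\sigma)Y(\sigma)\prod_{\varsigma\neq\sigma}(T-X(\varsigma)-aY(\varsigma))$. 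I would need to check the sign convention carefully (the statement has it so that things are consistent with Definition~\ref{def:rur}). The main obstacle is making the ``local constancy of the factorization and multiplicities in $S$'' rigorous — i.e. justifying that $R(T,S)=L_R(S)\prod_\sigma(T-X(\sigma)-SY(\sigma))^{\mu_I(\sigma)}$ holds identically in a Zariski-dense set of $S$ with the same $\mu_I(\sigma)$, which should follow from the fact that the shear is a linear change of coordinates (hence preserves intersection multiplicities) and that $L_P(a)L_Q(a)\neq 0$ keeps the resultant well-behaved; alternatively one invokes the subresultant/specialization properties of $R$ directly. Finally, $f_{I,a,X}$ follows by a purely formal identity: since $X(\sigma)=\bigl(X(\sigma)+aY(\sigma)\bigr)-aY(\sigma)$ and $T$ evaluated along the bijection is $X(\sigma)+aY(\sigma)$, we get $f_{I,a,X}=\sum_\sigma\mu_I(\sigma)X(\sigma)\prod_{\varsigma\neq\sigma}(T-\cdots) = T\,f_{I,a,1}(T) - \sum_\sigma\mu_I(\sigma)(\text{root}_\sigma)\prod_{\varsigma\neq\sigma}(T-\cdots) \cdot(\text{adjustment}) - a f_{I,a,Y}(T)$; writing $\sum_\sigma\mu_I(\sigma)r_\sigma\prod_{\varsigma\neq\sigma}(T-r_\varsigma)$ where $r_\sigma$ is the root, one recognizes this as $T f_{I,a,1}(T)-d_T(f_{I,a})\,\overline{f_{I,a}}(T)$ by comparing with the derivative of $\overline{f_{I,a}}\cdot(\text{something})$ — concretely, $\sum_\sigma r_\sigma\prod_{\varsigma\neq\sigma}(T-r_\varsigma)$ with multiplicities is $T\cdot f_{I,a,1} - d_T(f_{I,a})\overline{f_{I,a}}$, which is a short bookkeeping check on $\sum_\sigma\mu_I(\sigma)(T-r_\sigma)\prod_{\varsigma\neq\sigma}(T-r_\varsigma)=T f_{I,a,1}-\bigl(\sum_\sigma\mu_I(\sigma)\bigr)\overline{f_{I,a}}$ and $\sum_\sigma\mu_I(\sigma)=d_T(f_{I,a})$.
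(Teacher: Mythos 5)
Your proposal follows essentially the same route as the paper: the paper first establishes, via specialization of the resultant and the multiplicative compatibility of intersection multiplicities under the shear (its Lemma~\ref{lem:res-roots}), the polynomial identity $R(T,S)=L_R(S)\,\prod_{\sigma\in V(I)}(T-X(\sigma)-SY(\sigma))^{\mu_I(\sigma)}$ as Lemma~\ref{lem:rur-res1}, and then reads off the four formulas by differentiating this product in $T$ and in $S$ (Lemma~\ref{lem:eq3-4}) and specializing at $S=a$, exactly as you sketch. Your caveat that the sign convention must be tracked is well placed, since $\partial_S\bigl(T-X(\sigma)-SY(\sigma)\bigr)=-Y(\sigma)$, and your direct telescoping identity for $f_{I,a,X}$ is only cosmetically different from the paper's argument, which instead proves divisibility of $f_{I,a,X}+af_{I,a,Y}-Tf_{I,a,1}$ by $\overline{f_{I,a}}$ and identifies the quotient by comparing leading coefficients.
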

We postpone the proof of Proposition~\ref{prop:rur-res2} to
Section~\ref{sec:proof:lem:rur-res2} and first analyze the complexity 
of the computation of the expressions therein.
Note that a separating form $X+aY$ as in Proposition~\ref{prop:rur-res2} can be computed in
$\sOB(d^8+d^7\tau)$ \cite{bouzidi2013SepElt}.

\begin{proposition}
\label{prop:complexity-rur-cand}
Computing the polynomials in Proposition~\ref{prop:rur-res2} 
can be done with $\sOB(d^7+d^6(\tau+\tau_a))$ bit operations, where $\tau_a$ is the bitsize of $a$.
\end{proposition}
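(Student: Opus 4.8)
The plan is to bound the cost of each formula in Proposition~\ref{prop:rur-res2} separately, using the ingredients already set up in the excerpt, and to observe that the dominant term comes from the computation of $R(T,S)$ itself.

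First I would recall from Lemma~\ref{lem:complexity:shear} that the resultant $R(T,S)$ can be computed in $\sOB(d^7+d^6\tau)$ bit operations; its degree in each variable is at most $2d^2$ and its bitsize is $\sO(d^2+d\tau)$. This is the most expensive single step, so if every subsequent operation costs $\sOB(d^7+d^6(\tau+\tau_a))$ or less, we are done. From $R(T,S)$ one obtains $\frac{\partial R}{\partial S}(T,S)$ by formal differentiation, which costs nothing asymptotically and preserves the degree and bitsize bounds. Then I would specialize $S=a$: evaluating the bivariate polynomials $R$ and $\partial R/\partial S$ at $S=a$ amounts to evaluating $O(d^2)$ univariate polynomials in $S$ (the coefficients with respect to $T$), each of degree $\le 2d^2$ and bitsize $\sO(d^2+d\tau)$, at the rational $a$ of bitsize $\tau_a$. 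By Lemma~\ref{lem:comp:evaluation}, each such evaluation costs $\sOB(d^2(d^2+d\tau+\tau_a))$ and yields a value of bitsize $\sO(d^2+d\tau+d^2\tau_a)$; multiplying by $O(d^2)$ gives $\sOB(d^6+d^5\tau+d^4\tau_a)$ overall, and since $\tau_a=\sO(1)$ in our setting (and in any case the statement allows a $\tau_a$ term), this is absorbed. The scalars $L_R(a)$, $\frac{\partial L_R}{\partial S}(a)$ are obtained the same way, from the leading coefficient of $R$ with respect to $T$, so their computation is subsumed. After specialization, $f_{I,a}(T)=R(T,a)/L_R(a)$, $\frac{\partial R}{\partial S}(T,a)$ and the scalar combinations are univariate polynomials in $T$ of degree $\le 2d^2$ with rational coefficients of bitsize $\sO(d^2+d\tau)$ (after clearing denominators, or keeping them as rationals of that bitsize).

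Next I would handle the gcd-based quantities. Computing $\gcd(f_{I,a},f'_{I,a})$ and the gcd-free part $f_{I,a,1}=f'_{I,a}/\gcd(f_{I,a},f'_{I,a})$, as well as the squarefree part $\overline{f_{I,a}}$, is done by Lemma~\ref{complexity:gcd} applied to polynomials of degree $D=\sO(d^2)$ and bitsize $\sO(d^2+d\tau)$: this costs $\sOB(D^2\cdot(d^2+d\tau))=\sOB(d^6(d^2+d\tau))=\sOB(d^8+d^7\tau)$ --- wait, that would be too large, so here I need to be careful: the correct reading is that the gcd of two polynomials of degree $D$ and bitsize $\sigma$ costs $\sOB(D^2\sigma)$, which with $D=\sO(d^2)$, $\sigma=\sO(d^2+d\tau)$ gives $\sOB(d^4(d^2+d\tau))=\sOB(d^6+d^5\tau)$ --- good, that is within budget. (The point is $D^2 = \sO(d^4)$, not $\sO(d^6)$.) Then forming $f_{I,a,Y}$ requires one subtraction of polynomials of degree $\sO(d^2)$ and bitsize $\sO(d^2+d\tau)$ followed by one exact division by $\gcd(f_{I,a},f'_{I,a})$ (again degree $\sO(d^2)$, bitsize $\sO(d^2+d\tau)$, cost $\sOB(d^4(d^2+d\tau))$ by the quotient bound behind Lemma~\ref{complexity:gcd}, or by fast division), and scaling by the rational $1/L_R(a)$. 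Finally $f_{I,a,X}=T f_{I,a,1}-d_T(f_{I,a})\overline{f_{I,a}}-a f_{I,a,Y}$ is a constant number of additions and scalar multiplications of such polynomials, which is negligible. Summing all contributions, the total is $\sOB(d^7+d^6(\tau+\tau_a))$, dominated by the resultant computation.

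The main obstacle I anticipate is the bookkeeping of bitsizes through the specialization step: naively, substituting $S=a$ into a polynomial of bitsize $\sO(d^2+d\tau)$ and degree $\sO(d^2)$ in $S$ inflates the bitsize by a factor $d^2\tau_a$, which is why the lemma statement carries a $\tau_a$ term and why one must invoke Lemma~\ref{lem:comp:evaluation} rather than a crude Horner estimate. One must check that with $\tau_a=\sO(1)$ the resulting coefficient bitsizes stay $\sO(d^2+d\tau)$ so that the downstream gcd and division steps remain within the $\sOB(d^7+d^6\tau)$ bound; this is straightforward but needs to be stated. A secondary, purely technical point is to confirm that all divisions appearing in the formulas ($R(T,a)/L_R(a)$, the division by $\gcd(f_{I,a},f'_{I,a})$ in $f_{I,a,Y}$) are exact in $\Q[T]$ --- this is guaranteed by Proposition~\ref{prop:rur-res2} itself, which asserts these expressions equal honest polynomials, so no extra work is needed beyond citing it.
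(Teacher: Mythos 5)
Your approach matches the paper's proof almost step for step: compute $R(T,S)$ via Lemma~\ref{lem:complexity:shear}, specialize at $S=a$ via Lemma~\ref{lem:comp:evaluation}, and handle the gcd, the exact divisions and the squarefree part via Lemma~\ref{complexity:gcd}. The one piece of bookkeeping you treat loosely is the $\tau_a$ contribution to coefficient bitsizes: you correctly derive $\sO(d^2+d\tau+d^2\tau_a)$ at the specialization step but then quietly revert to $\sO(d^2+d\tau)$ going into the gcd, whereas the paper carries the full $\sO(d^2+d\tau+d^2\tau_a)$ through (and makes explicit that one must multiply $f_{I,a}$ and $f'_{I,a}$ by $L_R(a)$ and by the denominator of $a$ raised to $d_S(R)$ to land in $\Z[T]$ before invoking Lemma~\ref{complexity:gcd}), obtaining $\sOB\!\bigl(d^4(d^2+d\tau+d^2\tau_a)\bigr)$ for that phase, whose $d^6\tau_a$ term is exactly the one appearing in the stated bound. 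Your conclusion still holds since $d^6\tau_a$ is within budget, but note that this term comes from the specialization and gcd phases, not from the resultant computation as your closing sentence suggests.
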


\begin{proof}[Proof of Proposition~\ref{prop:complexity-rur-cand}]
  According to Lemma~\ref{lem:complexity:shear}, the resultant $R(T,S)$ of
  ${P}(T-SY,Y)$ and ${Q}(T-SY,Y)$ with respect to $Y$ has degree $O(d^2)$ in $T$
  and $S$, has bitsize in $\sO(d(d+\tau))$, and that it can be computed in
  $\sO_B(d^6(d+\tau))$ bit operations. We can now apply the formulas of
  Proposition~\ref{prop:rur-res2} for computing the polynomials of the RUR.

  Specializing $R(T,S)$ at $S=a$ can be done by evaluating $O(d^2)$ polynomials
  in $S$, each of degree in $O(d^2)$ and bitsize in $\sO(d^2+d\tau)$. By
  Lemma~\ref{lem:comp:evaluation}, each of the $O(d^2)$ evaluations can be done
  in $\sOB(d^2(d^2+d\tau+\tau_a))$ bit operations and each result has bitsize in
  $\sO(d^2+d\tau+d^2\tau_a)$. Hence, $R(T,a)$ and $f_{I,a}(T)$ have degree in
  $O(d^2)$, bitsize in $\sO(d^2+d\tau+d^2\tau_a)$, and they can be computed with
  $\sOB(d^4(d^2+d\tau+\tau_a))$ bit operations.

  The complexity of computing the numerators of $f_{I,a,1}(T)$ and
  $f_{I,a,Y}(T)$ is clearly dominated by the computation of $\frac{\partial
    R}{\partial S}(T,a)$.  Indeed, computing the derivative $\frac{\partial
    R}{\partial S}(T,S)$ can trivially be done in $O(d^4)$ arithmetic operations
  of complexity $\sOB(d^2+d\tau)$, that is in $\sOB(d^6+d^5\tau)$. Then, as for
  $R(T,a)$, $\frac{\partial R}{\partial S}(T,a)$ has degree in $O(d^2)$, bitsize
  in $\sO(d^2+d\tau+d^2\tau_a)$, and it can be computed within the same
  complexity as the computation of $R(T,a)$.

  On the other hand, since $f_{I,a}(T)$ and ${f'}_{\! I,a}(T)$ have degree in $O(d^2)$ and
    bitsize in $\sO(d^2+d\tau+d^2\tau_a)$, and $f_{I,a}(T)=\frac{R(T,a)}{L_R(a)}$, one can multiply
    these two polynomials by 
 $L_R(a)$ which is of bitsize $\sO(d^2+d\tau+d^2\tau_a)$ and by the denominator of the rational $a$ to the
    power of $d_S(R(T,S))$ which is an integer of bitsize in $O(d^2\tau_a)$, to obtain polynomials
    with coefficients in $\Z$. Hence, according to Lemma~\ref{complexity:gcd}, the gcd of $f_{I,a}(T)$ and ${f'}_{\! I,a}(T)$  can be
    computed in $\sOB(d^4(d^2+d\tau+d^2\tau_a))$ bit operations and it has bitsize in $\sO(d^2+d\tau+d^2\tau_a)$.

    Now, the bit complexity of the division of
  the numerators by the gcd is of the order of the square of their maximum
  degree times their maximum bitsize \cite[Theorem 9.6 and subsequent discussion]{vzGGer}, 
 that is, the divisions (and hence the
  computation of $f_{I,a,1}(T)$ and $f_{I,a,Y}(T)$) can be done in
  $\sOB(d^4(d^2+d\tau+d^2\tau_a))$ bit operations.

  Finally, computing $f_{I,a,X}(T)$ can be done within the same complexity as
  for $f_{I,a,1}(T)$ and $f_{I,a,Y}(T)$ since it is dominated by the computation
  of the squarefree part of $f_{I,a}(T)$, which can be computed similarly and with the same complexity as above,
  by Lemma~\ref{complexity:gcd}.

  The overall complexity is thus that of computing the resultant which is in
  $\sO_B(d^6(d+\tau))$ plus that of computing the above gcd and Euclidean
  division which is in $\sOB(d^4(d^2+d\tau+d^2\tau_a))$. This gives a total of
  $\sOB(d^7+ d^6(\tau+\tau_a))$.
\end{proof}

\subsubsection{Proof of Proposition~\ref{prop:rur-res2}}
\label{sec:proof:lem:rur-res2}

Proposition~\ref{prop:rur-res2} expresses the polynomials $f_{I,a}$ and
$f_{I,a,v}$ of a RUR in terms of specializations (by $S=a$) of the resultant
$R(T,S)$ and its partial derivatives. Since the specializations are done after
considering the derivatives of $R$, we study the relations between these
entities before specializing $S$ by $a$.

For that purpose, we first introduce the following polynomials which are exactly
the polynomials $f_{I,a}$ and $f_{I,a,v}$ of \eqref{eq:defRUR} where the
parameter $a$ is replaced by the variable $S$. These polynomials can be seen as
the RUR polynomials of the ideal $I$ with respect to a ``generic'' linear form
$X+SY$.

\begin{equation}\label{eq:defRUR-gen}
\begin{split}
&\displaystyle{ f_{I} (T,S)} = \prod_{\sigma \in V (I)} (T - X (\sigma) - SY
(\sigma))^{\mu_I (\sigma)} \\
& \displaystyle   { f_{I, v} (T,S)} = \sum_{\sigma \in V (I)} \mu_I (\sigma) v (\sigma)
\prod_{\varsigma \in V (I), \varsigma \neq \sigma} (T - X (\varsigma) - SY (\varsigma)), 
\quad v\in \{1,X,Y \}.
\end{split}
\end{equation}
These polynomials are obviously in $\C[T,S]$, but they are actually in $\Q[T,S]$
because, when $S$ is specialized at any rational value $a$, the specialized
polynomials are those of $RUR_{I,a}$ which are known to be in
$\Q[T]$~(see e.g. \cite{Rou99}).

Before proving Proposition~\ref{prop:rur-res2}, we express the derivatives of
$f_I(T,S)$ in terms of $f_{I,v}(T,S)$, in Lemma~\ref{lem:eq3-4}, and show that
$f_I(T,S)$ is the monic form of the resultant $R(T,S)$, seen as a polynomial in
$T$, in Lemma~\ref{lem:rur-res1}.

\begin{lemma}\label{lem:eq3-4}
Let $g_I(T,S)=\prod_{\sigma \in V (I)} (T - X (\sigma) - SY (\sigma))^{\mu_I
  (\sigma)-1}$.  We have 
 \begin{align}
    \frac{\partial f_I}{\partial T} (T,S) & = g_I(T,S) f_{I,1}(T,S),\label{eq:3}\\
    \frac{\partial f_I}{\partial S} (T,S) &  =g_I(T,S) f_{I,Y}(T,S).\label{eq:4}
 \end{align} 
\end{lemma}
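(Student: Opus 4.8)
The plan is to differentiate the product formula for $f_I(T,S)$ directly, using the Leibniz rule, and recognize the resulting sum as the defining expression of $f_{I,1}$ or $f_{I,Y}$ times the common factor $g_I$. First I would write $f_I(T,S) = \prod_{\sigma \in V(I)} (T - X(\sigma) - SY(\sigma))^{\mu_I(\sigma)}$ and compute $\frac{\partial f_I}{\partial T}$. By the rule for differentiating a product of powers, $\frac{\partial f_I}{\partial T}(T,S) = \sum_{\sigma \in V(I)} \mu_I(\sigma)\,(T - X(\sigma) - SY(\sigma))^{\mu_I(\sigma)-1} \prod_{\varsigma \neq \sigma}(T - X(\varsigma) - SY(\varsigma))^{\mu_I(\varsigma)}$, since $\frac{\partial}{\partial T}(T - X(\sigma) - SY(\sigma)) = 1$. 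The key algebraic manipulation is to factor out $g_I(T,S) = \prod_{\sigma}(T - X(\sigma) - SY(\sigma))^{\mu_I(\sigma)-1}$ from each term: in the $\sigma$-th summand, the factor $(T-X(\sigma)-SY(\sigma))^{\mu_I(\sigma)-1}$ is present explicitly, and for each $\varsigma \neq \sigma$ the factor $(T-X(\varsigma)-SY(\varsigma))^{\mu_I(\varsigma)}$ contains $(T-X(\varsigma)-SY(\varsigma))^{\mu_I(\varsigma)-1}$ as a sub-factor. After pulling out $g_I(T,S)$, the $\sigma$-th summand becomes $\mu_I(\sigma) \prod_{\varsigma \neq \sigma}(T - X(\varsigma) - SY(\varsigma))$, so the sum is exactly $f_{I,1}(T,S)$ as defined in \eqref{eq:defRUR-gen} with $v = 1$. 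This proves \eqref{eq:3}.

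For \eqref{eq:4}, the computation is identical except that $\frac{\partial}{\partial S}(T - X(\sigma) - SY(\sigma)) = -Y(\sigma)$ replaces the derivative $1$, so each summand acquires a factor $-Y(\sigma)$... wait, but $f_{I,Y}$ has $+v(\sigma) = +Y(\sigma)$ in \eqref{eq:defRUR-gen}. I need to double-check the sign convention here: the change of variables is $X = T - SY$, so $T = X + SY$, and $T - X(\sigma) - SY(\sigma)$ is the correct linear factor. Differentiating with respect to $S$ gives $-Y(\sigma)$, which would produce $-f_{I,Y}$, not $f_{I,Y}$. So either there is a sign that I am missing, or the convention in \eqref{eq:defRUR-gen} absorbs it; I would resolve this by carefully tracking whether $f_{I,Y}$ is defined with $Y(\varsigma)$ or $-Y(\varsigma)$, or whether \eqref{eq:4} should read with a minus sign — this minor sign bookkeeping is really the only subtle point in an otherwise routine differentiation.

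The main (and essentially only) obstacle is thus not conceptual but notational: making the factoring-out of $g_I$ rigorous term by term, and getting the sign in \eqref{eq:4} consistent with whatever sign appears in the definition \eqref{eq:defRUR-gen} (and later in Proposition \ref{prop:rur-res2}, where $f_{I,a,Y}$ is built from $\frac{\partial R}{\partial S}$). Since the statement claims \eqref{eq:4} with a plus sign and $f_{I,Y}$ as in \eqref{eq:defRUR-gen}, I would present the differentiation with $\partial_S(T-X(\sigma)-SY(\sigma)) = -Y(\sigma)$ and then reconcile: most likely the intended reading is that the defining sum for $f_{I,Y}$ in \eqref{eq:defRUR-gen} already carries the sign so that the identity holds as stated, or equivalently one checks that both sides agree after the substitution $v(\sigma) \mapsto -v(\sigma)$ is a harmless relabeling consistent with how $f_{I,a,Y}$ is used downstream. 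Beyond this, no further machinery is needed — the lemma is a one-line Leibniz-rule computation.
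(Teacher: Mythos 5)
Your approach --- direct Leibniz-rule differentiation of the product $f_I$ and then factoring $g_I$ out of every summand --- is exactly what the paper does, and for \eqref{eq:3} the argument you give matches the paper's proof.

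For \eqref{eq:4} you correctly flagged that $\frac{\partial}{\partial S}\bigl(T - X(\sigma) - SY(\sigma)\bigr) = -Y(\sigma)$, and you should not hedge here: carrying the minus through the computation yields $\frac{\partial f_I}{\partial S}(T,S) = -\,g_I(T,S)\,f_{I,Y}(T,S)$, so \eqref{eq:4} as printed is off by a sign. The paper's own proof contains the same slip, asserting that ``the derivative of $T - X(\sigma) - SY(\sigma)$ is now $Y(\sigma)$ instead of $1$'' and dropping the minus. The sign cannot be absorbed into the definition \eqref{eq:defRUR-gen} of $f_{I,Y}$, because that definition is forced: the RUR mapping $\gamma \mapsto \bigl(\frac{f_{I,a,X}}{f_{I,a,1}}(\gamma),\frac{f_{I,a,Y}}{f_{I,a,1}}(\gamma)\bigr)$ must return $(X(\sigma),Y(\sigma))$ at $\gamma = X(\sigma)+aY(\sigma)$, which requires the coefficient $+Y(\sigma)$ in $f_{I,Y}$. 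A one-point sanity check confirms the discrepancy: for $I=\langle X-1,\,Y-2\rangle$ one has $f_I=T-1-2S$, $g_I=1$, $f_{I,Y}=2$, yet $\partial f_I/\partial S=-2\neq g_I f_{I,Y}$. The corrected identity propagates a sign into the expression for $f_{I,a,Y}$ in Proposition~\ref{prop:rur-res2}, which should be the negative of the displayed quotient; this is harmless for the degree, bitsize, and complexity bounds in the paper, but it is a real error that you were right to notice and should resolve explicitly rather than leave as an open question.
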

\begin{proof}
  It is straightforward that the derivative of $f_I$ with respect to $T$ is
  $\sum_{\sigma \in V (I)} \mu_I (\sigma) (T - X (\sigma) - SY (\sigma))^{\mu_I
    (\sigma)-1}\prod_{\varsigma \in V (I), \varsigma \neq \sigma}(T - X
  (\varsigma) - SY (\sigma))^{\mu_I (\varsigma)}$, which can be
rewritten as the product of 
$ \prod_{\sigma \in V (I)} (T - X (\sigma) - SY (\sigma))^{\mu_I
  (\sigma)-1}$ and $ \sum_{\sigma \in V (I)} \mu_I (\sigma)
\prod_{\varsigma \in V (I), \varsigma \neq \sigma} (T - X (\varsigma) - SY
(\varsigma)) $
which is exactly the product of $g_I(T,S)$ and $f_{I,1}(T,S)$.

The expression of the derivative of $f_I$ with respect to $S$ is similar to that
with respect to $T$ except that the derivative of $T - X (\sigma) - SY (\sigma)$
is now $Y (\sigma)$ instead of $1$. It follows that $\frac{\partial
  f_I}{\partial S}$
is the product of $ \prod_{\sigma \in V (I)}
(T - X (\sigma) - SY (\sigma))^{\mu_I (\sigma)-1}$ and $ \sum_{\sigma \in V (I)} \mu_I (\sigma) Y(\sigma)
\prod_{\varsigma \in V (I), \varsigma \neq \sigma} (T - X (\varsigma) - SY (\varsigma)) $ which is
 the product of $g_I(T,S)$ and $f_{I,Y}(T,S)$.
\end{proof}

For the proof of Lemma~\ref{lem:rur-res1}, we will need  the following lemma which states that when two polynomials have no common solution at infinity in some
direction, the roots of their resultant with respect to this direction are the
projections of the solutions of the system with cumulated multiplicities.
\begin{lemma}[{\cite[Prop. 2 and 5]{Buse}}]
\label{lem:res-roots}
Let $P, Q\in\F[X,Y]$ defining a zero-dimensional
 ideal $I=\langle P, Q \rangle$, 
such that their leading terms
  $Lc_Y(P)$ and $Lc_Y(Q)$ do not have common roots.  Then
  $Res_Y(P,Q)=c\prod_{\sigma \in V (I)} (X - X (\sigma))^{\mu_I (\sigma)}$ where
  $c$ is nonzero in $\F$.
\end{lemma}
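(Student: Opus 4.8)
This is a classical fact; the plan is to prove it by a localization on the $X$-line. Write $R(X):=Res_Y(P,Q)\in\F[X]$, $p:=d_Y(P)$, $q:=d_Y(Q)$, $a(X):=Lc_Y(P)$, $b(X):=Lc_Y(Q)$, so the hypothesis reads $\gcd(a,b)=1$ in $\F[X]$. I may assume $\F$ is algebraically closed, since neither side of the identity nor any multiplicity changes under a field extension. First I would observe that $R\not\equiv 0$: a relation $R\equiv 0$ would force $P$ and $Q$ to share a factor of positive degree in $Y$ (a common factor in $\F(X)[Y]$ may be taken in $\F[X,Y]$ by Gauss's lemma), whose zero set would be an infinite curve inside $V(I)$, contradicting that $I$ is zero-dimensional. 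The goal is then to show that, for every $x_0\in\F$, the order $\mathrm{ord}_{x_0}(R)$ equals $\sum_{\sigma\in V(I),\,X(\sigma)=x_0}\mu_I(\sigma)$; granting this, both $R$ and $\prod_{\sigma\in V(I)}(X-X(\sigma))^{\mu_I(\sigma)}$ then have the same order at every point of $\F$, so they differ by a nonzero constant, namely $c=Lc(R)$ (the product being monic).

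The main tool is the norm form of the resultant: factoring $P=a(X)\widetilde P$ with $\widetilde P$ monic in $Y$, one has $R=a(X)^{q}\,\det\big(\text{multiplication by }Q\text{ on }\F(X)[Y]/(\widetilde P)\big)$, together with the symmetric identity $R=\pm\,b(X)^{p}\,\det\big(\text{multiplication by }P\text{ on }\F(X)[Y]/(\widetilde Q)\big)$. Fix $x_0$. By hypothesis at least one of $a(x_0),b(x_0)$ is nonzero; assume $a(x_0)\neq 0$, the other case being handled by the symmetric identity. Then $a$ is a unit in the discrete valuation ring $\mathcal O:=\F[X]_{(X-x_0)}$, so $\widetilde P\in\mathcal O[Y]$ and $B:=\mathcal O[Y]/(\widetilde P)$ is a free $\mathcal O$-module of rank $p$; moreover multiplication by $Q$ is injective on $B$, since after tensoring with $\F(X)$ it is multiplication by $Q$ on $\F(X)[Y]/(P)$, injective because $\gcd_{\F(X)[Y]}(P,Q)=1$. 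Over a DVR with residue field $\F$, the valuation of the determinant of an injective endomorphism of a free module equals the $\F$-dimension of its cokernel; hence $\mathrm{ord}_{x_0}(R)=q\cdot\mathrm{ord}_{x_0}(a)+\dim_\F(B/QB)=\dim_\F\big(\F[X]_{(X-x_0)}[Y]/(P,Q)\big)$, using $\mathrm{ord}_{x_0}(a)=0$ and $B/QB=\mathcal O[Y]/(P,Q)$. This last ring is the localization of the Artinian ring $\F[X,Y]/(P,Q)$ away from $X-x_0$; it is therefore the product of its local factors, one for each $\sigma=(x_0,y_0)\in V(I)$, and the factor at $\sigma$ has $\F$-dimension $\mu_I(\sigma)$ by the definition of the intersection multiplicity. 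This gives $\mathrm{ord}_{x_0}(R)=\sum_{\sigma\in V(I),\,X(\sigma)=x_0}\mu_I(\sigma)$, as desired, and summing over $x_0$ finishes the argument.

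The one delicate point — and the reason the coprimality of the leading coefficients cannot be dropped — is exactly the case split above: when $a(x_0)=0$ one is forced to run the computation through $Q$, which is legitimate only because $b(x_0)\neq 0$; if $x_0$ were a common root of $a$ and $b$ (a common zero ``at infinity'' in the $Y$-direction over $x_0$), neither $\widetilde P$ nor $\widetilde Q$ would have coefficients in $\mathcal O$, the resultant could drop degree, and the count would fail. The boundary cases $p=0$ or $q=0$ also deserve a quick separate check: there the relevant ``multiplication by $P$'' (or $Q$) is merely scaling by an element of $\F(X)$ and the identity is verified directly. All of this is made precise in \cite[Prop.~2 and 5]{Buse}, which I would cite rather than reprove in full.
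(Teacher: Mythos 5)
The paper does not prove this lemma; it cites it directly from Bus\'e (\cite[Prop.~2 and 5]{Buse}), so there is no in-paper argument to compare against. Your proposal is a correct and fairly standard proof: passing to the algebraic closure, writing $R$ via the norm/Poisson form, localizing at each $x_0\in\F$, using that $\mathrm{ord}_{x_0}(\det)=\mathrm{length}(\mathrm{coker})$ over the DVR $\F[X]_{(X-x_0)}$, and identifying the resulting Artinian ring with the product of local factors whose lengths are the $\mu_I(\sigma)$. The case split on $a(x_0)\neq 0$ versus $b(x_0)\neq 0$ is exactly where the ``no common roots of the leading coefficients'' hypothesis is used, and you correctly flag that without it one could lose multiplicity to a solution at infinity, so the count would fail. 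Two small points worth making explicit if you were to write this out in full: (i) injectivity of $m_Q$ on $B$ needs $P$ and $Q$ coprime in $\F[X,Y]$, which follows from zero-dimensionality (a nonconstant common factor would give an infinite curve in $V(I)$ over $\bar\F$); and (ii) the constant $c$ lands in $\F$ (not merely $\bar\F$) because $R\in\F[X]$ and the monic product is Galois-stable, so $c=\mathrm{Lc}(R)\in\F$ as you note.
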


The following lemma links the resultant of $P(T-SY,Y)$ and $Q(T-SY,Y)$ with respect to $Y$ and the polynomial $f_I(T,S)$ as defined above.

\begin{lemma}\label{lem:rur-res1}
$R(T,S)= L_R(S) f_{I}(T,S)$ and, for any $a\in \Q$, $L_{P}(a)L_{Q}(a)\neq 0$ implies that $L_R(a)\neq~0$.
\end{lemma}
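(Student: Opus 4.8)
The plan is to prove the two assertions of Lemma~\ref{lem:rur-res1} essentially by matching roots with multiplicities, using Lemma~\ref{lem:res-roots} applied to the sheared system. First I would fix notation: write $\widetilde P(T,Y)=P(T-SY,Y)$ and $\widetilde Q(T,Y)=Q(T-SY,Y)$, viewed as polynomials in $Y$ over the field $\F=\overline{\Q(S)}$ (or over $\Q[S]$, localizing as needed). The key preliminary observation is that the change of variables $(X,Y)\mapsto(T-SY,Y)$ is a bijection on the affine plane (with inverse $(T,Y)\mapsto(T-SY,Y)\mapsto$ wait, rather $X=T-SY$), so it induces a bijection between $V(\langle P,Q\rangle)$ and $V(\langle\widetilde P,\widetilde Q\rangle)$ sending $\sigma=(\alpha,\beta)$ to $(\alpha+S\beta,\beta)$, and — crucially — this bijection preserves intersection multiplicities, since multiplicity of an isolated solution of a system is invariant under an invertible affine (here, $\F$-linear in the coordinates) change of variables. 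Hence $\mu_{\langle\widetilde P,\widetilde Q\rangle}(\alpha+S\beta,\beta)=\mu_I(\sigma)$.

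Next I would check the hypothesis of Lemma~\ref{lem:res-roots} for $\widetilde P,\widetilde Q$ with respect to $Y$: their leading coefficients in $Y$ are exactly $L_P(S)$ and $L_Q(S)$ as recalled in~\eqref{eq1}, and these are elements of $\Q[S]$ not depending on $T$. Over the field $\F=\overline{\Q(S)}$ these are nonzero constants (assuming $P,Q$ genuinely involve $Y$, which holds since $d\geq 2$ and they are coprime; otherwise one argues directly), so in particular they have no common root in $\F$. Therefore Lemma~\ref{lem:res-roots} gives $Res_Y(\widetilde P,\widetilde Q)=c\prod_{\rho\in V(\langle\widetilde P,\widetilde Q\rangle)}(T-T(\rho))^{\mu(\rho)}$ for some nonzero $c\in\F$. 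Translating via the bijection above, the roots $T(\rho)$ are exactly the values $X(\sigma)+SY(\sigma)$ for $\sigma\in V(I)$, with multiplicities $\mu_I(\sigma)$, so $R(T,S)=Res_Y(\widetilde P,\widetilde Q)=c\prod_{\sigma\in V(I)}(T-X(\sigma)-SY(\sigma))^{\mu_I(\sigma)}=c\,f_I(T,S)$. Comparing leading coefficients in $T$ on both sides (note $f_I$ is monic in $T$ of degree $\sum_\sigma\mu_I(\sigma)$) forces $c=L_R(S)=Lc_T(R(T,S))$, which is the first claim.

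For the second claim, suppose $a\in\Q$ with $L_P(a)L_Q(a)\neq 0$. I would argue that specialization $S\mapsto a$ commutes with the resultant in this situation: since $L_P(a)\neq0$ (resp.\ $L_Q(a)\neq0$), the polynomial $\widetilde P(T,Y)$ (resp.\ $\widetilde Q$) does not drop degree in $Y$ upon setting $S=a$, and the standard specialization property of resultants then gives $R(T,a)=Res_Y(P(T-aY,Y),Q(T-aY,Y))$ up to the appropriate power of the non-vanishing leading coefficients — in any case $R(T,a)$ is not identically zero, since $P(T-aY,Y)$ and $Q(T-aY,Y)$ remain coprime in $\Q[T,Y]$ (the shear $X=T-aY$ is invertible over $\Q$, and $P,Q$ coprime implies their images coprime). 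Consequently $Lc_T(R(T,a))=L_R(a)\neq0$.

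The main obstacle I anticipate is the multiplicity-invariance step: one must be careful that the local intersection multiplicity $\mu_I(\sigma)$ — defined via the length of the local ring $(\Q[X,Y]/\langle P,Q\rangle)_{\mathfrak m_\sigma}$, or equivalently the dimension of the corresponding component in the primary decomposition — really is preserved under the $\F$-linear substitution $X=T-SY$, now over the larger ground field $\F$. This is true because such a substitution is a ring isomorphism $\F[X,Y]\to\F[T,Y]$ carrying $\langle P,Q\rangle$ to $\langle\widetilde P,\widetilde Q\rangle$ and $\mathfrak m_\sigma$ to $\mathfrak m_{\rho}$, hence induces an isomorphism of the localized quotient rings; but spelling this out cleanly (and handling the passage from the ground field $\Q$ to $\F=\overline{\Q(S)}$, i.e.\ that the $\Q$-points and their multiplicities behave well after this base change, or alternatively working generically in $S$ from the start) requires some care. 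A secondary, more routine point is justifying that $L_P(S),L_Q(S)$ are nonzero polynomials — equivalently that $Y$ genuinely appears in $P$ and $Q$ — which follows from $P,Q$ being coprime of total degree $d\geq2$, with the degenerate cases (one of them independent of $Y$) handled separately or excluded as non-generic.
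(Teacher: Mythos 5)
Your proof is correct, but it follows a genuinely different route from the paper's. The paper never works over the field $\overline{\Q(S)}$: instead, it specializes $S=a$ at (infinitely many) rational $a$ with $L_P(a)L_Q(a)\neq 0$, applies Lemma~\ref{lem:res-roots} to the specialized polynomials $P(T-aY,Y), Q(T-aY,Y)$ over $\Q$, matches multiplicities via Fulton (citing \cite[\S3.3 Prop.~3 and Thm.~3]{fulton2008algebraic}) to get $R(T,a)=c(a)f_I(T,a)$ for each such $a$, and then argues coefficient-by-coefficient (each coefficient in $T$ is a polynomial in $S$ that agrees with the corresponding coefficient of $L_R(S)f_I(T,S)$ at infinitely many values, hence identically) to lift the identity to $R(T,S)=L_R(S)f_I(T,S)$. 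The second assertion then drops out for free since $f_I(T,a)$ is monic. Your approach instead applies Lemma~\ref{lem:res-roots} once, over the algebraically closed field $\F=\overline{\Q(S)}$, and derives the identity directly by comparing leading coefficients in $T$. This is more direct and avoids the "infinitely many specializations" step, but it shifts the burden onto two points you correctly flag as delicate: (a) that intersection multiplicities for the $\Q$-system are unchanged after base-changing to $\F$ and applying the $\F$-linear shear (you supply the right argument, a ring isomorphism of localizations, and the remark that base change to an extension of algebraically closed fields is harmless), and (b) that the conclusion of Lemma~\ref{lem:res-roots} is stated over a field $\F$, which your choice accommodates. The paper sidesteps both by never leaving $\Q$ and its algebraic closure, citing Fulton at each specialization, which is the more elementary path at the cost of the limit/identity-principle argument.

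One small imprecision in your second paragraph: the step ``consequently $Lc_T(R(T,a))=L_R(a)\neq 0$'' cannot be asserted directly from $R(T,a)\neq 0$ alone, since the leading coefficient in $T$ need not commute with specialization at $S=a$ (that is exactly the degree-drop phenomenon you are ruling out). The clean way to close the loop, which is implicit in your phrasing, is to invoke the already-proved identity $R(T,S)=L_R(S)f_I(T,S)$: specializing gives $R(T,a)=L_R(a)\,f_I(T,a)$, and since $R(T,a)\neq 0$ (coprime inputs) and $f_I(T,a)$ is monic hence nonzero, $L_R(a)\neq 0$ follows. This is also what the paper does. With that word of caution, your argument is sound.
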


\begin{proof}
  The proof is organized as follows. We first prove that for any rational $a$
  such that $L_{P}(a)L_{Q}(a)$ does not vanish, $R(T,a)= c(a) f_I(T,a)$ where
  $c(a)\in\Q$ is a nonzero constant depending on $a$. This is true for
  infinitely many values of $a$ and, since $R(T,S)$ and $f_I(T,S)$ are
  polynomials, we can deduce that $R(T,S)= L_R(S) f_{I}(T,S)$.  This will also
  implies the second statement of the lemma since, if $L_{P}(a)L_{Q}(a)\neq 0$,
  then $R(T,a)= c(a) f_I(T,a)=L_R(a) f_{I}(T,a)$ with $c(a)\neq 0$, thus
  $L_R(a)\neq 0$ (since $f_I(T,a)$ is monic).

  If $a$ is such that $L_{P}(a)L_{Q}(a)\neq 0$, the resultant $R(T,S)$ can be
  specialized at $S=a$, in the sense that $R(T,a)$ is equal to the resultant of
  $P(T-aY,Y)$ and $Q(T-aY,Y)$ with respect to $Y$\cite[Proposition 4.20]{BPR06}.

  We now apply Lemma~\ref{lem:res-roots} to these two polynomials $P(T-aY,Y)$
  and $Q(T-aY,Y)$.  These two polynomials satisfy the hypotheses of this lemma:
  first, their leading coefficients (in $Y$) do not depend on $T$, hence they
  have no common root in $\Q[T]$; second, the polynomials $P(T-aY,Y)$ and
  $Q(T-aY,Y)$ are coprime because $P(X,Y)$ and $Q(X,Y)$ are coprime by
  assumption and the change of variables $(X,Y)\mapsto (T=X+aY,Y)$ is a
  one-to-one mapping (and a common factor will remain a common factor after the
  change of variables). Hence Lemma~\ref{lem:res-roots} yields that $R(T,a)=
  c(a)\, \prod_{\sigma \in V (I_a)} (T - T(\sigma))^{\mu_{I_a} (\sigma)},$ where
  $c(a)\in\Q$ is a nonzero constant depending on $a$, and $I_a$ is the ideal
  generated by $ {P}(T-aY,Y)$ and ${Q}(T-aY,Y)$.

  We now observe that $\prod_{\sigma \in V (I_a)} (T - T(\sigma))^{\mu_{I_a}
    (\sigma)}$ is equal to $f_I(T,a)=\prod_{\sigma \in V (I)} (T -
  X(\sigma)-aY(\sigma))^{\mu_{I} (\sigma)}$ since any solution $(\alpha,\beta)$
  of $P(X,Y)$ is in one-to-one correspondence with the solution
  $(\alpha+a\beta,\beta)$ of $P(T-aY,Y)$ (and similarly for $Q$) and the multiplicities of the solutions also match, i.e. $\mu_{I}
    (\sigma)=\mu_{I_a} (\sigma_a)$ when $\sigma$ and $\sigma_a$ are in
    correspondence through the mapping \cite[\S 3.3 Proposition 3 and
    Theorem~3]{fulton2008algebraic}.   Hence,
  \begin{equation}\label{eq:Rfa}
    L_{P}(a)L_{Q}(a)\neq 0\quad\Rightarrow\quad R(T,a)=c(a) f_I(T,a) \quad \mbox{with}\quad c(a) \neq 0.
  \end{equation}

  Since there is finitely many values of $a$ such that
  $L_{P}(a)L_{Q}(a)L_R(a)=0$ and since $f_I(T,S)$ is monic with respect to $T$,
  \eqref{eq:Rfa} implies that $R(T,S)$ and $f_I(T,S)$ have the same degree in
  $T$, say $D$. We write these two polynomials as
  \begin{equation}\label{eq:RfI}
    R(T,S) = L_R(S)T^D+\sum_{i=0}^{D-1}r_i(S)T^i, \quad\quad\quad
    f_I(T,S)= T^D+\sum_{i=0}^{D-1}f_i(S)T^i.
  \end{equation}
  If $a$ is such that $L_{P}(a)L_{Q}(a)L_R(a)\neq 0$, \eqref{eq:Rfa} and
  \eqref{eq:RfI} imply that $L_R(a)=c(a)$ and $r_i(a)=L_R(a)f_i(a)$, for all
  $i$. These equalities hold for infinitely many values of $a$, and $r_i(S),
  L_R(S)$ and $f_i(S)$ are polynomials in $S$, thus $r_i(S)= L_R(S)f_i(S)$ and,
  by \eqref{eq:RfI}, $R(T,S) =L_R(S)f_I(T,S)$.
\end{proof}

We can now prove Proposition~\ref{prop:rur-res2}, which we recall, for clarity.

\medskip\noindent{\bf Proposition \ref{prop:rur-res2}.}\quad \emph{For any
  rational $a$ such that $ L_{P}(a)L_{Q}(a)\neq 0$ and such that $X+aY$ is a
  separating form of $I=\langle P,Q\rangle$, the RUR of $\langle P,Q\rangle$
  associated to $a$ is as follows:}
\[\begin{array}{ll}
\displaystyle f_{I,a}(T)=\frac{R(T,a)}{L_R(a)}
&\quad \displaystyle 
f_{I,a,1}(T)=\frac{f'_{I,a}(T)}{\gcd(f_{I,a}(T),{f'}_{\! I,a}(T))}\\
\displaystyle 
f_{I,a,Y}(T)= \frac{ \frac{\partial R}{\partial S}(T,a) -f_{I,a}(T)\frac{\partial L_R}{ \partial S}(a) }{L_R(a) \gcd(f_{I,a}(T),{f'}_{\! I,a}(T))}
&\quad \displaystyle 
f_{I,a,X}(T)=T f_{I,a,1}(T)-d_T(f_{I,a}) \overline{f_{I,a}(T)}
    -af_{I,a,Y}(T).
\end{array}\]

\begin{proof}
  Since we assume that $a$ is such that $L_{P}(a)L_{Q}(a)\neq 0$, Lemma~\ref{lem:rur-res1}
  immediately gives the first formula.

  Equation~\ref{eq:3} states that
  $f_{I,1}(T,S)g_I (T,S)= \frac{\partial f_I(T,S)}{\partial T}$,
where $g_I(T,S)=\prod_{\sigma \in V (I)} (T - X (\sigma) - SY (\sigma))^{\mu_I
  (\sigma)-1}$.  
  In addition, $g_I$ being monic in $T$, it never identically vanishes when $S$ is
  specialized, thus the preceding formula yields after  specialization:
  $f_{I,a,1}(T)=\frac{f'_{I,a}(T)}{g_I (T,a)}$.
  Furthermore, $g_I(T,a) = \gcd(f_{I,a}(T), f_{I,a}'(T))$. Indeed, $f_{I,a}(T)= \prod_{\sigma \in V
    (I)} (T - X (\sigma) - aY (\sigma))^{\mu_I (\sigma)}$ and all values $X (\sigma) + aY (\sigma)$,
  for $\sigma\in V (I)$, are pairwise distinct since $X+aY$ is a separating form, thus the gcd of
  $f_{I,a}(T)$ and its derivative is $\prod_{\sigma \in V (I)} (T - X (\sigma) - aY (\sigma))^{\mu_I
    (\sigma)-1}$, that is $g_I(T,a)$. This proves the formula for $f_{I,a,1}$.

  \smallskip Concerning the third equation, Lemma~\ref{lem:rur-res1} together
  with Equation~\ref{eq:4} implies:
  \[\begin{split}
    f_{I,Y}(T,S)&= \frac{\frac{\partial f_I(T,S)}{\partial S}}{g_I (T,S)}=
    \frac{\frac{\partial (R (T,S)/L_R(S))}{\partial S}}{g_I (T,S)}=
    \frac{ \frac{\partial R(T,S)}{ \partial S} L_R(S) -R(T,S) \frac{\partial L_R(S)}{ \partial S}
    }{L_R(S)^2 g_I (T,S)}\\
    &=\frac{ \frac{\partial R(T,S)}{ \partial S}  -f_I(T,S) \frac{\partial L_R(S)}{ \partial S}
    }{L_R(S) g_I (T,S)}.
  \end{split}\] As argued above, when specialized, $g_I (T,a)= \gcd(f_{I,a}(T),
  f_{I,a}'(T))$ and it does not identically vanish. By Lemma~\ref{lem:rur-res1},
  $L_R(a)$ does not vanish either, and the formula for $f_{I,a,Y}$
  follows. \smallskip
  
  It remains to compute $f_{I,a,X}$. 
%
  Definition~\ref{def:rur} implies that, for any root $\gamma$ of~$f_{I,a}$:
  $\gamma=\frac{f_{I,a,X}}{f_{I,a,1}}(\gamma) + a
  \frac{f_{I,a,Y}}{f_{I,a,1}}(\gamma)$, and thus $f_{I,a,X}(\gamma)
  +af_{I,a,Y}(\gamma)-\gamma f_{I,a,1}(\gamma)=0$. Replacing $\gamma$ by $T$, we
  have that the polynomial $f_{I,a,X}(T) +af_{I,a,Y}(T)-T f_{I,a,1}(T)$ vanishes
  at every root of $f_{I,a}$, thus the squarefree part of $f_{I,a}$ divides that
  polynomial. In other words, $f_{I,a,X}(T)=T f_{I,a,1}(T)-af_{I,a,Y}(T) \bmod
  \overline{f_{I,a}(T)}$. We now compute $T f_{I,a,1}(T)$ and $af_{I,a,Y}(T)$
  modulo $\overline{f_{I,a}(T)}$.

  Equation \eqref{eq:defRUR} implies that $f_{I,a,v}(T)$ is equal to $T^{\#
    V(I)-1}\sum_{\sigma \in V (I)} \mu_I (\sigma) v (\sigma) $ plus some terms
  of lower degree in $T$, and that the degree of $\overline{f_{I,a}(T)}$ is $\#
  V(I)$ (since $X+aY$ is a separating form). First, for $v=Y$, this implies that
  $d_T(f_{I,a,Y})< d_T(\overline{f_{I,a}})$, and thus that $af_{I,a,Y}(T)$ is
  already reduced modulo $\overline{f_{I,a}(T)}$.  Second, for $v=1$,
  $\sum_{\sigma \in V (I)} \mu_I (\sigma)$ is nonzero and equal to
  $d_T(f_{I,a})$. Thus, $Tf_{I,a,1}(T)$ and $\overline{f_{I,a}(T)}$ are both of
  degree $\# V(I)$, and their leading coefficients are $d_T(f_{I,a})$ and 1,
  respectively. Hence $T f_{I,a,1}(T)\bmod \overline{f_{I,a}(T)}= T
  f_{I,a,1}(T)-d_T(f_{I,a})\overline{f_{I,a}(T)}$. We thus obtain the last
  equation, that is, $f_{I,a,X}(T)=T
  f_{I,a,1}(T)-d_T(f_{I,a})\overline{f_{I,a}(T)}-af_{I,a,Y}(T)$.
\end{proof}

\subsection{RUR bitsize}\label{sec:size-rur}

We prove here, in Proposition~\ref{prop:rur-size}, a new bound on the bitsize of the coefficients of the polynomials
of a RUR. This bound is interesting in its own right and is instrumental for
our analysis of the complexity of computing isolating boxes of the solutions of
the input system, as well as for performing \emph{sign\_at} evaluations.
We state our bound for RUR-\emph{candidates}, that is even when the
linear form $X+aY$ is not separating. We only use this result 
when the form is separating, for proving Theorem~\ref{th:rur}, but the general result is interesting in a
probabilistic context when a RUR-candidate is computed with a random linear
form. We also prove our bound, not only for the RUR-candidates of an ideal defined by
\emph{two} polynomials $P$ and $Q$, but for any ideal of  
$\mathbb{Z} [X, Y]$ that contains $P$ and $Q$ (for instance the radical of
$\ideal{P,Q}$  or the ideals  obtained by decomposing $\ideal{P,Q}$ 
according to the multiplicity of the solutions).

\begin{proposition}
  \label{prop:rur-size} 
  Let $P, Q \in \mathbb{Z} [X, Y]$ be two coprime polynomials of total degree at
  most $d$ and maximum bitsize $\tau$, let $a$ be a rational of bitsize
  $\tau_a$, and let $J$ be any ideal of $\mathbb{Z} [X, Y]$  containing $P$ and $Q$. The polynomials of the RUR-candidate of $J$ associated to
  $a$ have degree at most $d^2$ and bitsize in $\sO (d^2\tau_a +d\tau )$. Moreover, there exists
    an integer of bitsize in $\sO(d^2\tau_a+d\tau)$ such that the product of
    this integer with any polynomial in the RUR-candidate yields a polynomial
    with integer coefficients.\footnote{In other words,  the mapping 
$\gamma \mapsto
\left(\frac{f_{J,a,X}}{f_{J,a,1}}(\gamma),\frac{f_{J,a,Y}}{f_{J,a,1}}(\gamma)\right)$ sending the solutions of $f_{J,a}(T)$ to those of $J$ (see Definition~\ref{def:rur}) can be
    defined with polynomials with integer coefficients of bitsize  $\sO(d^2\tau_a+d\tau)$. This will
    be needed in the proof of Lemma \ref{lem:f_F}.}
\end{proposition}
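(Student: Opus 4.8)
The plan is to control everything at the level of the \emph{bivariate} RUR-candidate polynomials of $J$, namely $f_{J}(T,S)$ and $f_{J,v}(T,S)$ for $v\in\{1,X,Y\}$, defined exactly as in \eqref{eq:defRUR-gen} with $I$ replaced by $J$, and to specialize $S=a$ only at the very end; this side-steps any difficulty in the (finitely many) directions where $L_R(a)=0$. First, the degree bounds: since $\langle P,Q\rangle\subseteq J$ we have $V(J)\subseteq V(\langle P,Q\rangle)$ and, for each $\sigma\in V(J)$, the local algebra of $J$ at $\sigma$ is a quotient of that of $\langle P,Q\rangle$, so $\mu_J(\sigma)\le\mu_{\langle P,Q\rangle}(\sigma)$; hence $\deg_T f_{J}=\sum_{\sigma\in V(J)}\mu_J(\sigma)\le\dim_{\C}\C[X,Y]/\langle P,Q\rangle\le d^2$ by B\'ezout, and $\deg_T f_{J,v}\le\#V(J)-1<d^2$. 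Comparing factorizations, the same inequalities show that $f_{J}(T,S)$ divides $f_{\langle P,Q\rangle}(T,S)$, and this divisibility holds over $\Q$ since $f_J(T,S)$ and $f_{\langle P,Q\rangle}(T,S)$ both lie in $\Q[T,S]$ (their coefficients are Galois-invariant symmetric functions of $V(J)$, resp.\ of $V(\langle P,Q\rangle)$). Likewise $g_J(T,S):=\prod_{\sigma\in V(J)}(T-X(\sigma)-SY(\sigma))^{\mu_J(\sigma)-1}$ and $\widetilde f_J(T,S):=\prod_{\sigma\in V(J)}(T-X(\sigma)-SY(\sigma))$ divide $f_J(T,S)$.

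The core of the argument is to bound the bitsize of these bivariate polynomials by $\sO(d^2+d\tau)$. By Lemma~\ref{lem:rur-res1}, $R(T,S)=L_R(S)\,f_{\langle P,Q\rangle}(T,S)$; dividing $L_R$ by the gcd of its coefficients, $f_{\langle P,Q\rangle}(T,S)$ is, up to multiplication by an integer of bitsize at most that of $L_R$, a factor of $R(T,S)$ in $\Z[T,S]$. Since $R$ has partial degrees $O(d^2)$ and bitsize $\sO(d^2+d\tau)$ (Lemma~\ref{lem:complexity:shear}), the classical Mignotte-type bound on the bitsize of the factors of a bivariate polynomial gives that $f_{\langle P,Q\rangle}(T,S)$, hence its factor $f_J(T,S)$, has bitsize $\sO(d^2+d\tau)$; the same then holds for $\partial_T f_J$, $\partial_S f_J$ and $\widetilde f_J$. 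Lemma~\ref{lem:eq3-4}, whose proof is purely formal and applies verbatim to $J$, gives $\partial_T f_J=g_J\,f_{J,1}$ and $\partial_S f_J=g_J\,f_{J,Y}$, so $f_{J,1}$ and $f_{J,Y}$ are factors of $\partial_T f_J$ and $\partial_S f_J$ and thus also have bitsize $\sO(d^2+d\tau)$; finally, substituting $X(\sigma)=(X(\sigma)+SY(\sigma))-SY(\sigma)$ and regrouping in the definition of $f_{J,X}$ yields $f_{J,X}(T,S)=T\,f_{J,1}(T,S)-(\deg_T f_J)\,\widetilde f_J(T,S)-S\,f_{J,Y}(T,S)$, so $f_{J,X}$ has bitsize $\sO(d^2+d\tau)$ as well. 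Bookkeeping with Gauss's lemma over $\Z[S][T]$ shows in addition that a single integer $e$ of bitsize $\sO(d^2+d\tau)$ — a divisor of the content of $L_R$ times bounded cofactors — clears the denominators of all four bivariate polynomials at once.

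It remains to specialize $S=a$. Write $a=p/q$ in lowest terms with $|p|,|q|\le 2^{\tau_a}$. Each of the four bivariate polynomials has $S$-degree $O(d^2)$ and bitsize $\sO(d^2+d\tau)$, so by Lemma~\ref{lem:comp:evaluation} its value at $S=a$ — which is exactly the corresponding polynomial $f_{J,a}$, $f_{J,a,1}$, $f_{J,a,Y}$ or $f_{J,a,X}$ of $RUR_{J,a}$ — has bitsize $\sO(d^2+d\tau+d^2\tau_a)=\sO(d^2\tau_a+d\tau)$ (using $\tau_a\ge1$), which is the claimed bound. For the last statement, multiplying a bivariate polynomial by $q^{cd^2}\,e$ for a suitable constant $c$ and then evaluating at $S=a$ yields an element of $\Z[T]$; the product of the four resulting integers is a single integer of bitsize $\sO(d^2\tau_a+d\tau)$ whose product with any polynomial of $RUR_{J,a}$ lies in $\Z[T]$.

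The step I expect to be the main obstacle is the denominator control: one has to check, via Gauss's lemma over $\Z[S][T]$, that successively passing to the factors $f_{\langle P,Q\rangle}$, then $f_J$, $g_J$, $\widetilde f_J$, $f_{J,1}$, $f_{J,Y}$, and finally to their specializations at $a$, never produces a denominator of bitsize larger than that of $\mathrm{cont}(L_R)$ times a bounded power of $\mathrm{denom}(a)$ — and, relatedly, that one may invoke the \emph{bivariate} analogue of Mignotte's factor bound in place of the univariate Lemma~\ref{complexity:gcd}. The inequality $\mu_J(\sigma)\le\mu_{\langle P,Q\rangle}(\sigma)$ is routine but should be stated with care; it uses that $J$ is zero-dimensional, which is automatic since $\langle P,Q\rangle\subseteq J$ (the degenerate case $J=\langle 1\rangle$, i.e.\ $V(J)=\emptyset$, being trivial).
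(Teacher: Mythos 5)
Your proof is correct and, at the structural level, follows the same route as the paper's: express everything in terms of the bivariate polynomials $f_J(T,S)$ and $f_{J,v}(T,S)$ of \eqref{eq:defRUR-gen}, exhibit $f_J$ as a factor of $R(T,S)$ via Lemma~\ref{lem:rur-res1}, invoke a bivariate Mignotte-type factor bound (the paper proves exactly this in Lemma~\ref{lem:mignotte}) together with Gauss's lemma and primitive-part bookkeeping (Lemma~\ref{lem:primpart}) to control denominators, use the derivative identities of Lemma~\ref{lem:eq3-4} to bound $f_{J,1}$ and $f_{J,Y}$, and specialize $S=a$ only at the very end. The paper organizes this as Lemmas~\ref{lem:rur-size1} and~\ref{lem:rur-size2}.

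The one place where you genuinely diverge is the treatment of $f_{J,X}$, and your version is cleaner. You observe the exact algebraic identity
$f_{J,X}(T,S)=T\,f_{J,1}(T,S)-(\deg_T f_J)\,\widetilde f_J(T,S)-S\,f_{J,Y}(T,S)$
with $\widetilde f_J(T,S)=\prod_{\sigma\in V(J)}(T-X(\sigma)-SY(\sigma))$, which holds formally (write $X(\sigma)=T-(T-X(\sigma)-SY(\sigma))-SY(\sigma)$ and regroup) and requires no separating assumption. Since $\widetilde f_J$ is a monic-in-$T$ factor of $f_J$, it inherits the $\sO(d^2+d\tau)$ bound, and the bound on $f_{J,X}$ follows by linearity. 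The paper instead swaps the roles of $X$ and $Y$, bounds the resulting analogue of $f_{J,Y}$, and then recovers $f_{J,X}$ by the reparametrization $S\mapsto 1/S$, $T\mapsto T/S$ followed by multiplication by a power of $S$; this is a monomial permutation and hence bitsize-preserving, but it is noticeably more awkward. Your identity is also a more transparent ancestor of the formula for $f_{I,a,X}$ in Proposition~\ref{prop:rur-res2}: specializing at a separating $a$ turns $\widetilde f_J(T,a)$ into $\overline{f_{J,a}(T)}$ and recovers that formula directly, without the degree-and-leading-coefficient reduction argument used there.

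You flag the denominator control as the main technical obstacle, and that is accurate: since $f_{J,1}$ and $f_{J,Y}$ are not monic in $T$, passing from a bound on $\operatorname{pp}(f_{J,v})$ (obtained from Mignotte) to a bound on $f_{J,v}\in\Q[T,S]$ requires pinning down one explicit coefficient of $f_{J,v}$. The paper does this by taking the leading coefficient of $f_{J,v}$ in the lex order $(T,S)$, which, because $g_J$ is monic in $T$, equals the corresponding leading coefficient of $\partial_u f_J$ and is therefore of size $\sO(\tau_{f_J})$; Lemma~\ref{lem:primpart}(ii) then closes the gap. For your $f_{J,X}$ identity this worry disappears entirely, since the three ingredients are already individually bounded.
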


Before proving Proposition \ref{prop:rur-size}, we prove a corollary of
Mignotte's lemma stating that the bitsize of a factor of a polynomial $P$ with
integer coefficients does not differ much than that of $P$.  We also recall a
notion of primitive part for polynomials in $\Q[X,Y]$ and some of its
properties.

\begin{lemma}[Mignotte]\label{lem:mignotte}
  Let $P\in \Z[X,Y]$ be of degree at most $d$ in each variable with coefficients
  bitsize at most $\tau$. If $P=Q_1Q_2$ with $Q_1$, $Q_2$ in $\Z[X,Y]$, then the
  bitsize of $Q_i$, $i=1,2$, is in~$\sO(d+\tau)$.
\end{lemma}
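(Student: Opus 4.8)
The plan is to reduce the bivariate statement to the classical univariate Mignotte bound by a standard substitution (Kronecker-style specialization), and then invoke the one-variable factor-coefficient bound. Concretely, the classical Mignotte inequality states that if $p$ is a univariate polynomial of degree $n$ with integer coefficients, and $p=q_1q_2$ with $q_i\in\Z[X]$, then $\|q_i\|_\infty\le 2^{n}\|p\|_2\le 2^{n}\sqrt{n+1}\,\|p\|_\infty$, so $\tau_{q_i}=O(n+\tau_p)$; see e.g.~\cite[Corollary~10.12]{BPR06}. I would first recall this and state it in the form I need.

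The key step is the passage from two variables to one. Given $P\in\Z[X,Y]$ of degree at most $d$ in each variable, consider the substitution $Y\mapsto X^{2d+1}$, which maps $P(X,Y)$ to a univariate polynomial $\widetilde P(X)=P(X,X^{2d+1})\in\Z[X]$ of degree at most $d+d(2d+1)=O(d^2)$. Because every monomial $X^iY^j$ with $0\le i,j\le d$ is sent to $X^{i+j(2d+1)}$ and these exponents are pairwise distinct for distinct $(i,j)$ (the map $(i,j)\mapsto i+j(2d+1)$ is injective on $\{0,\dots,d\}^2$ since $i<2d+1$), the substitution is a bijection between the coefficient vectors of $P$ and of $\widetilde P$: in particular $\|\widetilde P\|_\infty=\|P\|_\infty$ and, conversely, the coefficients of $P$ can be read off those of $\widetilde P$. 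The same substitution applied to the factorization $P=Q_1Q_2$ gives $\widetilde P=\widetilde Q_1\widetilde Q_2$ in $\Z[X]$, where $\widetilde Q_i(X)=Q_i(X,X^{2d+1})$; note each $Q_i$ also has degree at most $d$ in each variable, since it divides $P$. Applying the univariate Mignotte bound to $\widetilde P=\widetilde Q_1\widetilde Q_2$ yields $\tau_{\widetilde Q_i}=O(\deg\widetilde P+\tau_{\widetilde P})=O(d^2+\tau)$. Hmm — this gives $O(d^2+\tau)$, not the claimed $O(d+\tau)$.

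Therefore, to obtain the sharper $\sO(d+\tau)$ stated in the lemma, I would instead argue directly with the multivariate Mignotte bound (the Mahler-measure argument generalizes verbatim to several variables): for $P\in\Z[X,Y]$ with $\deg_X P, \deg_Y P\le d$, any divisor $Q_i$ satisfies $\|Q_i\|_\infty\le 2^{\,2d}\,\|P\|_2\le 2^{\,2d}(d+1)\,\|P\|_\infty$, where $2d$ is the total number of "degree slots" bounding $\deg_X Q_i+\deg_Y Q_i\le 2d$ and the Mahler measure is multiplicative; see~\cite[\S 4.C and Exercise]{vzGGer} or~\cite{BPR06}. This gives $\tau_{Q_i}\le 2d+\log(d+1)+\tau=O(d+\tau)$, which absorbs the polylogarithmic slack into the $\sO$ and matches the statement. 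I expect the main obstacle to be locating a clean citable version of the multivariate Mignotte inequality (most references state it for one variable), so in writing the proof I would either cite the multivariate Mahler-measure bound directly or spell out the one-paragraph reduction: factor through $\C[Y][X]$, bound the product of root-factors in $X$ over the algebraically closed field $\overline{\C(Y)}$, and control the $Y$-degrees using that $\deg_Y Q_i\le\deg_Y P\le d$. Either route is routine once the right normalization is fixed; no genuinely new idea is needed.
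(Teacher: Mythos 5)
Your final argument is essentially the paper's: the paper also invokes the multivariate Mignotte inequality $\|Q_1\|_1\|Q_2\|_1 \le 2^{2d}\|P\|_2$ (citing Mignotte, Thm.~4bis), then uses $\|Q_i\|_\infty\le\|Q_i\|_1$ together with $\|Q_j\|_1\ge 1$ for the cofactor (because it has integer coefficients) and the crude bound $\|P\|_2\le(d+1)2^\tau$ to conclude $\tau_{Q_i}\le 2d+1+\log(d+1)+\tau$. Your Kronecker-substitution detour is a correct sanity check that that route only gives $O(d^2+\tau)$, and you correctly pivot to the direct multivariate bound, which is exactly what the paper does.
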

\begin{proof}
  A polynomial can be seen as the vector of its coefficients and we denote by
  $\|P\|_k$ the $L^k$ norm of $P$.  Mignotte lemma \cite[Theorem 4bis
  p. 172]{Mignotte} states 
  that $\|Q_1\|_1 \|Q_2\|_1 \leq 2^{2d} \|P\|_2$. One always has
  $\|Q_i\|_\infty\leq \|Q_i\|_1$ and since the polynomials have integer
  coefficients, $1\leq \|Q_i\|_\infty$. Thus $\|Q_j\|_\infty \leq 2^{2d}
  \|P\|_2$ and $\log\|Q_j\|_\infty \leq {2d}+\log\|P\|_2$.  Thus, by definition,
  the bitsize of $Q_j$ is $\lfloor \log\|Q_j\|_\infty\rfloor+1\leq
  {2d}+1+\log\|P\|_2$. Since $P$ has degree at most $d$ in each variable, it has
  at most $(d+1)^2$ coefficients which are bounded by $2^\tau$, thus $\|P\|_2 <
  \sqrt{(d+1)^2 2^{2\tau}}$ which yields that the bitsize of $Q_j$ is less than
  $2d+1+\log(d+1) +{\tau}$.
\end{proof}

\medskip\noindent\emph{Primitive part.}\quad Consider a polynomial $P$ in
$\Q[X,Y]$ of degree at most $d$ in each variable.  It can be written
$P=\sum_{i,j=0}^{d}\frac{a_{ij}}{b_{ij}}X^iY^j$ with $a_{ij}$ and $b_{ij}$
coprime in $\Z$ for all $i,j$. We define the \emph{primitive part} of $P$,
denoted $pp(P)$, as $P$ divided by the gcd of the $a_{ij}$ and multiplied by the
least common multiple (lcm) of the $b_{ij}$. (Note that this definition is not
entirely standard since we do not consider contents that are polynomials in $X$
or in $Y$.)  We also denote by $\tau_P$ the bitsize of $P$ (that is, the maximum
bitsize of all the $a_{ij}$ and $b_{ij}$).  We prove three properties of the
primitive part which will be useful in the proof.

\begin{lemma}\label{lem:primpart}
For any two polynomials $P$ and $Q$ in $\Q[X,Y]$, we have the following properties: (i)
$pp(PQ)=pp(P)\, pp(Q)$. (ii) If $P$ is monic then $\tau_{P}\leq \tau_{pp(P)}$ and, more generally, 
if $P$ has one  coefficient, $\xi$,  of bitsize $\tau_\xi$, then
$\tau_{P}\leq \tau_\xi+\tau_{pp(P)}$. (iii)  If $P$ has coefficients in $\Z$, then
$\tau_{pp(P)}\leq \tau_{P}$.
\end{lemma}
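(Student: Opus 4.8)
\emph{Proof proposal for Lemma~\ref{lem:primpart}.} The plan is to phrase everything through the decomposition $P=\cont(P)\,pp(P)$, where, for $P=\sum_{ij}\frac{a_{ij}}{b_{ij}}X^iY^j$ written as in the definition, $\cont(P)=\gcd_{ij}(a_{ij})/\lcm_{ij}(b_{ij})$ is a positive rational. The first step is to check that $pp(P)$, which lies in $\Z[X,Y]$ by construction, is genuinely \emph{primitive}, i.e.\ the gcd of its integer coefficients equals $1$; this is where the coprimality of each pair $a_{ij},b_{ij}$ is essential. Indeed, if a prime $p$ divided every coefficient of $pp(P)$, then comparing $p$-adic valuations at a monomial where the power of $p$ dividing $b_{ij}$ is largest, and using that $p\mid a_{ij}$ forces $p\nmid b_{ij}$, one finds this largest power must be $0$; but then $\lcm_{ij}(b_{ij})$ contributes no factor of $p$ and $p$ would have to divide $a_{ij}/\gcd_{kl}(a_{kl})$ for every $ij$, which is impossible. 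Consequently $P\mapsto(\cont(P),pp(P))$ is the usual content / primitive-part splitting, with the sign of $pp(P)$ fixed by the formula so that $P$ and $pp(P)$ have matching coefficient signs, and this splitting is unique: if a positive rational $p/q$ in lowest terms multiplies a primitive integer polynomial $R$, then reducing the resulting coefficients to lowest terms one recovers the gcd of the numerators as $p$ and the lcm of the denominators as $q$ (a one-line valuation argument), so $R$ is recovered by applying $pp(\cdot)$.

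With this in hand, property (i) is exactly Gauss's lemma: $pp(P)\,pp(Q)$ is a product of primitive integer polynomials, hence primitive, it has the same coefficient signs as $PQ$ since $\cont(P)\cont(Q)>0$, and $PQ=\cont(P)\cont(Q)\,pp(P)\,pp(Q)$; uniqueness of the splitting then yields $pp(PQ)=pp(P)\,pp(Q)$ (and $\cont(PQ)=\cont(P)\cont(Q)$).

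Property (iii) is immediate: when $P\in\Z[X,Y]$ all $b_{ij}=1$, so $\cont(P)=\gcd_{ij}(a_{ij})$ is a positive integer and $pp(P)=P/\cont(P)$ has every coefficient of absolute value at most the corresponding coefficient of $P$, whence $\tau_{pp(P)}\leq\tau_P$. For property (ii), let $\xi$ be the distinguished coefficient of $P$ and $\eta\in\Z\setminus\{0\}$ the coefficient of $pp(P)$ in the same monomial, so $\cont(P)=\xi/\eta$; writing $\xi=u/v$ in lowest terms, $\cont(P)=u/(v\eta)$, so in lowest terms its numerator divides $u$ and its denominator divides $v\eta$, hence $\cont(P)$ has bitsize $\leq\tau_\xi+\tau_{pp(P)}$. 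Every coefficient of $P$ equals $\cont(P)$ times an integer of bitsize $\leq\tau_{pp(P)}$ (a coefficient of $pp(P)$), so both its numerator and its denominator have bitsize $\leq\tau_\xi+\tau_{pp(P)}$, giving $\tau_P\leq\tau_\xi+\tau_{pp(P)}$. When $P$ is monic we have $\xi=1$, so $\cont(P)=1/\eta$ with $\eta\in\Z\setminus\{0\}$ a coefficient of $pp(P)$, hence $|\eta|<2^{\tau_{pp(P)}}$; then each coefficient of $P$ is $(\text{a coefficient of }pp(P))/\eta$, so its numerator and denominator are both $<2^{\tau_{pp(P)}}$ in absolute value, yielding the sharper bound $\tau_P\leq\tau_{pp(P)}$.

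None of these steps is really difficult; the one calling for a bit of care is the preliminary verification that the explicit formula for $pp(P)$ produces a primitive polynomial — which would fail without the coprimality hypothesis on the $a_{ij},b_{ij}$ — since this is precisely what licenses invoking Gauss's lemma and the uniqueness of the content / primitive-part decomposition in part (i). The remainder is routine bookkeeping with numerators and denominators of rationals.
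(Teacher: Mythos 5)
Your proof is correct and follows essentially the same approach as the paper's: reduce to Gauss's lemma (product of primitive polynomials is primitive) for (i), and track numerators and denominators for (ii) and (iii). The only notable differences are cosmetic — you invoke the multivariate Gauss lemma directly where the paper reduces to the univariate case via a Kronecker-style substitution $X^iY^j\mapsto Z^{ik+j}$, and you explicitly verify that the formula for $pp(P)$ yields a primitive polynomial (a step the paper leaves tacit) — neither of which constitutes a different route.
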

\begin{proof}
Gauss Lemma states that if two univariate polynomials with integer coefficients are
primitive, so is their product. This lemma can straightforwardly be extended to be used in our
context by applying a change of variables of the form $X^iY^j\rightarrow Z^{ik+j}$ with
$k>2\max(d_Y(P),d_Y(Q))$. 
Thus, if $P$ and $Q$ in $\Q[X,Y]$ are primitive (i.e., each of
them has integer coefficients whose common gcd is 1), their product is primitive.
It follows that $pp(PQ)=pp(P)\,pp(Q)$ because, writing
  $P=\alpha\,pp(P)$ and $Q=\beta\,pp(Q)$, we have
  $pp(PQ)=pp(\alpha\,pp(P)\,\beta\,pp(Q)) = pp(pp(P)\,pp(Q))$ which is equal to
  $pp(P)\,pp(Q)$ since 
the product of two primitive polynomials is primitive.

  Second, if $P\in\Q[X,Y]$ has one coefficient, $\xi$, of bitsize $\tau_\xi$,
  then $\tau_{P}\leq \tau_\xi+\tau_{pp(P)}$.  Indeed, We have
  $P=\xi\frac{P}{\xi}$ thus $\tau_{P}\leq \tau_\xi+\tau_{\frac{P}{\xi}}$. Since
  $\frac{P}{\xi}$ has one of its coefficients equal to 1, its primitive part is
  $\frac{P}{\xi}$ multiplied by an integer (the lcm of the denominators), thus
  $\tau_{\frac{P}{\xi}}\leq \tau_{pp(\frac{P}{\xi})}$ and
  $pp(\frac{P}{\xi})=pp(P)$ by definition, which implies the claim.

  Third, if $P$ has coefficients in $\Z$, then $\tau_{pp(P)}\leq \tau_{P}$ since
  $pp(P)$ is equal to $P$ divided by an integer (the gcd of the integer
  coefficients).
\end{proof}

  The idea of the proof of Proposition~\ref{prop:rur-size}  is,
  for $J\supseteq I=\ideal{P,Q}$, to first argue that polynomial $f_J$, that is the first
  polynomial of the RUR-candidate before specialization at $S=a$, is a factor of $f_I$ which is a
  factor of   the resultant $R(T,S)$ by Lemma \ref{lem:rur-res1}. 
We then derive a bound of $\sO(d^2+d\tau)$ on the bitsize of $f_{J}$ 
  from the bitsize of this resultant using Lemma~\ref{lem:mignotte}.
The bound on the bitsize of the other polynomials of the  non-specialized RUR-candidate of $J$ follows from
  the bound on $f_J$ and we  finally specialize all these polynomials at $S=a$ which
  yields the result. We decompose this proof in two lemmas to emphasize that, although the bound on the bitsize of $f_{J}$  uses
the fact that $J$ contains polynomials $P$ and $Q$, the second part  of the proof only uses the
bound on $f_{J}$.

\begin{lemma}\label{lem:rur-size1}
  Let $P, Q \in \mathbb{Z} [X, Y]$  be two coprime polynomials of total degree at
  most $d$ and maximum bitsize $\tau$, and  $J$ be any ideal of $\mathbb{Z} [X, Y]$  containing $P$ and $Q$. 
Polynomials $f_{J}(T,S)$ (see \eqref{eq:defRUR-gen})  and its primitive part have bitsize in $\sO
(d^2 +d\tau )$ and degree at most $d^2$ in each variable.
\end{lemma}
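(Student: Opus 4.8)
The plan is to realise $f_J(T,S)$ as a divisor, over $\Z[T,S]$ after passing to primitive parts, of the resultant $R(T,S)$ of the sheared polynomials, and then to bound the bitsize of that divisor via Mignotte's Lemma~\ref{lem:mignotte}; the degree and bitsize of $R$ are already supplied by Lemma~\ref{lem:complexity:shear}, and $R=L_R(S)f_I(T,S)$ by Lemma~\ref{lem:rur-res1}. Throughout write $I=\ideal{P,Q}$.

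First I would dispose of the degree bound. Since $J\supseteq I$, we have $V(J)\subseteq V(I)$, and for each $\sigma\in V(J)$ the local quotient of $\Q[X,Y]$ by $J$ at $\sigma$ is a quotient of the one by $I$, so $\mu_J(\sigma)\leq\mu_I(\sigma)$. Hence $d_T(f_J)=\sum_{\sigma\in V(J)}\mu_J(\sigma)\leq\sum_{\sigma\in V(I)}\mu_I(\sigma)\leq d^2$ by B\'ezout's bound, and since each factor $T-X(\sigma)-SY(\sigma)$ has degree $1$ in $S$ as well, $d_S(f_J)\leq\sum_{\sigma\in V(J)}\mu_J(\sigma)\leq d^2$.

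Next, for the bitsize, I would show that $f_J$ divides $f_I$ in $\Q[T,S]$. The linear forms $T-X(\sigma)-SY(\sigma)$ for distinct $\sigma\in V(I)$ are pairwise non-associate irreducibles of $\C[T,S]$, so $\mu_J(\sigma)\leq\mu_I(\sigma)$ gives $f_J\mid f_I$ in $\C[T,S]$; and both $f_I$ and $f_J$ lie in $\Q[T,S]$, since specializing $S$ at any rational $a$ gives the first polynomial of $RUR_{I,a}$ (resp.\ $RUR_{J,a}$), which is in $\Q[T]$ by \cite{Rou99}, and a bivariate polynomial whose specializations at infinitely many rationals lie in $\Q[T]$ lies in $\Q[T,S]$. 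As $f_J$ is monic in $T$, dividing $f_I$ by $f_J$ in $(\Q[S])[T]$ gives a quotient $h\in\Q[T,S]$ whose remainder must vanish by comparison with the division over $\C$, so $f_I=h\,f_J$ in $\Q[T,S]$. Combining with Lemma~\ref{lem:rur-res1} I get $R(T,S)=L_R(S)\,h(T,S)\,f_J(T,S)$ in $\Q[T,S]$ with $R\in\Z[T,S]$; taking primitive parts and applying Lemma~\ref{lem:primpart}(i) repeatedly yields $pp(R)=pp(L_R)\,pp(h)\,pp(f_J)$ in $\Z[T,S]$, so $pp(R)=Q_1\cdot pp(f_J)$ with $Q_1=pp(L_R)\,pp(h)\in\Z[T,S]$. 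Since $R$ has degree at most $2d^2$ in each variable and bitsize $\sO(d^2+d\tau)$ (Lemma~\ref{lem:complexity:shear}), so does $pp(R)$ by Lemma~\ref{lem:primpart}(iii), and Mignotte's Lemma~\ref{lem:mignotte} then bounds the bitsize of $pp(f_J)$ by $\sO(2d^2+(d^2+d\tau))=\sO(d^2+d\tau)$. Finally, $f_J$ being monic in $T$, one of its coefficients equals $1$, so Lemma~\ref{lem:primpart}(ii) gives $\tau_{f_J}\leq\tau_{pp(f_J)}=\sO(d^2+d\tau)$.

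I expect the main difficulty to be bookkeeping rather than any genuine obstacle: descending the factorization $R=L_R\,h\,f_J$ from $\Q[T,S]$ to $\Z[T,S]$ through primitive parts (precisely what Lemma~\ref{lem:primpart} was proved for), and making sure Mignotte's estimate is applied with the correct parameters, namely $pp(R)$ of degree $2d^2$ and bitsize $\sO(d^2+d\tau)$. The only structural inputs are Lemma~\ref{lem:rur-res1} — that $f_I$ is the monic-in-$T$ form of $R$ — and the monotonicity $\mu_J(\sigma)\leq\mu_I(\sigma)$ of multiplicities under enlarging the ideal.
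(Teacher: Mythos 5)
Your proposal is correct and follows essentially the same route as the paper's own proof: show $f_J$ divides $f_I$ which divides $R$, pass to primitive parts via Lemma~\ref{lem:primpart}, invoke Mignotte's Lemma~\ref{lem:mignotte} on $pp(R)$ using the size bounds from Lemma~\ref{lem:complexity:shear}, and recover the bound on $f_J$ itself from monicity. The only difference is that you spell out the divisibility $f_J\mid f_I$ and the rationality of $f_J$ in more detail than the paper does, which is fine but not a different argument.
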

\begin{proof}
Consider  an ideal $J$  containing $I=\ideal{P,Q}$. 
Counted with multiplicity, the set of solutions of
$J$ is a subset of those of $I$ thus, by Equation \eqref{eq:defRUR-gen}, polynomial
$f_J(T,S)$ is monic in $T$ and $f_J(T,S)$ divides $f_I(T,S)$. Furthermore, $f_I(T,S)$ divides
$R(T,S)$ by Lemma~\ref{lem:rur-res1}. Thus $f_J(T,S)$ divides $R(T,S)$ and we consider  $h\in\Q[T,S]$  such that $f_J\,
h=R$. Taking the primitive part, we have $pp(f_J)\, pp(h)=pp(R)$ by Lemma~\ref{lem:primpart}. The
bitsize of $pp(R)$ is in $\sO(d^2+d\tau)$ because $R$ is of bitsize $\sO(d^2+d\tau)$ (Lemma~\ref{lem:complexity:shear}) and, since $R$ has integer coefficients, $\tau_{pp(R)}\leq \tau_{R}$
(Lemma~\ref{lem:primpart}). This implies that $pp(f_J)$  also has bitsize in $\sO(d^2+d\tau)$ by
Lemma~\ref{lem:mignotte} because the degree of $pp(R)$ is in $O(d^2)$
(Lemma~\ref{lem:complexity:shear}).
Furthermore, since  $f_J(T,S)$ is monic in $T$, $\tau_{f_J}\leq \tau_{pp(f_J)}$
(Lemma~\ref{lem:primpart}) which implies that both $f_J$ and its primitive part have bitsize
in~$\sO(d^2+d\tau)$. 
Finally, the number of solutions (counted with multiplicity) of $\ideal{P,Q}$ is at most $d^2$ by
the Bézout bound, and this bound also holds for $J\supseteq \ideal{P,Q}$. It then follows from Equation
\eqref{eq:defRUR-gen} that $f_J$ has degree at most $d^2$ in each variable. 
\end{proof}

\begin{lemma}\label{lem:rur-size2}
Let $J$ be any ideal such that  polynomials $f_{J}(T,S)$ (see \eqref{eq:defRUR-gen}) and its primitive part have degree $O(d^2)$
and bitsize in $\sO (d^2 +d\tau )$ and $a$ is a rational of bitsize $\tau_a$. Then all the polynomials of the RUR-candidate $RUR_{J,a}$ have bitsize in $\sO (d^2\tau_a +d\tau )$.
 Moreover, there exists
    an integer of bitsize in $\sO(d^2\tau_a+d\tau)$ such that its  product with any polynomial in the RUR-candidate yields a polynomial
    with integer coefficients.
\end{lemma}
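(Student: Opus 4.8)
The plan is to reduce the statement to a bound on the \emph{non}-specialized polynomials $f_J(T,S)$, $f_{J,1}(T,S)$, $f_{J,X}(T,S)$, $f_{J,Y}(T,S)$ of \eqref{eq:defRUR-gen}. These all lie in $\Q[T,S]$ (as explained after \eqref{eq:defRUR-gen}) and, by \eqref{eq:defRUR-gen} and the B\'ezout bound $\# V(J)\le\sum_\sigma\mu_J(\sigma)\le d^2$, have degree at most $d^2$ in each of $T$ and $S$. I would first prove that there is a single integer $M$ of bitsize $\sO(d^2+d\tau)$ such that $Mf_J$, $Mf_{J,1}$, $Mf_{J,X}$, $Mf_{J,Y}$ all lie in $\Z[T,S]$ with bitsize $\sO(d^2+d\tau)$ (so, in particular, $f_J,f_{J,1},f_{J,X},f_{J,Y}$ themselves have bitsize $\sO(d^2+d\tau)$). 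Then the statement for $RUR_{J,a}$ follows by specialization: since $f_{J,a}(T)=f_J(T,a)$ and $f_{J,a,v}(T)=f_{J,v}(T,a)$, writing $a=p/q$ in lowest terms and letting $D=O(d^2)$ be a common bound on the degree in $S$, the integer $N=Mq^{D}$ has bitsize $\tau_M+D\tau_a=\sO(d^2+d\tau)+O(d^2\tau_a)=\sO(d^2\tau_a+d\tau)$ (using $\tau_a\ge1$); and evaluating the integer polynomial $Mf_{J,v}(T,S)$ at $S=p/q$ and multiplying by $q^D$ replaces each coefficient by a sum of $O(d^2)$ integers $(\text{coeff})\,p^{\,j}q^{\,D-j}$, hence of bitsize $\sO(d^2+d\tau)+O(d^2\tau_a)=\sO(d^2\tau_a+d\tau)$. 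Thus $Nf_{J,a,v}\in\Z[T]$ has bitsize $\sO(d^2\tau_a+d\tau)$, which gives the required integer, and dividing back by $N$ shows each polynomial of $RUR_{J,a}$ has bitsize $\sO(d^2\tau_a+d\tau)$ (its degree in $T$ being at most $d^2$ since $\deg_T f_{J,v}\le d^2$).

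For the core bound on the non-specialized polynomials, I would exploit that $f_J$ and its squarefree part $\overline{f_J}(T,S)=\prod_{\sigma\in V(J)}(T-X(\sigma)-SY(\sigma))$ are monic in $T$. Hence $pp(f_J)=c_0 f_J$ and $pp(\overline{f_J})=c_0'\,\overline{f_J}$ with $c_0=Lc_T(pp(f_J))$ and $c_0'=Lc_T(pp(\overline{f_J}))$ positive integers; $c_0$ has bitsize $\sO(d^2+d\tau)$ by hypothesis, and $c_0'$ has bitsize $\sO(d^2+d\tau)$ because $\overline{f_J}\mid f_J$ in $\Q[T,S]$ forces $pp(\overline{f_J})\mid pp(f_J)$ (Lemma~\ref{lem:primpart}(i)), so Lemma~\ref{lem:mignotte} bounds the factor $pp(\overline{f_J})$. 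Multiplying the identities of Lemma~\ref{lem:eq3-4} by $\overline{f_J}$ and using $f_J=g_J\,\overline{f_J}$ (with $g_J(T,S)=\prod_\sigma(T-X(\sigma)-SY(\sigma))^{\mu_J(\sigma)-1}$) yields the polynomial identities $f_Jf_{J,1}=\overline{f_J}\,\partial_T f_J$ and $f_Jf_{J,Y}=\overline{f_J}\,\partial_S f_J$; scaling by $c_0c_0'$ rewrites them as $c_0'\,pp(f_J)\,f_{J,1}=pp(\overline{f_J})\,\partial_T(pp(f_J))$ and $c_0'\,pp(f_J)\,f_{J,Y}=pp(\overline{f_J})\,\partial_S(pp(f_J))$, whose right-hand sides are products of polynomials in $\Z[T,S]$ of degree $O(d^2)$ in each variable and bitsize $\sO(d^2+d\tau)$, hence themselves of bitsize $\sO(d^2+d\tau)$. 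Since $pp(f_J)$ is primitive and divides these right-hand sides in $\Q[T,S]$, Gauss's lemma (used as in the proof of Lemma~\ref{lem:primpart}) gives $c_0'f_{J,1},c_0'f_{J,Y}\in\Z[T,S]$, and Lemma~\ref{lem:mignotte}, applied to them as cofactors, bounds their bitsize by $\sO(d^2+d\tau)$. For $f_{J,X}$ I would use $f_{J,X}=Tf_{J,1}-Sf_{J,Y}-d_T(f_J)\,\overline{f_J}$, derived from \eqref{eq:defRUR-gen} via $(T-X(\sigma)-SY(\sigma))\prod_{\varsigma\ne\sigma}(T-X(\varsigma)-SY(\varsigma))=\overline{f_J}$ exactly as in the proof of Proposition~\ref{prop:rur-res2}: multiplying by $c_0'$ writes $c_0'f_{J,X}$ as a $\Z[T,S]$-combination of $c_0'f_{J,1}$, $c_0'f_{J,Y}$ and $d_T(f_J)\,pp(\overline{f_J})$ (recall $d_T(f_J)\le d^2$), so $c_0'f_{J,X}\in\Z[T,S]$ has bitsize $\sO(d^2+d\tau)$. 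Taking $M=c_0c_0'$ then makes $Mf_J=c_0'\,pp(f_J)$ and $Mf_{J,v}=c_0\,(c_0'f_{J,v})$ integral of bitsize $\sO(d^2+d\tau)$, completing the reduction.

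The step I expect to be the main obstacle is controlling $f_{J,Y}$ (and, through the identity above, $f_{J,X}$). For $f_J$ and $f_{J,1}$ the leading coefficient in $T$ is $1$, resp.\ $d_T(f_J)\le d^2$, so Lemma~\ref{lem:primpart}(ii) immediately turns a bound on the primitive part into a bound on the polynomial; $f_{J,Y}$ has no visibly small coefficient, so merely bounding $pp(f_{J,Y})$ does not pin down an integer denominator for $f_{J,Y}$, and this is exactly why the multiplicative identity $f_Jf_{J,Y}=\overline{f_J}\,\partial_S f_J$ together with Gauss's lemma is essential: it produces the explicit denominator $c_0'$. A secondary point needing care is that the identity $f_{J,X}=Tf_{J,1}-Sf_{J,Y}-d_T(f_J)\overline{f_J}$ must be verified for an \emph{arbitrary} ideal $J$ containing $P$ and $Q$ (not only when $X+aY$ is separating), but this follows directly from the defining sums in \eqref{eq:defRUR-gen}; similarly one should check that $\overline{f_J}$ is genuinely the squarefree part of $f_J$ with respect to $T$, which holds because the points of $V(J)$ are pairwise distinct.
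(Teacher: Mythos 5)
Your proof is correct, and it takes a genuinely different route from the paper's. Both arguments rely on the same underlying toolkit — Lemma~\ref{lem:eq3-4}, the primitive-part calculus of Lemma~\ref{lem:primpart}, and the Mignotte bound of Lemma~\ref{lem:mignotte} — but the way they are combined differs at two key places. First, for $f_{J,1}$ and $f_{J,Y}$, the paper works directly with the identity $\partial_u f_J = g_J\, f_{J,v}$ and runs a two-step bound: it isolates one specific coefficient of $f_{J,v}$ (the lex leading one, extracted as the corresponding coefficient of $\partial_u f_J$ because $g_J$ is monic in $T$) and separately bounds $pp(f_{J,v})$ via Mignotte, combining them through Lemma~\ref{lem:primpart}(ii). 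You instead multiply both sides by $\overline{f_J}$ to obtain $f_J\, f_{J,v} = \overline{f_J}\,\partial_u f_J$, scale by $c_0c_0'$ to land in $\Z[T,S]$, and then use Gauss's lemma to produce an \emph{explicit} integer denominator $c_0'=Lc_T(pp(\overline{f_J}))$ so that $c_0'f_{J,v}\in\Z[T,S]$, after which Mignotte applies directly. Second, for $f_{J,X}$, the paper uses a symmetry trick — exchange $X\leftrightarrow Y$ in \eqref{eq:defRUR-gen}, then the substitutions $S\mapsto 1/S$, $T\mapsto T/S$, and clear denominators — whereas you derive and use the non-specialized algebraic identity $f_{J,X}=Tf_{J,1}-Sf_{J,Y}-d_T(f_J)\overline{f_J}$, which indeed follows immediately from the defining sums in \eqref{eq:defRUR-gen} for any $J$, with no separation hypothesis. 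Your route is arguably tighter: the explicit common denominator $M=c_0c_0'$ (and then $N=Mq^D$ after specialization) makes the ``moreover'' clause fall out for free, whereas the paper must run a separate coefficient-by-coefficient argument to bound the lcm of denominators of all four polynomials. The paper's symmetry argument for $f_{J,X}$ avoids the (easy but non-trivial) verification of the identity for an arbitrary $J$, which you correctly flag and discharge; on the other hand your derivation keeps the whole proof inside a single algebraic framework rather than appealing to a change-of-variables trick that, as printed, contains a typo in the paper.
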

\begin{proof}
\emph{Bitsize of $f_{J,v}$, $v\in\{1,Y\}$.}\quad We consider
  the equations of Lemma~\ref{lem:eq3-4} which can be written as $\frac{\partial
    f_J}{\partial u} (T,S) = g_J(T,S) f_{J,v}(T,S)$ where $u$ is $T$ or $S$, and
  $v$ is $1$ or $Y$, respectively.  We first bound the bitsize of one
  coefficient, $\xi$, of $f_{J,v}$ so that we can apply
  Lemma~\ref{lem:primpart} which states that $\tau_{f_{J,v}}\leq
  \tau_\xi+\tau_{pp(f_{J,v})}$. We consider the leading coefficient $\xi$ of
  $f_{J,v}$ with respect to the lexicographic order $(T,S)$. Since $g_J$ is
  monic in $T$ (see Lemma~\ref{lem:eq3-4}), the leading coefficient (with
  respect to the same ordering) of the product $g_Jf_{J,v}=\frac{\partial
    f_J}{\partial u}$ is $\xi$ which thus has bitsize in $\sO(\tau_{f_J})$ 
  (since it is bounded by $\tau_{f_J}$ plus the log of the degree of
  $f_J$). It thus follows from the hypothesis on $\tau_{f_J}$  that  $\tau_{f_{J,v}}$ is in $\sO( d^2+d\tau+\tau_{pp(f_{J,v})})$.

  We now take the primitive part of the above equation (of Lemma~\ref{lem:eq3-4}), which gives
  $pp(\frac{\partial f_J}{\partial u} (T,S)) = pp(g_J(T,S))\ pp(f_{J,v}(T,S))$. By
  Lemma~\ref{lem:mignotte}, $\tau_{pp(f_{J,v})}$ is in $\sO(d^2+\tau_{pp(\frac{\partial
      f_J}{\partial u})})$. In order to bound the bitsize of $pp(\frac{\partial f_J}{\partial u})$,
  we multiply $\frac{\partial f_J}{\partial u}$ by the lcm of the denominators of the coefficients of
  $f_J$, which we denote by $\lcm_{f_J}$.
Multiplying by a constant does not change the primitive part and $\lcm_{f_J} \frac{\partial
f_J}{\partial u}$ has integer coefficients, so 
the bitsize of $pp(\frac{\partial
    f_J}{\partial u})=pp(\lcm_{f_J}\,\frac{\partial f_J}{\partial u})$ is thus at most that of
  $\lcm_{f_J}\,\frac{\partial f_J}{\partial u}$ which is bounded by the sum of the bitsizes of
  $\lcm_{f_J}$ and $\frac{\partial f_J}{\partial u}$. 
By hypothesis, the bitsize of $f_J$ is in $\sO(d^2+d\tau)$ so the bitsize of $\frac{\partial
f_J}{\partial u}$ is also in $\sO(d^2+d\tau)$. On the other hand, since $f_J$ 
is monic (in $T$), $f_J\,\lcm_{f_J}=pp(f_J)$ and $\tau_{\lcm_{f_J}}\leq
  \tau_{pp(f_J)}$ which is in $\sO(d^2+d\tau)$ by hypothesis. 
  It follows that $\tau_{pp(f_{J,v})}$ and $\tau_{f_{J,v}}$ are also in
  $\sO(d^2+d\tau)$ for $v\in\{1,Y\}$.

  \medskip\noindent\emph{Bitsize of $f_{J,X}$.}\quad We obtain the bound for
  $f_{J,X}$ by symmetry. Similarly as we proved that $f_{J,Y}$ has bitsize in
  $\sO(d^2+d\tau)$, we get, by exchanging the role of $X$ and $Y$ in
  Equation~\eqref{eq:defRUR-gen} and Lemma~\ref{lem:eq3-4}, that $ \sum_{\sigma
    \in V (J)} \mu_J (\sigma) X (\sigma) \prod_{\varsigma \in V (J), \varsigma
    \neq \sigma} (T - Y (\varsigma) - SX (\varsigma))$ has bitsize in
  $\sO(d^2+d\tau)$. This polynomial is of degree $O(d^2)$ in $T$ and $S$, by hypothesis, thus
  after replacing $S$ by $\frac{1}{S}$ and then $T$ by $\frac{T}{S}$, the
  polynomial is of degree $O(d^2)$ in $T$ and $\frac{1}{S}$. We multiply it by
  $S$ to the power of $\frac{1}{S}$ and obtain $f_{J,X}$ which is thus of
  bitsize $\sO(d^2+d\tau)$.

  \medskip\noindent\emph{Specialization at $S=a$.}\quad To bound the bitsize of
  the polynomials of $RUR_{J,a}$ (Definition~\ref{def:rur}), it remains to
  evaluate the polynomials $f_{J}$ and $f_{J,v}$, $v\in\{1,X,Y\}$, at the
  rational value $S=a$ of bitsize $\tau_a$.  Since these polynomials have degree
  in $S$ in $O(d^2)$ and bitsize in $\sO(d^2+d\tau)$, it is straightforward that
  their specializations at $S=a$ have bitsize in
  $\sO(d^2+d\tau+d^2\tau_a)=\sO(d^2\tau_a+d\tau)$.

  \medskip\noindent\emph{The $\lcm$ of the denominators of all the coefficients in the polynomials
  of $RUR_{J,a}$ has       bitsize $\sO(d^2\tau_a+d\tau)$.}\quad We have already argued that
  $\lcm_{f_J}$, the lcm
  of the denominators of the coefficients of $f_{J}$, is in $\sO(d^2+d\tau)$. For each of the other
  polynomials $f_{J,v}$, $v\in\{1,X,Y\}$, 
 denote by $\lcm_{f_{J,v}}$ and $\gcd_{f_{J,v}}$ the lcm of
  the denominators of its coefficients and the gcd of their numerators. By definition,
  $pp(f_{J,v})=\frac{\lcm_{f_{J}}}{\gcd_{f_{J,v}}} f_{J,v}$. Let $c$ be any coefficient of
  $pp(f_{J,v})\in\Z[S,T]$  and $\frac{a}{b}$ be the corresponding coefficient of
  $f_{J,v}\in\Q[S,T]$ (with $a$ and $b$ coprime integers); we have
$\lcm_{f_{J}} =  c\,\frac{b}{a}\gcd_{f_{J,v}}\leq c\,b$ since $\gcd_{f_{J,v}}$ divides $a$. It
follows that $\tau_{\lcm_{f_{J}}}\leq \tau_{pp(f_{J,v})}+\tau_{f_{J,v}}$ which are both in
$\sO(d^2+d\tau)$, as proved above.  Hence the $\lcm$ of the denominators of all the
coefficients in $RUR_{J,a}$ has       bitsize $\sO(d^2+d\tau)$. Finally, since all these polynomials have degree
$O(d^2)$, when specializing by $S=a$, the bitsize of the denominators of the coefficients of the
polynomials increase by at most $O(d^2\tau_a)$ and thus the bitsize of their lcm also increases by
at most $O(d^2\tau_a)$, which concludes the proof. 
\end{proof}

\begin{proof}[Proof of Proposition~\ref{prop:rur-size}]
By Lemma~\ref{lem:rur-size1}, $f_J$ has degree at most $d^2$ in each variable, so has $f_{J,v}$,
$v\in\{1,X,Y\}$ by Equation
\eqref{eq:defRUR-gen}. It follows from Equation~\eqref{eq:defRUR} that  all the polynomials of any
RUR-candidate of $J$  have degree at most $d^2$.
The rest of the proposition
is a corollary of Lemmas~\ref{lem:rur-size1} and~\ref{lem:rur-size2}.
\end{proof}

\section{Applications}\label{sec:applications}

We present three applications enlightening the advantages of computing a RUR of
a system. 
The first one is the isolation of the solutions, that is computing boxes with
rational coordinates that isolate the solutions.  The second one is the
evaluation of the sign of a bivariate polynomial at a real solution of the
system. Finally, we address the problem of computing a rational parameterization
of a system defined by several equality and inequality constraints.
In all these applications, we take advantage of the RUR to transform bivariate
operations on the system into univariate operations.  We assume
that the polynomials of the RURs satisfy the bitsize bound of
Theorem~\ref{th:rur}.

 We start by recalling the complexity of isolating the real roots of a univariate
polynomial. Here, $f$ denotes a univariate polynomial of degree $d$
with integer coefficients of bitsize at most~$\tau$.

\begin{lemma}[{\cite[Theorem 10]{sagraloff2012issacNewDsc}\footnote{Theorem 10
      of \cite{sagraloff2012issacNewDsc} states that isolating the real roots of
      $f$ and refining \emph{all} the isolating intervals up to a precision of
      $L$ bits can be done with $\sO_B(d^3\tau+d^2L)$ bit operations. However,
      its proof establishes the stronger result, which we stated in
      Lemma~\ref{comp:isolation}.
      Note that the proof is currently only available in the manuscript
      corresponding to \cite{sagraloff2012issacNewDsc} which is available on the
      author's webpage.}}]
  \label{comp:isolation}
  Let $f$ be squarefree. The bit complexity of isolating all the real roots of
  $f$ is in $\sO_B(d^3\tau)$.  Then, the bit complexity of refining any of these
  isolating intervals up to a precision of $L$ bits is in~$\sO_B(d^2\tau+dL)$.
\end{lemma}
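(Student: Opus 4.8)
The plan is to prove both bounds by analyzing a single real-root-isolation algorithm --- the Newton--Descartes (\textsc{NewDsc}) subdivision scheme of \cite{sagraloff2012issacNewDsc} --- and charging its bit cost against the geometry of the roots of $f$. Throughout, $f$ is squarefree of degree $d$ with integer coefficients of bitsize $\tau$, so three classical estimates are available and are essentially the only ``analytic'' input to the proof: Cauchy's bound confines every real root to $\mathcal{I}_0=(-2^{\Gamma},2^{\Gamma})$ with $\Gamma=O(\tau)$; Mignotte's bound gives a minimal root separation $\mathrm{sep}(f)\geq 2^{-\sO(d\tau)}$, so no interval of width below $2^{-\sO(d\tau)}$ is ever needed and every precision that arises stays in $\sO(d\tau)$; and the Davenport--Mahler inequality gives $\sum_{\alpha}\log\mathrm{sep}(\alpha)^{-1}=\sO(d\tau)$ over the real roots, which controls the cumulative amount of work. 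Everything else is combinatorial.

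First I would set up the subdivision. The algorithm maintains a list of subintervals of $\mathcal{I}_0$; on each interval $\mathcal{I}$ it runs the Descartes test --- map $\mathcal{I}$ to $(0,\infty)$ by a M\"obius transformation, perform the associated Taylor shift, and read off the number $v(\mathcal{I})$ of sign variations of the resulting coefficient sequence, which upper-bounds the number of roots of $f$ in $\mathcal{I}$ and is congruent to it modulo $2$. An interval with $v(\mathcal{I})=0$ is discarded, one with $v(\mathcal{I})=1$ is output as isolating, and one with $v(\mathcal{I})\geq 2$ would classically be bisected --- but \textsc{NewDsc} first attempts a Newton step from a sample point, trying to jump directly to a much smaller subinterval around a cluster of roots, and reverts to plain bisection only when that jump fails to be certified. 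Correctness of the output is then immediate from the properties of $v(\mathcal{I})$; all the work lies in bounding the size of the generated tree and the cost of each node.

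For the tree size, I would argue that branching (a node with $v(\mathcal{I})\geq 2$ whose two children both have $v\geq 1$) occurs at most $O(d)$ times, since each branching permanently separates at least one more pair of real roots, and that between consecutive branchings along any path Newton's quadratic convergence reaches the next relevant scale in $\sO(1)$ steps; hence the subdivision tree has $\sO(d)$ nodes. For the cost of a node, I would run the Descartes test not in exact arithmetic but in a ``bitstream'' fashion, on $\rho$-bit approximations of the shifted coefficients, where $\rho=\sO(d\tau)$ bits suffice --- by Mignotte's bound --- to certify the sign pattern except in borderline cases, which are handled by locally doubling $\rho$, the contributions of these adaptive increases telescoping. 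One Taylor shift at precision $\rho$ on a degree-$d$ polynomial costs $\sOB(d\rho)=\sOB(d^2\tau)$ by fast polynomial arithmetic, so summing over the $\sO(d)$ nodes gives the claimed $\sOB(d^3\tau)$ for isolation.

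For the refinement bound, fix an isolating interval $\mathcal{I}$ of a simple root $\alpha$ and continue the same subdivision inside $\mathcal{I}$ until its width drops below $2^{-L}$. Since $\alpha$ is simple, $f'(\alpha)\neq 0$ and the Newton step converges quadratically on $\mathcal{I}$, so after an initial phase that resolves $\alpha$ down to its separation scale --- working at precision up to $\sO(d\tau)$ and costing $\sOB(d^2\tau)$ --- only $\sO(1)$ further Newton steps are needed, the $i$-th one at a precision $\rho_i$ that roughly doubles up to $\sO(L)$; the per-step costs $\sOB(d\rho_i)$ then form a geometric progression dominated by its last term, giving $\sOB(dL)$ for this second phase and $\sOB(d^2\tau+dL)$ in total. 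The main obstacle --- and the reason this is a theorem rather than an exercise --- is exactly the interleaving of Newton iteration with bisection: one must prove that the Newton jump can be certified from the already-available $v$-counts and approximate data, that it genuinely yields quadratic shrinking at a cluster (and not merely at an isolated simple root), and that the adaptive precision increases, summed over the entire tree, stay within $\sO(d\tau)$; this amortized precision bookkeeping is the technical heart of \cite{sagraloff2012issacNewDsc}.
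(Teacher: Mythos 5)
This lemma is not proved in the paper at all: it is a citation of \cite[Theorem 10]{sagraloff2012issacNewDsc}, with a footnote explaining that the two-part statement (isolation in $\sO_B(d^3\tau)$, plus refinement of a single interval to precision $L$ in $\sO_B(d^2\tau+dL)$) is what that theorem's proof actually establishes, even though the published statement there only gives the combined $\sO_B(d^3\tau+d^2L)$ for refining all intervals. So there is no ``paper's proof'' to compare against; the authors defer entirely to the external reference.

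Your sketch is a faithful high-level account of the \textsc{NewDsc} algorithm that the citation points to, and the three analytic ingredients you identify (Cauchy, Mignotte/separation, Davenport--Mahler) are indeed the right ones. But be aware that what you have written is a roadmap of the cited proof, not a proof: the assertions that the subdivision tree has $\sO(d)$ nodes, that the adaptive (bitstream) precision demand stays amortized at $\sO(d\tau)$ across the whole tree, and that the Newton jump can be certified and yields quadratic progress near a \emph{cluster} (not just near a simple real root) are precisely the non-trivial content of \cite{sagraloff2012issacNewDsc} --- you name them as ``the technical heart'' but do not supply the arguments. Two small imprecisions worth flagging if you were to expand this: the Davenport--Mahler bound is over all (complex) roots and is what controls the \emph{number of bisection steps} and the cumulative widths, not just the precision; and ``$\sO(1)$ further Newton steps'' in the refinement phase should be $O(\log L)$ steps of geometrically increasing precision (the cost bound survives because the per-step costs are geometric, but the count is not constant). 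Given that the paper under review deliberately treats this as a black box, your effort is better spent elsewhere; if a self-contained proof were actually required, every one of the asserted amortization claims would need to be carried out in full.
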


\begin{lemma}[{\cite[Theorem 4]{1979rump}}]
  \label{lem:sepbound}
  Let the minimum root separation bound of $f$ (or simply the separation bound
  of $f$) be the minimum distance between two different complex roots of $f$:
  $\text{sep}(f)=\min_{ \{\gamma,\, \delta \text{ roots of }f,\ \gamma\neq
    \delta \}}|\gamma-\delta|$. One has $sep(f)> 1/( 2d^{d/2+2}(d2^\tau+1)^d )$,
  which yields $sep(f) > 2^{-\sO(d\tau)}$.
\end{lemma}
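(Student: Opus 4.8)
The plan is to reduce the bound to Mahler's classical lower estimate for the minimum root separation of a squarefree polynomial in terms of its discriminant and Mahler measure, but --- to avoid the loss one usually incurs when $f$ is not squarefree --- to apply it to the \emph{monic rational} squarefree part of $f$ rather than to $f$ itself or to its primitive \emph{integer} squarefree part. Concretely, I would let $g\in\Q[X]$ be the monic squarefree part of $f$, put $k=\deg g\le d$ (there is nothing to prove if $k\le 1$, $\mathrm{sep}$ being then a minimum over the empty set), and observe that $\mathrm{sep}(f)=\mathrm{sep}(g)$ since $f$ and $g$ have exactly the same set of distinct complex roots. Mahler's inequality then gives
$$\mathrm{sep}(g)\;>\;\sqrt3\;|\mathrm{disc}(g)|^{1/2}\;k^{-(k+2)/2}\;M(g)^{\,1-k},$$
where $M(\cdot)$ denotes the Mahler measure. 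It then suffices to bound $M(g)$ from above and $|\mathrm{disc}(g)|$ from below.

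For $M(g)$: since $g$ is monic and its roots form a subset of the roots of $f$, we have $M(g)\le M(f)/|\mathrm{lc}(f)|$, and Landau's inequality gives $M(f)\le\|f\|_2<\sqrt{d+1}\,2^{\tau}$. For $|\mathrm{disc}(g)|$: write $g=g_0/n$ with $g_0\in\Z[X]$ primitive and $n=\mathrm{lc}(g_0)\ge 1$; the crucial point is $n\mid\mathrm{lc}(f)$. This I would get from a content computation: writing $f=\mathrm{lc}(f)\,g\,h$ with $g,h$ monic in $\Q[X]$ and using multiplicativity of the content (Gauss's lemma), together with the fact that the content of a monic polynomial is the reciprocal of the leading coefficient of its primitive integer representative, one obtains $|\mathrm{lc}(f)|/(n\,n')=\mathrm{cont}(f)\in\Z_{>0}$, whence $n\mid\mathrm{lc}(f)$; in particular the denominators of the coefficients of $g$ divide $\mathrm{lc}(f)$. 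Since $\mathrm{disc}(g_0)=n^{2k-2}\,\mathrm{disc}(g)$ and $g_0$ is a squarefree polynomial with integer coefficients, $|\mathrm{disc}(g_0)|\ge 1$, so $|\mathrm{disc}(g)|\ge n^{-(2k-2)}\ge|\mathrm{lc}(f)|^{-(2k-2)}$.

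Substituting both estimates into Mahler's inequality, the factors $|\mathrm{lc}(f)|$ cancel exactly and leave
$$\mathrm{sep}(f)=\mathrm{sep}(g)\;>\;\sqrt3\;k^{-(k+2)/2}\,M(f)^{\,1-k}\;\ge\;\sqrt3\;d^{-(d+2)/2}\bigl(\sqrt{d+1}\,2^{\tau}\bigr)^{1-d},$$
and the stated inequality $\mathrm{sep}(f)>1/\bigl(2\,d^{d/2+2}(d2^{\tau}+1)^{d}\bigr)$ follows by a routine simplification, after which taking logarithms yields $\mathrm{sep}(f)>2^{-\sO(d\tau)}$. The step I expect to be delicate is precisely the non-squarefree reduction and the cancellation it hinges on: the naive alternative of passing to the primitive \emph{integer} squarefree part $g_0$ makes $\|g_0\|_\infty$ --- which can be of order $2^{\Theta(d+\tau)}$ --- enter Mahler's inequality with exponent $d-1$, yielding only the weaker $2^{-\sO(d^{2}+d\tau)}$; it is the use of the monic $g$ together with the divisibility $n\mid\mathrm{lc}(f)$ that keeps the exponent linear in $d$. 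The one point to pin down carefully along the way is that the coefficients of any monic rational factor of an integer polynomial have denominators dividing its leading coefficient.
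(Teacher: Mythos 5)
Your proof is correct, and it is worth noting that the paper itself gives no proof of this lemma---it is stated as a direct citation of Rump's Theorem~4, so there is no ``paper's route'' to compare against. What you supply is an independent re-derivation of (a slight strengthening of) Rump's bound from Mahler's 1964 inequality, and the argument is sound throughout: the reduction $\mathrm{sep}(f)=\mathrm{sep}(g)$ for the \emph{monic} rational squarefree part $g$, the Gauss-lemma content computation showing that the denominator $n$ of $g$ divides $\mathrm{lc}(f)$ (equivalently, that any monic factor in $\Q[X]$ of a monic polynomial in $\Z[X]$ lies in $\Z[X]$, applied to $f/\mathrm{lc}(f)$), the estimate $|\mathrm{disc}(g)|\ge n^{-(2k-2)}\ge|\mathrm{lc}(f)|^{-(2k-2)}$ via $|\mathrm{disc}(g_0)|\ge 1$, and the bound $M(g)\le M(f)/|\mathrm{lc}(f)|\le\|f\|_2/|\mathrm{lc}(f)|$ all check out, and the factors of $|\mathrm{lc}(f)|$ do cancel exactly in Mahler's formula, leaving $\mathrm{sep}(f)>\sqrt3\,d^{-(d+2)/2}(\sqrt{d+1}\,2^\tau)^{1-d}$, which dominates the stated $1/(2d^{d/2+2}(d2^\tau+1)^d)$ after the routine comparison, and gives $2^{-\sO(d\tau)}$ on taking logarithms. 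Your diagnosis of why the naive route through the \emph{primitive integer} squarefree part only yields $2^{-\sO(d^2+d\tau)}$ is also accurate: Mignotte's bound puts the bitsize of that polynomial at $O(d+\tau)$, and that enters Mahler with exponent $k-1=O(d)$. The only small expository gap is that you implicitly need $d\ge 2$ for the comparison $\sqrt{d+1}\le d$ in the final simplification, but $k\le 1$ is dispatched separately and $k\ge 2$ forces $d\ge 2$, so this is harmless.
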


\subsection{Computation of isolating boxes}\label{sec:boxes}

By Definition \ref{def:rur}, the RUR of an ideal $I$ defines a mapping
between the roots of a univariate polynomial and the solutions of $I$, {which} 
yields an algorithm to compute isolating boxes.
Given a RUR of the ideal $I$, $\{ f_{I, a}, f_{I, a, 1}, f_{I, a, X}, f_{I, a,
  Y} \}$, isolating boxes for the real solutions can be computed by first
computing isolating intervals for the real roots of the univariate polynomial
$f_{I, a}$ and then, evaluating the rational fractions $\frac{f_{I, a, X}}{f_{I,
    a, 1}}$ and $\frac{f_{I, a,Y}}{f_{I, a, 1}}$ by interval arithmetic.
However, for the simplicity of the proof, instead of evaluating by interval {arithmetic} each
of these fractions of polynomials, we instead compute the product of its
numerator with the inverted denominator modulo $f_{I, a}$, and then evaluate
this resulting polynomial on the isolating intervals of the real roots of $f_{I,
  a}$ (note that we obtain the same complexity bound if we directly evaluate the
fractions, but the proof is more 
 technical, although not difficult, and we
omit it here).  When these isolating intervals
are sufficiently refined, the computed boxes are necessarily disjoint and thus
isolating. The following proposition analyzes the bit complexity of this
algorithm.

\begin{proposition}\label{prop:computing-boxes}
  Given a RUR of $\ideal{P,Q}$, isolating boxes for the solutions of
  $\ideal{P,Q}$ can be computed in $\sOB(d^8+d^7\tau)$ bit operations, where $d$
  bounds the total degree of $P$ and $Q$,
  and $\tau$ bounds the bitsize of their coefficients.  The vertices of these
  boxes have bitsize in $\sO(d^3\tau)$.
\end{proposition}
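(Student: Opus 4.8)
The plan is to follow the algorithm described just before the statement and to account for its cost, the dominant contribution being the isolation of the real roots of the univariate polynomial of the RUR and the subsequent refinement of its isolating intervals.

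First I would replace $f_{I,a}$ by its squarefree part $\overline{f_{I,a}}$, which has the same real roots. By Theorem~\ref{th:rur} the four polynomials of the RUR have degree at most $d^2$ and bitsize $\sO(d^2+d\tau)$, so by Lemma~\ref{complexity:gcd} the polynomial $\overline{f_{I,a}}$ has degree at most $d^2$, bitsize $\sO(d^2+d\tau)$, and is computed with $\sOB(d^6+d^5\tau)$ bit operations; isolating its real roots then costs $\sOB((d^2)^3(d^2+d\tau))=\sOB(d^8+d^7\tau)$ by Lemma~\ref{comp:isolation}. In parallel I would compute the inverse $u$ of $f_{I,a,1}$ modulo $f_{I,a}$ — these two polynomials are coprime, because $X+aY$ being separating, $f_{I,a,1}$ does not vanish at any root of $f_{I,a}$ (cf. the proof of Proposition~\ref{prop:rur-res2}) — and then $g_X=f_{I,a,X}\,u\bmod f_{I,a}$ and $g_Y=f_{I,a,Y}\,u\bmod f_{I,a}$; by Definition~\ref{def:rur}, at each root $\gamma$ of $f_{I,a}$ the values $g_X(\gamma)=\tfrac{f_{I,a,X}}{f_{I,a,1}}(\gamma)$ and $g_Y(\gamma)=\tfrac{f_{I,a,Y}}{f_{I,a,1}}(\gamma)$ are the two coordinates of the corresponding solution. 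Using the cofactor computation behind Lemma~\ref{complexity:gcd} and one Euclidean division by the monic polynomial $f_{I,a}$, the polynomials $u$, $g_X$, $g_Y$ have degree at most $d^2$ and bitsize $\sO(d^4+d^3\tau)$ (the extra factor $d^2$ coming from the resultant-type denominator produced by the extended Euclidean algorithm), and they are obtained with $\sOB(d^8+d^7\tau)$ bit operations.

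The heart of the argument, and the step I expect to be the main obstacle, is the precision analysis. The coordinates $X(\sigma)$ and $Y(\sigma)$ of the solutions of $\ideal{P,Q}$ are among the roots of $Res_Y(P,Q)$ and $Res_X(P,Q)$, which have degree $O(d^2)$ and bitsize $\sO(d\tau)$, so by the separation bound of Lemma~\ref{lem:sepbound} any two distinct solutions differ by more than $2^{-\sO(d^3\tau)}$ in at least one coordinate; hence, as soon as the evaluation box attached to each root $\gamma$ has width below $2^{-\sO(d^3\tau)}$, the computed boxes are pairwise disjoint, hence isolating. To produce such a box I would evaluate $g_X$ and $g_Y$ by interval arithmetic on the isolating interval of $\gamma$; since $g_X,g_Y$ have degree at most $d^2$, bitsize $\sO(d^4+d^3\tau)$, and $|\gamma|\leq 2^{\sO(d^2+d\tau)}$ (Cauchy bound on $\overline{f_{I,a}}$), one gets $|g_X'|,|g_Y'|\leq 2^{\sO(d^4+d^3\tau)}$ near $\gamma$, so refining the isolating interval of $\gamma$ to $L=\sO(d^4+d^3\tau)$ bits makes the evaluation box narrower than $2^{-\sO(d^3\tau)}$.

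It then remains to add up the costs over the $\Theta(d^2)$ real roots. Refining one isolating interval of $\overline{f_{I,a}}$ to $L=\sO(d^4+d^3\tau)$ bits costs $\sOB((d^2)^2(d^2+d\tau)+d^2L)=\sOB(d^6+d^5\tau)$ by Lemma~\ref{comp:isolation}, and evaluating $g_X$ and $g_Y$ at the refined rational endpoint costs $\sOB(d^2(d^4+d^3\tau))=\sOB(d^6+d^5\tau)$ by Lemma~\ref{lem:comp:evaluation}; summing over $\Theta(d^2)$ roots gives $\sOB(d^8+d^7\tau)$, which together with the root isolation and the computation of $g_X,g_Y$ yields the announced $\sOB(d^8+d^7\tau)$. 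For the bitsize of the vertices, each coordinate of a box is a dyadic rational approximating $X(\sigma)$ or $Y(\sigma)$, whose modulus is $2^{\sO(d\tau)}$ (Cauchy bound on $Res_Y(P,Q)$, $Res_X(P,Q)$), known to precision $2^{-\sO(d^3\tau)}$, hence of bitsize $\sO(d^3\tau)$; if needed one rounds the endpoints to that precision, which does not affect disjointness. The delicate part is precisely this bookkeeping: simultaneously controlling how small the boxes must be (via the separation bound on the resultants) and how far the isolating intervals must be refined (via the bitsize of $g_X,g_Y$ and the derivative estimate), while keeping the total refinement cost within $\sOB(d^8+d^7\tau)$.
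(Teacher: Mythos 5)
Your proposal follows essentially the same route as the paper's proof: work with the squarefree (primitive) part of $f_{I,a}$ for isolation; convert the rational mapping into a polynomial one by inverting $f_{I,a,1}$ modulo $f_{I,a}$; derive the required output box width from the separation bound of the resultants $Res_Y(P,Q)$ and $Res_X(P,Q)$; back-propagate this through the interval-arithmetic width-expansion of the mapping polynomials (degree $O(d^2)$, bitsize $\sO(d^4+d^3\tau)$, roots bounded by $2^{\sO(d^2+d\tau)}$) to get a refinement precision $L\in\sO(d^4+d^3\tau)$; then sum up isolation, refinement and evaluation costs to $\sOB(d^8+d^7\tau)$; and finally round the endpoints to bitsize $\sO(d^3\tau)$. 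The only noteworthy stylistic difference is that the paper uses an adaptive refinement loop with a box-disjointness test (so no explicit $\varepsilon$ is needed to run the algorithm), whereas you refine directly to the worst-case precision; both give the same bound. One small thing you should make explicit is the step guaranteeing that rounding the endpoints to $\sO(d^3\tau)$ bits does not break disjointness: the paper handles this by over-refining slightly (width below $\tfrac14 2^{-\varepsilon}$) so that each box can be enlarged by $\tfrac14 2^{-\varepsilon}$ on either side while staying disjoint, and then choosing a short rational in the enlarged margin — your phrase ``does not affect disjointness'' glosses over exactly this bookkeeping, which you did flag as the delicate part.
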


\begin{proof}
  For every real solution $\alpha$ of $I=\langle P, Q \rangle$, let $J_{X,
    \alpha} \times J_{Y, \alpha}$ be a box containing it. A sufficient
  condition for these boxes to be isolating is that the width of every interval
  $J_{X, \alpha}$ and $J_{Y, \alpha}$ is less than half the separation bound of
  the resultant of $P$ and $Q$ with respect to $X$ and $Y$, respectively.
  Such a resultant has degree at most $2d^2$ and bitsize in $\sO(d\tau)$ 
  by \cite[Proposition 8.46]{BPR06}.  Lemma~\ref{lem:sepbound} thus yields
  a lower bound of $2^{-\varepsilon}$ with $\varepsilon$ in $\sO(d^3\tau )$ on
  the separating bound of such a resultant.  It is thus sufficient
  to compute, for every $\alpha$, a box $J_{X, \alpha} \times J_{Y, \alpha}$
  that contains $\alpha$ and such that the widths of these intervals are
   smaller than half of $2^{-\varepsilon}$.
 For clarity and technical reasons, we define
     $\varepsilon'=\varepsilon+2$.
   In fact, an explicit value of $\varepsilon$ is not needed to compute
    isolating boxes since the algorithm uses adaptive refinements of the boxes
    and a test of box disjointness. On the other hand, an explicit value of
    $\varepsilon$ will be used to reduce the bitsize of the box endpoints and an
    asymptotic estimate will be used for the complexity analysis. 
  More precisely, the algorithm proceeds as follows. First, the real roots of
  ${f_{I, a}}$ are isolated. Then, we refine these intervals and, during the
  refinement, we routinely evaluate the polynomials of the mapping at these
  intervals, and we stop when all the resulting boxes
  are pairwise disjoint. It is of course critical not to evaluate the
  polynomials of the mapping to often; for every real root of ${f_{I, a}}$, we
  perform these evaluations every time the number of identical consecutive {first} bits
  of the two interval boundaries doubles or, in other words, every time the
  width of the interval becomes smaller than $2^{-2^k}$ for some positive
  integer $k$.

  According to Definition \ref{def:rur}, given a RUR $\{ f_{I, a}, f_{I, a, 1},
  f_{I, a, X}, f_{I, a, Y} \}$ of $I$, the mapping $\gamma\mapsto
  \left(\frac{f_{I,a,X}}{f_{I,a,1}}(\gamma),\frac{f_{I,a,Y}}{f_{I,a,1}}(\gamma)\right)$
  defines a one-to-one correspondence between the real roots of $f_{I, a}$ and
  those of~$I$. Thus every isolating interval $J_\gamma$ of the real roots of
  $f_{I, a}$ is mapped through this mapping to a box that contains the
  corresponding solution of~$I$.  We first show how to modify this rational
  mapping into a polynomial one.
  Second, we bound, in terms of the width of $J_\gamma$, the side length of the
  box obtained by interval arithmetic as the image of $J_\gamma$ through the
  mapping. We will then deduce an upper bound on the width of $J_\gamma$ that
  ensures that the side length of its box image is less than
  $2^{-\varepsilon'}$. 
This thus  gives a worst-case refinement precision on the
  isolating intervals of $f_{I, a}$ for the boxes to be disjoint. We then
  analyze the complexity of the proposed algorithm.

  \medskip\noindent\emph{Polynomial mapping.}\quad By
  Proposition~\ref{prop:rur-res2}, the polynomials $f_{I, a}$ and $f_{I, a, 1}$
  are coprime and thus $f_{I, a, 1}$ is invertible modulo $f_{I, a}$.  The
  rational mapping can thus be transformed into a polynomial one by replacing
  $\frac{1}{f_{I,a,1}}$ by $\frac{1}{f_{I,a,1}}\mod f_{I, a}$. 
Since
  $\frac{1}{f_{I,a,1}}$ and its inverse modulo $f_{I, a}$ coincide when $f_{I,
    a}$ vanishes (by Bézout's identity), this polynomial mapping still maps the
  real roots of $f_{I, a}$ to those of $I$.

  This polynomial mapping can be computed in $\sOB(d^6+d^5\tau)$ bit operations and these
  polynomials have degree
  less than $4d^2$ and bitsize in $\sO(d^4+d^3\tau)$. Indeed, the bit complexity of computing the
  inverse $\frac{1}{f_{I,a,1}}$ modulo $f_{I, a}$ is soft linear in the square of their maximum
  degree times their maximum bitsize \cite[Corollary
  11.11(ii)]{vzGGer},\footnote{{\cite[Corollary 11.11(ii)]{vzGGer} applies because this
      inverse is the cofactor of $f_{I,a,1}$ in the last line of the extended Euclidean algorithm
      corresponding to the resultant of $f_{I,a,1}$ and $f_{I,a}$. Note that this assumes that
      $f_{I,a,1}$ and $f_{I,a}$ have integer coefficients but this is not an issue because, by
      Proposition~\ref{prop:rur-size}, all polynomials of the RUR can be transformed into integer
      polynomials with the same asymptotic bitsize by multiplying them by one and the same
      integer.}}
  which yields a complexity of $\sOB((d^2)^2 (d^2+d\tau))$ by
  Theorem~\ref{th:rur}.  The bitsize of this inverse is soft linear in the
  product of their maximum degree and maximum bitsize \cite[Corollary
  6.52]{vzGGer}, that is $\sO(d^2 (d^2+d\tau))$.  Furthermore, the product of
  this inverse and of $f_{I, a, X}$ or $f_{I, a, Y}$ can also be done with a bit
  complexity that is soft linear in the product of their maximum degree and
  maximum bitsize \cite[Corollary 8.27]{vzGGer}, that is in
  $\sOB(d^2(d^4+d^3\tau))$. This concludes the proof of the claim since the
  degree of the inverse modulo $f_{I, a}$ is less than that of $f_{I, a}$ and
  all the polynomials of the RUR have degrees at most $d^2$
  by Theorem~\ref{th:rur}.

  \medskip\noindent\emph{Width expansion through interval arithmetic
    evaluation.}\quad
  We recall a standard straightforward property of interval arithmetic for
  polynomial evaluation.  We consider here exact interval arithmetic, that is,
  the arithmetic operations on the interval boundaries are considered exact.
  Let $J = [ a, b] $ be an interval with rational endpoints such that $\max ( |
  a |, | b | ) \leqslant 2^{\sigma}$ and let $f \in \mathbbm{Z} [ T ]$ be a
  polynomial of degree $d_f$ with coefficients of bitsize $\tau_f$. Denoting the
  width of $J$ by $w ( J ) = | b - a |$, $f(J)$ can be evaluated by interval
  arithmetic into an interval $f_\square(J)$ whose width is at most $2^{\tau_f +
    d_f \sigma} d_f^2 w ( J )$ (see e.g. \cite[Lemma
  8]{isotop-mcs-10}).\footnote{For completeness, we recall the proof which is
    rather straightforward.  We apply basic formulas for the sum and the product
    of intervals \cite[Theorem 9, p.15]{ga-jh-iic-83}. For any real number $a$
    and integer $n\geq 1$,\ $w(A\pm B)=w(A)+w(B),$\ \ $w(aA)=|a|w(A),$\ \
    $w(AB)\leq w(A)|B| + |A|w(B), $ and $w(A^n)\leq n|A|^{n-1}w(A)$.
  Writing   $f(T)=\sum_{i=0}^{d_f} c_iT^i$ with $|c_i|\leq 2^{\tau_f}$, we have
  \begin{eqnarray*}
    w(f_\square(J)) & =& \sum_{i=1}^{d_f} |c_i|w(J^i)
    \quad \leq\quad  2^{\tau_f} \sum_{i=1}^{d_f} i|J|^{i-1}w(J)
    \quad \leq\quad   2^{\tau_f} w(J)d_f \sum_{i=1}^{d_f} |J|^{i-1}\\
    & \leq & 2^{\tau_f} w(J)d_f^2 \max(1,|J|^{d_f-1})
    \quad \leq\quad    2^{\tau_f} w(J)d_f^2 2^{d_f\sigma}.
  \end{eqnarray*}
 }
In other words, if $w ( J ) \leq 2^{-  \varepsilon'   -
    \tau_f - d_f \sigma- 2 \log  d_f }$, then $w ( f_\square( J ) ) \leq 2^{-  \varepsilon' }$.

  We now apply this property on the polynomials of the mapping evaluated on
  isolating intervals of $f_{I, a}$. We denote by $d_f$ and $\tau_f$ the maximum
  degree and bitsize of the polynomials of the mapping; as shown
  above
  $d_f<4d^2$ and $\tau_f\in\sO(d^4+d^3\tau)$.  The polynomial $f_{I, a}$ has
  bitsize $\tau_{f_{I,a}}$ in $\sO(d^2+d\tau)$ (Theorem~\ref{th:rur}), thus, by
  Cauchy's bound (see e.g.  \cite[\S 6.2]{Yap-2000}), the maximum absolute value
  of its roots is smaller than $1+ 2^{2\tau_{f_{I,a}}}$.  Considering intervals
  of isolation for $f_{I, a}$ whose widths are bounded by a constant, we thus
  have that the maximum absolute value of the boundaries of the isolating
  intervals are smaller than $2^\sigma$ with $\sigma=\sO(d^2+d\tau)$.
  Now, consider any isolating interval of $f_{I, a}$ of width less than $2^{-
    \varepsilon' - \tau_f - d_f \sigma- 2 \log d_f }$.  The above property
  implies that we can evaluate by interval arithmetic the polynomials of the
  mapping on any such intervals and obtain an interval of width less than
  $2^{-\varepsilon'}$.  In other words, the worst-case refinement precision of
  the isolating intervals of $f_{I, a}$ for the boxes to be disjoint is
  $L=\varepsilon' + \tau_f + d_f \sigma+ 2 \log d_f$. In addition, since
  $\varepsilon'$ is in $\sO(d^3\tau)$, $L$ is in $\sO(d^4+d^3\tau)$.

 \medskip\noindent\emph{Analysis of the algorithm.}\quad For isolation
  and refinement, we consider the polynomial $\overline{pp(f_{I, a})}$, instead
  of $f_{I, a}$, which is also of degree bounded by $d^2$ and bitsize in
  $\sO(d^2+d\tau)$. Indeed, Proposition~\ref{prop:rur-size} implies that the
  integer polynomial $pp(f_{I, a})$ has bitsize in $\sO(d^2+d\tau)$ and
  Lemma~\ref{complexity:gcd} yields that its squarefree part (which the gcd-free
  part of itself and its derivative) is of the same bitsize and can be computed
  in $\sO(d^6+d^5\tau)$.  According to Lemma~\ref{comp:isolation}, the first
  step of the algorithm, the isolation of the roots of $\overline{pp(f_{I, a})}$
  can be done in $\sOB(d^8+d^7\tau)$ bit operations.
  Then, according to the above discussion, these roots will be refined to a
  maximum precision  $L=\sO(d^4+d^3\tau)$. Again, Lemma~\ref{comp:isolation}
  yields a complexity of $\sOB((d^2)^3(d^2+d\tau) + (d^2)^2L)=\sOB(d^8+d^7\tau)$
  for all these refinements.

  It remains to analyze the cost of the evaluations of the mapping and the cost
    of the box-disjointness tests.
For a given root, an evaluation of the polynomials of the
  mapping is performed each time its isolating interval precision is doubled,
  the number of evaluations is thus logarithmic in the maximum precision reached, {that is}
  $L$. One evaluation by interval arithmetic of the polynomials of the mapping,
  which have degree $O(d^2)$ and bitsize $\sO(d^4+d^3\tau)$, on one isolating
  intervals whose endpoints have bitsize at most $L\in \sO(d^4+d^3\tau)$ can be
  done in $\sOB(d^2(d^4+d^3\tau))$ bit operations by
  Lemma~\ref{lem:comp:evaluation} and the resulting intervals have endpoints of
  bitsize in $\sO(d^2(d^4+d^3\tau))$. The cost of the $O(\log L)$ evaluations
  for the $O(d^2)$ roots is then in $\sOB(d^8+d^7\tau)$.  Moreover, the
  algorithm requires testing $O(\log L)$ times whether some of the $O(d^2)$ boxes
  intersect, which can be done, in total, with $O(\log L)$ times $\sO(d^2)$
  arithmetic operations (see e.g. \cite[\S 3]{ze-fsbi-02}) 
  and thus with $\sOB(d^8+d^7\tau)$ bit operations since the vertices of the box
  vertices have bitsize in $\sO(d^6+d^5\tau)$.

  Therefore, we can compute isolating boxes for the solutions of $\ideal{P,Q}$
  in $\sOB(d^8+d^7\tau)$ bit operations, and the box vertices have bitsize in
  $\sOB(d^6+d^5\tau)$.

\medskip\noindent\emph{Bitsize of the box vertices.} \quad We finally show how
to compute, from the isolated boxes with vertices of bitsize in
$\sO(d^6+d^5\tau)$, some larger isolating boxes whose vertices have bitsize in
$\sO(d^3\tau)$. The method is identical for the $X$ or the $Y$-coordinates of
the boxes, thus we only consider the $x$-coordinates.  {We iteratively
  refine the boxes as describe above except that, once none of the boxes
  intersect, we carry on with the iterative refinement of the boxes until the
  distance in $X$ between any two boxes that do not overlap in $X$ is larger
  than $\frac12 2^{-\varepsilon}$ where $\varepsilon$, as defined at the
  beginning of the proof,
  is such that the distance between any two roots of the
  resultant of $P$ and $Q$ with respect to $X$ is at least $2^{-\varepsilon}$;
  we use here an explicit value for $\varepsilon$ which is given by
  Lemma~\ref{lem:sepbound}. 
  On the other hand, if we were to refine all the boxes until their width are
  less than $2^{-\varepsilon'}=\frac14 2^{-\varepsilon}$, the distance between
  any two boxes that do not overlap in $X$ would be ensured to be larger than
  $\frac12 2^{-\varepsilon}$. Hence the above analysis of the algorithm still
  applies since we considered that all boxes could be refined until their width
  (and height) do not exceed $2^{-\varepsilon'}$.  }

Now, for every box, all the other boxes that do not overlap in $X$ are at
distance more than $\frac12 2^{-\varepsilon}$ in $X$ (before enlargement), so
the considered box can be enlarged in $X$ using coordinates in intervals of
length at least $\frac{1}{4}2^{-\varepsilon}$ on the left and on the right sides
of the box. We conclude the argument by noting that, given any such interval
$[a,b]$ of width at least $2^{-\varepsilon'}$ with $\varepsilon'=\varepsilon+2
\in\sO(d^3\tau)$ and such that $|a|$ and $|b|$ are smaller than $2^{\sigma}$
with $\sigma=\sO(d^2+d\tau)$ (by Cauchy bound, as noted above), we can easily
compute in that interval a rational of bitsize at most
$\varepsilon'+\sigma\in\sO(d^3\tau)$.\footnote{A rational of bitsize at most
  $\varepsilon'+\sigma$ can be constructed as follows. We can assume without
  loss of generality that $a$ and $b$ are both positive since the case where
  they are both negative is symmetric and, otherwise, the problem is
  trivial. Let $q_k$ be the truncation of $b$ after the $k$-th digits of the
  mantissa, i.e.  $q_k=\lfloor b2^k \rfloor 2^{-k}$, and let $k_1$ be the
  smallest nonnegative integer such that $q_{k_1} \geq a$. By construction
  $q_{k_1} \in [a,b]$ and we prove that its bitsize is at most
  $\varepsilon'+\sigma$.
  If $k_1=0$, $q_{k_1} =\lfloor b \rfloor\leq 2^\sigma$ thus $q_{k_1}$ has
  bitsize at most $\sigma$. Otherwise, with $k_0=k_1-1$, we have $q_{k_0} < a$
  which implies that $b-q_{k_0} > b-a \geq 2^{-\varepsilon'}$. On the other
  hand, $b-q_{k_0} = 2^{-k_0}(b2^{k_0}-\lfloor b2^{k_0} \rfloor) < 2^{-k_0}$,
  thus $2^{-\varepsilon'} <2^{-k_0} $ and $ \varepsilon' > k_0 $. It follows
  that the bitsize of $q_{k_1}$, which is $k_1$ plus the bitsize of $\lfloor b
  \rfloor$, is less than $\varepsilon'+1$ plus $\sigma$.}
\end{proof}

{
\begin{remark}
It is straightforward that the above  proof and proposition also hold if a parameterization of
Gonzalez-Vega and El Kahoui \cite{VegKah:curve2d:96} is given  instead of a RUR.
\end{remark}
}
\subsection{{Sign of a polynomial at the  solutions of a system}}\label{sec:sign_at}

This section addresses the problem of computing the sign ($+, -$ or 0) of a
given polynomial $F$ at the solutions of a bivariate system defined by two
polynomials $P$ and $Q$.  We consider in the following that all input
polynomials, $P$, $Q$ and $F$ are in $\Z[X,Y]$, have degree at most $d$ and
coefficients of bitsize at most $\tau$. We assume without loss of generality
that the bound $d$ is \emph{even}. Recall that, as mentioned in the
  introduction, the best known complexity for this problem is to our knowledge
  $\sOB(d^{10}+d^9\tau)$ for the sign at one real solution and
  $\sOB(d^{12}+d^{11}\tau)$ for the sign at all the solutions (see \cite[Th. 14
  \& Cor. 24]{det-jsc-2009} with the improvement of
  \cite{sagraloff2012issacNewDsc} for the root isolation).  We first describe a
  naive  RUR-based {\em sign\_at} algorithm for computing the sign at one real solution of
  the system, which runs in $\sOB(d^9+d^8\tau)$ time. Then, using properties of
  generalized Sturm sequences, we
  analyze 
  a more efficient algorithm that runs in $\sOB(d^8+d^7\tau)$ time. We also show
that  the sign of $F$  at the $O(d^2)$ solutions of the system can be computed in  only $O(d)$ times that for one real solution.

Once the RUR $\{f_{I,a},f_{I,a,1},f_{I,a,X},f_{I,a,Y}\}$ of $I=\ideal{P,Q}$ is
computed, we can use it to translate a bivariate sign computation into a
univariate sign computation.  Indeed, let $F(X,Y)$ be the polynomial to be
evaluated at the solution $(\alpha,\beta)$ of $I$ that is the image of the root
$\gamma$ of $f_{I,a}$ by the RUR mapping.  We first define the polynomial $f_F(T)$
roughly as the numerator of the rational fraction obtained by substituting
$X=\frac{f_{I, a, X} (T)}{f_{I, a, 1} (T)}$ and $Y=\frac{f_{I, a, Y} (T)}{f_{I,
    a, 1} (T)}$ in the polynomial $F(X,Y)$, so that the sign of
$F(\alpha,\beta)$ is the same as that of $f_F(\gamma)$.

\newcommand{\dF}{d}
\begin{lemma}\label{lem:f_F}
  The primitive part\footnote{{See definition in
      Section~\ref{sec:size-rur}.}} of $f_F(T)=f_{I,a,1}^{\dF}(T)F(T-aY,Y)$,
  with $Y=\frac{f_{I,a,Y}(T)}{f_{I,a,1}(T)}$, has degree $O(d^3)$, bitsize in
  $\sO(d^3+d^2\tau)$, and it can be computed with $\sO_B(d^7+d^6\tau)$ bit
  operations.
  The sign of $F$ at a real solution of $I=\ideal{P,Q}$ is equal to the sign of
  $pp(f_F)$ at the corresponding root of $f_{I,a}$ via the mapping of the RUR.
\end{lemma}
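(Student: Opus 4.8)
The plan is to establish, in this order, that $f_F$ is genuinely a polynomial (together with the degree and bitsize bounds), then the cost of computing it, and finally the sign correspondence. Since $F$ has total degree at most $d$, substituting $X=T-aY$ gives $F(T-aY,Y)\in\Z[T,Y]$ (recall $a\in\Z$) of degree at most $d$ in each of $T$ and $Y$; I would write $F(T-aY,Y)=\sum_{n=0}^{d}g_n(T)\,Y^n$ with $g_n\in\Z[T]$ of degree at most $d$, and — since $a$ has bitsize $\sO(1)$, so each $(T-aY)^i$ has integer coefficients of bitsize $\sO(d)$ — each $g_n$ of bitsize $\sO(d+\tau)$, which is exactly the estimate for the sheared polynomials in Lemma~\ref{lem:complexity:shear}. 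Substituting $Y=f_{I,a,Y}(T)/f_{I,a,1}(T)$ and multiplying by $f_{I,a,1}^{d}(T)$ then clears every denominator, precisely because the $Y$-degree is at most $d$, yielding
\[
f_F(T)=\sum_{n=0}^{d} g_n(T)\,f_{I,a,1}^{\,d-n}(T)\,f_{I,a,Y}^{\,n}(T)\in\Q[T].
\]

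\emph{Degree and bitsize.} By Theorem~\ref{th:rur}, $f_{I,a,1}$ and $f_{I,a,Y}$ have degree at most $d^2$ and bitsize $\sO(d^2+d\tau)$, and by Proposition~\ref{prop:rur-size} (with $\tau_a=\sO(1)$) there is an integer $c$ of bitsize $\sO(d^2+d\tau)$ with $c\,f_{I,a,1},\,c\,f_{I,a,Y}\in\Z[T]$. Set $\widehat f_F:=c^{d}f_F=\sum_{n=0}^{d} g_n\,(c f_{I,a,1})^{d-n}(c f_{I,a,Y})^{n}\in\Z[T]$. Each of its $O(d)$ summands has degree $\le d+(d-n)d^2+n d^2=O(d^3)$ and bitsize $\sO\big((d+\tau)+(d-n)(d^2+d\tau)+n(d^2+d\tau)\big)=\sO(d^3+d^2\tau)$, so $\widehat f_F$ has degree $O(d^3)$ and bitsize $\sO(d^3+d^2\tau)$. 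Since $d$ is even, $c^{d}>0$, hence $pp(f_F)=pp(\widehat f_F)$, and by Lemma~\ref{lem:primpart}(iii) this primitive part has bitsize $\le\tau_{\widehat f_F}=\sO(d^3+d^2\tau)$ and degree $O(d^3)$, as claimed.

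\emph{Complexity.} Computing the $g_n$'s costs $\sO_B(d^4+d^3\tau)$, as in Lemma~\ref{lem:complexity:shear}. Next I would build iteratively the two power sequences $(c f_{I,a,1})^{k}$ and $(c f_{I,a,Y})^{k}$ for $k=0,\dots,d$: each of the $O(d)$ products multiplies a polynomial of degree $O(d^3)$ and bitsize $\sO(d^3+d^2\tau)$ by one of degree $\le d^2$ and bitsize $\sO(d^2+d\tau)$, which costs $\sO_B(d^3(d^3+d^2\tau))=\sO_B(d^6+d^5\tau)$ by fast multiplication \cite[Corollary 8.27]{vzGGer}, i.e.\ $\sO_B(d^7+d^6\tau)$ for both sequences. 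Forming each $g_n(c f_{I,a,1})^{d-n}(c f_{I,a,Y})^{n}$ takes two more such products, and summing the $O(d)$ results is again $\sO_B(d^7+d^6\tau)$. Finally $pp(\widehat f_F)$ is obtained by dividing $\widehat f_F$ by the gcd of its $O(d^3)$ integer coefficients of bitsize $\sO(d^3+d^2\tau)$, i.e.\ $O(d^3)$ integer gcds in bitsize $\sO(d^3+d^2\tau)$, which is $\sO_B(d^6+d^5\tau)$. The total is $\sO_B(d^7+d^6\tau)$.

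\emph{Sign correspondence, and the main obstacle.} Let $\sigma=(\alpha,\beta)\in V(I)$ be real and $\gamma$ its associated root of $f_{I,a}$, so that $\gamma=\alpha+a\beta$ and $\beta=f_{I,a,Y}(\gamma)/f_{I,a,1}(\gamma)$ by Definition~\ref{def:rur}. Since $f_{I,a,1}$ is coprime with $f_{I,a}$ (Proposition~\ref{prop:rur-res2}), $f_{I,a,1}(\gamma)\neq0$, so evaluating the displayed identity for $f_F$ at $T=\gamma$ gives $f_F(\gamma)=f_{I,a,1}(\gamma)^{d}\,F(\gamma-a\beta,\beta)=f_{I,a,1}(\gamma)^{d}\,F(\sigma)$; because $d$ is even, $f_{I,a,1}(\gamma)^{d}>0$, and $pp(f_F)$ is a positive rational multiple of $f_F$, so $\tmop{sign}\big(pp(f_F)(\gamma)\big)=\tmop{sign}\big(F(\sigma)\big)$. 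The only genuinely delicate points are verifying $\deg_Y F(T-aY,Y)\le d$, which is what makes $f_F$ a polynomial, and controlling the bitsize of the $d$-fold products; both are handled above by passing to the integer model $\widehat f_F$ and tracking bitsizes multiplicatively through the $O(d)$ multiplications.
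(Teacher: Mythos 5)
Your proof is correct and follows essentially the same approach as the paper's: expand $F(T-aY,Y)=\sum_n g_n(T)Y^n$, form $f_F=\sum_n g_n\,f_{I,a,1}^{d-n}f_{I,a,Y}^n$, clear denominators using the integer from Proposition~\ref{prop:rur-size}, bound degree and bitsize termwise, count the cost of building the power sequences and the coefficient gcd, and conclude the sign correspondence from $f_F(\gamma)=f_{I,a,1}(\gamma)^d F(\sigma)$ with $d$ even and $f_{I,a,1}(\gamma)\ne0$. The only stylistic difference is that you justify $f_{I,a,1}(\gamma)\ne0$ via coprimality from Proposition~\ref{prop:rur-res2} rather than directly from Definition~\ref{def:rur}; both are valid and used elsewhere in the paper.
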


\begin{proof}
  We first compute the polynomial $F(T-aY,Y)$ in the form $\sum_{i = 0}^{\dF}
  a_i (T) Y^i$. Then, $f_F(T)$ is equal to $\sum_{i = 0}^{\dF} a_i (T) f_{I, a,
    Y} (T)^i f_{I, a, 1} (T)^{\dF - i}$. Consequently, computing an expanded
  form of $f_F(T)$ can be done by computing the $a_i(T)$, the powers $f_{I, a,
    Y}(T)^i$ and $f_{I, a, 1}(T)^i$, and their appropriate products and sum.

  \medskip\noindent\emph{Computing $a_i(T)$.} \quad According to
  Lemma~\ref{lem:complexity:shear}, $P(T-SY,Y)$ can be expanded with
  $\sOB(d^4+d^3\tau)$ bit operations and its bitsize is in $\sO(d+\tau)$. These
  bounds also apply to $F(T-SY,Y)$ and we deduce $F(T-aY,Y)$ by substituting $S$
  by $a$. Writing $F(T-SY,Y)=\sum_{i=0}^{\dF}{f_i(T,Y)S^i}$, the computation of
  $F(T-aY,Y)$ can be done by computing and summing the $f_i(T,Y)a^i$.  Since $a$
  has bitsize in $O(\log d)$ by hypothesis, $a ^i$ has bitsize in $O(\dF\log
  d)\subseteq \sO(d)$, and computing all the $a^i$ can be done with $\sO_B(d^2)$
  bit operations. For each $a^i$, computing $f_i(T,Y)a^i$ can be done with
  $O(d^2)$ multiplications between integers of bitsize in $\sO(d+\tau)$, and
  thus with $\sOB(d^2(d+\tau))$ bit operations.  Thus, computing all the
  $f_i(T,Y)a^i$ can be done with $\sOB(d^3(d+\tau))$ bit operations, and
  summing, for every of the $O(\dF^2)$ monomials in $(T,Y)$, $\dF$ coefficients
  (corresponding to every $i$) of bitsize in $\sO(d+\tau)$ can also be done with
  $\sOB(d^3(d+\tau))$ bit operations, in total.  It follows that, $F(T-aY,Y)$
  and thus all the $a_i(T)$ can be computed with $\sOB(d^4+d^3\tau)$ bit
  operations.

  \medskip\noindent\emph{Computing $f_{I, a, Y}(T)^i$ and $f_{I, a, 1}(T)^i$.}  \quad $f_{I, a,
    Y}(T)$ has degree $\OO(d^2)$ and bitsize $\sO(d^2+d\tau)$ (by Theorem~\ref{th:rur}), thus $f_{I,
    a, Y}(T)^i$ has degree in $\OO(d^3)$ and bitsize in $\sO(d^3+d^2\tau)$. Computing all the $f_{I,
    a, Y}(T)^i$ can be done with $\OO(\dF)$ multiplications between these polynomials.  Every
  multiplication can be done with a bit complexity that is soft linear in the product of the maximum
  degrees and maximum bitsizes \cite[Corollary 8.27]{vzGGer}, thus all the multiplications can be
  done with $\sOB(d^4(d^3+d^2\tau))$ bit operations in total.  It follows that all the $f_{I, a,
    Y}(T)^i$, and similarly all the $f_{I, a,1}(T)^i$, can be computed using $\sOB(d^7+d^6\tau)$ bit
  operations and their bitsize is in $\sO(d^3+d^2\tau)$.

  \medskip\noindent\emph{Computing $f_F(T)$.}  \quad Computing $a_i (T) f_{I, a, Y}(T)^i f_{I, a,
    1}(T)^{\dF - i}$, for $i=0,\ldots,\dF$, amounts to multiplying $\OO(\dF)$ times, univariate
  polynomials of degree $\OO(d^3)$ and bitsize $\sO(d^3+d^2\tau)$, which can be done, similarly as
  above, with $\sO(d^7+d^6\tau)$ bit operations. Finally, their sum is the sum of $\dF$ univariate
  polynomials of degree $\OO(d^3)$ and bitsize $\sO(d^3+d^2\tau)$, which can also be computed within
  the same bit complexity. Hence, $f_F(T)$ can be computed with $\sOB(d^7+d^6\tau)$ bit operations
  and its coefficients have bitsize in $\sO(d^3+d^2\tau)$.

  \medskip\noindent\emph{Primitive part of $f_F(T)$.} \quad
  According to Proposition \ref{prop:rur-size}, there exists an integer $r$ of
  bitsize in $\sO(d^2+d\tau)$ such that its product with the RUR polynomials
  gives polynomials in $\Z[T]$ of bitsize in $\sO(d^2+d\tau)$. Consider the
  polynomial $r^df_F(T)=(rf_{I,a,1}(T))^dF(T-aY,Y)$ with
  $Y=\frac{rf_{I,a,Y}(T)}{rf_{I,a,1}(T)}$. This polynomial has its coefficients
  in $\Z$ since $rf_{I,a,Y}(T)$ and $rf_{I,a,1}(T)$ are in $\Z[T]$. Moreover,
  since $rf_{I,a,Y}(T)$ and $rf_{I,a,1}(T)$ have bitsize in $\sO(d^2+d\tau)$,
   $r^df_F(T)$ can be computed, similarly as above,  in $\sOB(d^7+d^6\tau)$
  and it has bitsize in $\sO(d^3+d^2\tau)$. 
The primitive  part of $f_F(T)$ has also bitsize  in $\sO(d^3+d^2\tau)$ (since it is smaller
than or equal to that of $r^df_F(T)$) and it can be computed from $r^df_F(T)$ with $\sOB(d^3(d^3+d^2\tau))$ bit operations
by computing $O(d^3)$ $\gcd$ of coefficients of bitsize $\sO(d^3+d^2\tau)$ \cite[\S 2.A.6]{Yap-2000}.

  \medskip\noindent\emph{Signs of $F$ and $f_F$.} \quad It remains to show that the sign of $F$ at a
  real solution of $I=\ideal{P,Q}$ is the sign of $f_F$ at the corresponding root of $f_{I,a}$ via
  the mapping of the RUR.  By Definition \ref{def:rur}, there is a one-to-one mapping between the
  roots of $f_{I,a}$ and those of $I=\ideal{P,Q}$ that maps a root $\gamma$ of $f_{I,a}$ to a
  solution
  $(\alpha,\beta)=(\frac{f_{I,a,X}(\gamma)}{f_{I,a,1}(\gamma)},\frac{f_{I,a,Y}(\gamma)}{f_{I,a,1}(\gamma)})$
  of $I$ such that $\gamma=\alpha+a\beta$ and $f_{I,a,1}(\gamma)\neq 0$.  For any such pair of
  $\gamma$ and $(\alpha,\beta)$,
  $f_F(\gamma)=f_{I,a,1}^{\dF}(\gamma)F(\gamma-a\frac{f_{I,a,Y}(\gamma)}{f_{I,a,1}(\gamma)},\frac{f_{I,a,Y}(\gamma)}{f_{I,a,1}(\gamma)})$
  by definition of $f_F(T)$, and thus $f_F(\gamma)=f_{I,a,1}^{\dF}(\gamma)F(\alpha,\beta)$. It
  follows that $f_F(\gamma)$ and $F(\alpha,\beta)$ have the same sign since $f_{I,a,1}(\gamma)\neq
  0$ and $\dF$ is even by hypothesis.
\end{proof}

\paragraph{Naive algorithm.}
The knowledge of a RUR $\{f_{I,a},f_{I,a,1},f_{I,a,X},f_{I,a,Y}\}$ of
$I=\ideal{P,Q}$ yields a straightforward algorithm for computing the
sign of $F$ at a real solution of $I$. Indeed, it is sufficient to
isolate the real roots of $f_{I,a}$, so that the intervals are also
isolating for $f_{I,a}f_F$, and then to evaluate the sign of $\overline{f_F}$ at
the endpoints of these isolating intervals. We analyze the complexity
of this straightforward algorithm before describing our more subtle
and more efficient algorithm. We provide this analysis for several
reasons: first it answers a natural question, second it shows that
even a {RUR-based} naive algorithm performs better than the state of the art.

\begin{lemma}\label{lem:sign_at_rur1}
Given a RUR $\{f_{I,a},f_{I,a,1},f_{I,a,X},f_{I,a,Y}\}$ of $I=\ideal{P,Q}$ (satisfying
the bounds of Theorem~\ref{th:rur})  and an isolating
interval for   a real root $\gamma$ of $f_{I,a}$,  the sign of $F$ at the real solution of $I$
that corresponds to $\gamma$ can be computed with $\sO_B(d^9+ d^8\tau)$ bit operations.
\end{lemma}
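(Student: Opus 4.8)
The plan is to reduce, via Lemma~\ref{lem:f_F}, the sign of $F$ at the solution of $I$ associated with $\gamma$ to the sign of a univariate polynomial at the real algebraic number $\gamma$, and then to determine that sign by refining the given isolating interval below the relevant worst-case root-separation bound and reading off signs at its endpoints.

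First, using Lemma~\ref{lem:f_F} I would compute $pp(f_F)$, which has degree $O(d^3)$ and bitsize in $\sO(d^3+d^2\tau)$, in $\sOB(d^7+d^6\tau)$ bit operations; by that lemma the sign of $F$ at the solution $(\alpha,\beta)$ associated with $\gamma$ is the sign of $pp(f_F)(\gamma)$. Next I would compute the squarefree part $\overline{f_F}$ of $pp(f_F)$: by Lemma~\ref{complexity:gcd} it has degree $O(d^3)$, bitsize in $\sO(d^3+d^2\tau)$, and is obtained in $\sOB((d^3)^2(d^3+d^2\tau))=\sOB(d^9+d^8\tau)$ bit operations. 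I would also compute the squarefree part $\overline{pp(f_{I,a})}$ of the integer polynomial $pp(f_{I,a})$ (which exists with bitsize in $\sO(d^2+d\tau)$ by Proposition~\ref{prop:rur-size}); this polynomial has degree at most $d^2$ and is obtained in $\sOB(d^6+d^5\tau)$ bit operations as in the proof of Proposition~\ref{prop:computing-boxes}, and $\gamma$ is one of its simple roots, isolated by the given interval $J_\gamma$.

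The product $\overline{pp(f_{I,a})}\cdot\overline{f_F}\in\Z[T]$ has degree $O(d^3)$ and bitsize $\sO(d^3+d^2\tau)$, so by Lemma~\ref{lem:sepbound} any two of its distinct complex roots are more than $2^{-L}$ apart with $L\in\sO(d^6+d^5\tau)$. I would refine $J_\gamma$, viewed as an isolating interval of the squarefree polynomial $\overline{pp(f_{I,a})}$ (re-isolating its roots first if needed, in $\sOB((d^2)^3(d^2+d\tau))=\sOB(d^8+d^7\tau)$), until its width drops below $2^{-L}$; by Lemma~\ref{comp:isolation} this costs $\sOB((d^2)^2(d^2+d\tau)+d^2L)=\sOB(d^8+d^7\tau)$ bit operations, and the endpoints $a',b'$ of the refined interval are rationals of bitsize $\sO(L)=\sO(d^6+d^5\tau)$. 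Since $\gamma\in J_\gamma$ is a root of the above product and $w(J_\gamma)<2^{-L}$, the interval $J_\gamma$ contains no root of that product, hence no root of $\overline{f_F}$ (equivalently of $f_F$), other than possibly $\gamma$ itself.

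Finally, by Lemma~\ref{lem:comp:evaluation}, evaluating $\overline{f_F}$ at $a'$ and at $b'$ costs $\sOB(d^3(d^3+d^2\tau+L))=\sOB(d^9+d^8\tau)$ bit operations. If $\overline{f_F}$ vanishes at one of the two endpoints, or takes opposite signs at $a'$ and $b'$, then $\overline{f_F}$ --- being squarefree with $\gamma$ as its only possible root in $J_\gamma$ --- has $\gamma$ as a root, so $f_F(\gamma)=0$ and the sought sign is $0$; otherwise $\gamma$ is not a root of $f_F$, so $pp(f_F)$ has a constant nonzero sign on $[a',b']$, and the answer is the sign of $pp(f_F)(a')$, obtained by one more evaluation in $\sOB(d^9+d^8\tau)$ bit operations and equal to the sign of $pp(f_F)(\gamma)$. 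Summing the costs gives the claimed $\sOB(d^9+d^8\tau)$. The dominant --- and the only delicate --- steps are the computation of the squarefree part of the degree-$O(d^3)$ polynomial $f_F$ and the evaluations of $\overline{f_F}$ and $pp(f_F)$ at endpoints whose bitsize is forced up to $\sO(d^6+d^5\tau)$ by the worst-case separation bound; this $\sOB(d^9+d^8\tau)$ bottleneck is precisely the extra factor $d$ that the generalized-Sturm-sequence algorithm behind Theorem~\ref{th:sign_at_rur} will later remove.
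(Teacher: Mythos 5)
Your proof is correct and follows essentially the same strategy as the paper: reduce via Lemma~\ref{lem:f_F} to determining the sign of $pp(f_F)$ at $\gamma$, compute the squarefree part of $pp(f_F)$ in $\sOB(d^9+d^8\tau)$ time, refine the isolating interval past the $\sO(d^6+d^5\tau)$ root-separation bound of a degree-$O(d^3)$ auxiliary product, and determine the sign by evaluations at the refined endpoints of bitsize $\sO(d^6+d^5\tau)$, each costing $\sOB(d^9+d^8\tau)$. The only differences are cosmetic or helpful: you use the product $\overline{pp(f_{I,a})}\cdot\overline{f_F}$ where the paper uses $pp(f_F)\cdot f_{I,a}$ (same asymptotic degree and bitsize, hence the same separation bound), and you add a final direct evaluation of $pp(f_F)$ at an endpoint once $\gamma$ is known not to be a root --- a small but genuinely needed precaution, since $\overline{f_F}$ and $f_F$ can disagree in sign away from their common roots, a point the paper leaves implicit.
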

\begin{proof}
  By Lemma~\ref{lem:f_F}, $pp(f_F)$ has degree  $O(d^3)$ and bitsize
   $\sO(d^3+d^2\tau)$, and it can be computed with $\sO_B(d^7
  +d^6\tau)$ bit operations. By Theorem~\ref{th:rur}, $f_{I, a}$ has
  degree  $O(d^2)$ and bitsize  $\sO(d^2+d\tau)$, thus the product
  $pp(f_F)\,f_{I, a}$ has degree  $\OO(d^3)$ and bitsize 
  $\sO(d^3+d^2\tau)$. By Lemma~\ref{lem:sepbound}, the root separation
  bound of $pp(f_F)\,f_{I, a}$ has bitsize $\sO(d^6+d^5\tau)$. We
  refine the isolating interval of $\gamma$ for $f_{I, a}$ to the
  precision of the root separation bound of $pp(f_F)\,f_{I, a}$,
  which can be done with
  $\sOB((d^2)^2(d^2+d\tau)+d^2(d^6+d^5\tau))=\sOB(d^8+d^7\tau)$ bit
  operations according to Lemma~\ref{comp:isolation}.  Furthermore, we
  can ensure that the new interval has rational endpoints with bitsize
   $\sO(d^6+d^5\tau)$, similarly as in the proof of
  Proposition~\ref{prop:computing-boxes}. 
  On the other hand, {by Lemma~\ref{complexity:gcd}, since $pp(f_F)$ has bitsize
   $\sO(d^3+d^2\tau)$, its squarefree part $\overline{pp(f_F)}$ can be computed in complexity $\sOB((d^3)^2(d^3+
  d^2\tau))=\sOB(d^9+ d^8\tau)$ and it has bitsize in $\sOB(d^3+ d^2\tau)$.}
  It then follows from Lemma~\ref{lem:comp:evaluation} that the
  evaluation of $\overline{pp(f_F)}$ at the boundaries of the
  refined interval can be done with $\sO_B(d^3(d^6+d^5\tau))$ bit
  operations which concludes the proof by Lemma~\ref{lem:f_F}.
\end{proof}

\paragraph{Improved algorithm.}
Our more subtle algorithm is, in essence the one presented by Diochnos et
al. for evaluating the sign of a univariate polynomial (here $pp(f_F)$) at the roots
of a squarefree univariate polynomial (here $\overline{f_{I,a}}$)
\cite[Corollary 5]{det-jsc-2009}. The idea of this algorithm comes originally
from \cite{lr-jsc-2001}, where the Cauchy index of two polynomials is computed
by means of sign variations of a particular remainder sequence called the
Sylvester-Habicht sequence. In \cite{det-jsc-2009}, this approach is slightly
adapted to deduce the sign from the Cauchy index (\cite[Theorem 7.3]{Yap-2000})
and the bit complexity is given in terms of the two initial degrees and
bitsizes. 
Unfortunately, the
  corresponding proof is problematic because the authors refer to two
  complexity results for computing parts of the Sylvester-Habicht sequences and
  none of them actually applies.\footnote{Precisely, their proof is based on their Proposition~1 which
  claims, based on \cite{lr-jsc-2001} and \cite{Reischert1997} that given two
  polynomials $f$ and $g$ of degree $p>q$ and bitsize in $O(\tau)$, any of their
  polynomial subresultants as well as the whole quotient chain corresponding to
  the subresultant sequence can be computed with $\sOB(pq\tau)$ bit
  operations. However, in \cite{lr-jsc-2001} the complexity results are not
  stated in terms of $p$ and $q$ but only in terms of the maximum degree while
  in \cite{Reischert1997}, the result assumes that the $(q-1)^{th}$ subresultant
  of $f$ and $g$ is known.}
Following the spirit of their approach, we present in Lemma~\ref{DET09-revisited} a new {(weaker)}
complexity result for evaluating the sign of a univariate polynomial at the roots
of a squarefree univariate polynomial. This result is used to derive the bit
complexity of evaluating the sign of a bivariate polynomial at the roots of the
system.  For clarity, we postpone the proof of this 
 lemma to
Section~\ref{sec:proof-sign_at} after Theorem~\ref{th:sign_at_rur}.

\begin{lemma}\label{DET09-revisited} Let $f\in\Z[X]$ be a squarefree polynomial of degree $d_f$ and bitsize
  $\tau_f$, and $(a,b)$
be an isolating interval of one of its real roots $\gamma$ with $a$ and $b$ distinct rationals of bitsize in $\sO(d_f\tau_f)$
and  $f(a)f(b) \neq~0$. Let $g\in\Z[X]$ be of degree $d_g$ and bitsize $\tau_g$.
The sign of $g(\gamma)$ can be computed in  $\sOB((d_f^3+d_g^2)\tau_f + (d_f^2+d_fd_g)\tau_g)$ 
bit operations.
The sign of $g$ at all the {real} roots of $f$ can be computed with
$\sOB((d_f^3+d_f^2d_g+d_g^2)\tau_f + (d_f^3+d_fd_g)\tau_g)$ bit operations.
\end{lemma}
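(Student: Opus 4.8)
The plan is to reduce the sign of $g(\gamma)$ to a Cauchy index computation on the isolating interval $(a,b)$, following the classical relationship between sign-at-a-root, Cauchy indices, and sign variations in Sylvester–Habicht (subresultant) sequences. Concretely, recall (see \cite[Theorem 7.3]{Yap-2000} or \cite{lr-jsc-2001}) that for a squarefree $f$ with exactly one root $\gamma$ in $(a,b)$ and $f(a)f(b)\neq 0$, the sign of $g(\gamma)$ is determined by the Cauchy index on $(a,b)$ of the rational function $f'g/f$, equivalently by the difference of the number of sign variations at $a$ and at $b$ in the Sylvester–Habicht sequence of $f$ and $f'g$ (together with the signs of $f(a)$, $f(b)$, which we get by evaluation). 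The first step is therefore: evaluate $f$ at $a$ and at $b$ — by Lemma~\ref{lem:comp:evaluation} this costs $\sOB(d_f(\tau_f+d_f\tau_f))=\sOB(d_f^2\tau_f)$ since the endpoints have bitsize $\sO(d_f\tau_f)$ — and record their signs. Because $f(a)f(b)\neq 0$ we never hit a degenerate evaluation.

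The second and main step is to compute the relevant portion of the Sylvester–Habicht sequence of $f$ and $h:=f'g$, which has degrees $\leq d_f$ and $\leq d_f-1+d_g$ and bitsizes $\sO(\tau_f)$ and $\sO(\tau_f+\tau_g)$ respectively, and to evaluate each polynomial in that sequence at $a$ and $b$ in order to count sign variations. Here I would invoke Lemma~\ref{lem:finegcd} and its underlying half-gcd/Sylvester–Habicht machinery of \cite{lr-jsc-2001}, which produces the whole sequence (and the transition matrices) in $\sOB(\max(d_f,d_f+d_g)((d_f)\tau_h+(d_f+d_g)\tau_f))=\sOB((d_f+d_g)(d_f(\tau_f+\tau_g)+(d_f+d_g)\tau_f))$ bit operations, and guarantees that the coefficients of the sequence have bitsize $\sO(d_f(\tau_f+\tau_g)+(d_f+d_g)\tau_f)$. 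Evaluating the $\OO(d_f)$ polynomials of this sequence, each of degree $\OO(d_f+d_g)$ and of the stated bitsize, at the two rationals $a,b$ of bitsize $\sO(d_f\tau_f)$, costs by Lemma~\ref{lem:comp:evaluation} a total of $\sOB(d_f(d_f+d_g)(\text{bitsize of a sequence polynomial} + (d_f+d_g)\,d_f\tau_f))$; collecting terms, all contributions fall within $\sOB((d_f^3+d_g^2)\tau_f+(d_f^2+d_fd_g)\tau_g)$. I would then do the bookkeeping: expanding the products and separating the $\tau_f$-coefficient from the $\tau_g$-coefficient gives exactly the claimed bound, using $d_f\geq 1$ and absorbing cross terms like $d_f d_g\tau_f$ into $d_f^2\tau_f+d_g^2\tau_f$ (or into $d_fd_g\tau_g$ when appropriate, since we are free to bound $\min$ by either argument).

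For the ``all real roots'' statement I would not isolate-and-repeat, which would lose a factor; instead I would exploit that the sign variations of the \emph{same} Sylvester–Habicht sequence, evaluated at the $\OO(d_f)$ endpoints of a full set of isolating intervals of $f$, simultaneously yield the sign of $g$ at every real root. So one computes the sequence once — cost $\sOB((d_f+d_g)(d_f(\tau_f+\tau_g)+(d_f+d_g)\tau_f))$ as above — and then evaluates each of its $\OO(d_f)$ polynomials at each of the $\OO(d_f)$ interval endpoints (which, after isolating the roots of $f$ via Lemma~\ref{comp:isolation} and refining to the separation bound $2^{-\sO(d_f\tau_f)}$ of Lemma~\ref{lem:sepbound}, have bitsize $\sO(d_f\tau_f)$). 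That is $\OO(d_f^2)$ evaluations, each costing $\sOB((d_f+d_g)(\text{seq.\ bitsize}+(d_f+d_g)d_f\tau_f))$; summing gives $\sOB((d_f^3+d_f^2d_g+d_g^2)\tau_f+(d_f^3+d_fd_g)\tau_g)$ as claimed, the extra factor $d_f$ over the single-root case coming precisely from having $\OO(d_f)$ evaluation points rather than $2$.

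\textbf{Main obstacle.} The delicate point is not the asymptotic counting but getting the Sylvester–Habicht complexity right when $f$ and $h=f'g$ have genuinely different degrees and bitsizes: as the paper itself warns in the footnote before the lemma, the complexity of \cite{lr-jsc-2001} is stated only in terms of the maximum degree, and the naive ``use $\max$ everywhere'' bound is too weak for the sign-at-all-roots estimate. So the real work is to re-derive, via Hadamard's bound exactly as in the footnote to Lemma~\ref{lem:finegcd}, the refined statement that the sequence and its evaluations cost $\sOB(\max(d_f,d_f+d_g)(d_f\tau_h+(d_f+d_g)\tau_f))$ with coefficient bitsizes $\sO(d_f\tau_h+(d_f+d_g)\tau_f)$, and only then substitute $\tau_h=\sO(\tau_f+\tau_g)$ and split the resulting product into its $\tau_f$- and $\tau_g$-homogeneous parts. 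A secondary subtlety is ensuring the isolating endpoints $a,b$ can indeed be taken with bitsize $\sO(d_f\tau_f)$ and with $f(a)f(b)\neq 0$ — but this is exactly what refining to the separation bound of Lemma~\ref{lem:sepbound} and the endpoint-construction argument already used in the proof of Proposition~\ref{prop:computing-boxes} provide.
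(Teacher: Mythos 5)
Your setup (reduce to Cauchy index, Sylvester--Habicht sign variations at $a$ and $b$) matches the paper's, and you correctly identify the degree/bitsize asymmetry as the heart of the matter, but the plan ``compute the full Sylvester--Habicht sequence, then evaluate every polynomial in it at the endpoints'' does not fit the claimed bound, and this is precisely where the paper's Lemma~\ref{lem:lick-roy} departs from yours.

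Two concrete problems. First, Lemma~\ref{lem:finegcd} and the half-gcd machinery of \cite{lr-jsc-2001} compute the \emph{transition matrices} (and from them a gcd or gcd-free part) in $\sOB\bigl(\max(p,q)(p\tau_Q+q\tau_P)\bigr)$; they do \emph{not} compute the whole signed subresultant sequence explicitly within that bound. The explicit sequence consists of $O(d_f)$ polynomials of degree $O(d_f)$ with coefficients of bitsize $H\in\sO\bigl(d_f\tau_g+(d_f+d_g)\tau_f\bigr)$, so just \emph{writing it down} already takes $\sO(d_f^2H)$ bits, which contains a $d_f^3\tau_g$ term exceeding the target $(d_f^2+d_fd_g)\tau_g$ whenever $d_f\gg d_g$. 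Second, even granting the sequence for free, your evaluation step is too expensive: evaluating $O(d_f)$ polynomials of degree $O(d_f)$ and bitsize $H$ at a point of bitsize $\sO(d_f\tau_f)$ costs, by Lemma~\ref{lem:comp:evaluation}, $\sOB\bigl(d_f^2(H+d_f\tau_f)\bigr)$, again with a $d_f^3\tau_g$ contribution (and a $d_f^2d_g\tau_f$ contribution, which can also exceed $(d_f^3+d_g^2)\tau_f$, e.g.\ for $d_f=5$, $d_g=10$). Your sentence ``collecting terms, all contributions fall within'' the target therefore does not hold, and the all-roots variant inflates by an extra factor $d_f$ on top, making the gap worse.

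The paper avoids this by \emph{never} materializing the sequence. Lemma~\ref{lem:lick-roy} computes the transition matrices $\mathcal{N}_{j,i}$ once in $\sOB(pH)$; the key structural fact (from \cite[Corollary 4.3]{lr-jsc-2001}) is that the quotient polynomials inside all $\mathcal{N}_{j,i}$ \emph{other than the first one} have degrees summing to at most $q=d_f$. It then evaluates these matrices (which have small total degree) at the query point, and propagates the transition recurrence on the evaluated \emph{scalars} $Sh_i(a)$, $Sh_i(b)$ --- each step after the first costs only $\sOB(q\sigma+H)$ rather than $\sOB(d_f(H+\sigma))$. This is what drops the extra $d_f$ factor and delivers the stated bound, and for the all-roots version it additionally precomputes the first non-evaluated regular couple once so that each of the $u=O(d_f)$ query points only touches the low-degree tail of the sequence. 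A secondary point you gloss over: the reduction $V(SRemS(f,f'g;a,b))=V(SRemS(f'g,-f;a,b))+c$ needed to put the pair into the form $\deg P>\deg Q$ requires $\deg(f'g)>\deg f$, i.e.\ $\deg g\geq 2$; the paper handles $\deg g\leq 1$ separately by direct evaluation.
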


\begin{theorem}
\label{th:sign_at_rur}
Given a RUR $\{f_{I,a},f_{I,a,1},f_{I,a,X},f_{I,a,Y}\}$ of $I=\ideal{P,Q}$
(satisfying the bounds of Theorem~\ref{th:rur}), the sign of $F$ at a real
solution of $I$ can be computed with $\sO_B(d^8+ d^7\tau)$ bit operations. The sign of $F$ at all
the solutions of $I$ can be computed with $\sOB(d^9+d^8\tau)$ bit operations.
\end{theorem}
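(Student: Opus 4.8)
The plan is to reduce the bivariate sign computation to a univariate one by means of Lemma~\ref{lem:f_F}, and then to invoke the univariate sign-evaluation result of Lemma~\ref{DET09-revisited}, carefully tracking degrees, bitsizes and the hypotheses needed along the way.

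First I would use Lemma~\ref{lem:f_F} to compute $g:=pp(f_F)\in\Z[T]$: it has degree $d_g\in O(d^3)$, bitsize $\tau_g\in\sO(d^3+d^2\tau)$, is computed with $\sOB(d^7+d^6\tau)$ bit operations, and the sign of $F$ at a real solution of $I$ equals the sign of $g$ at the corresponding root of $f_{I,a}$ through the RUR mapping. Since, by Proposition~\ref{prop:rur-size}, all polynomials of the RUR become integer polynomials after multiplication by one and the same integer of bitsize $\sO(d^2+d\tau)$, I would then work with $f:=\overline{pp(f_{I,a})}\in\Z[T]$, which has the same real roots as $f_{I,a}$; by Theorem~\ref{th:rur} and Lemma~\ref{complexity:gcd}, $f$ has degree $d_f\in O(d^2)$, bitsize $\tau_f\in\sO(d^2+d\tau)$, and is computed in $\sOB(d^6+d^5\tau)$ bit operations.

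Next I would isolate the real roots of $f$, which costs $\sOB(d_f^3\tau_f)=\sOB(d^8+d^7\tau)$ bit operations by Lemma~\ref{comp:isolation}; since the separation bound of $f$ is $2^{-\sO(d_f\tau_f)}$ by Lemma~\ref{lem:sepbound}, these intervals can be refined, within the same complexity, so that each has distinct rational endpoints of bitsize $\sO(d_f\tau_f)=\sO(d^4+d^3\tau)$ that are not roots of $f$, which are exactly the hypotheses of Lemma~\ref{DET09-revisited}. Applying its single-root bound to this $f$ and $g$ gives a cost of $\sOB\bigl((d_f^3+d_g^2)\tau_f+(d_f^2+d_fd_g)\tau_g\bigr)$; substituting $d_f\in O(d^2)$, $d_g\in O(d^3)$, $\tau_f\in\sO(d^2+d\tau)$, $\tau_g\in\sO(d^3+d^2\tau)$, each term lies in $\sOB(d^8+d^7\tau)$, so together with the earlier steps this yields the bound for one solution. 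For all real roots, the corresponding bound of Lemma~\ref{DET09-revisited} is $\sOB\bigl((d_f^3+d_f^2d_g+d_g^2)\tau_f+(d_f^3+d_fd_g)\tau_g\bigr)$, whose dominant terms ($d_f^2d_g\tau_f$ and $d_f^3\tau_g$) are in $\sOB(d^9+d^8\tau)$; this dominates all the previous steps and gives the bound for all solutions.

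The bulk of the above is routine degree/bitsize bookkeeping together with a check that the hypotheses of Lemma~\ref{DET09-revisited} (interval with non-root rational endpoints of bitsize $\sO(d_f\tau_f)$) are met. The genuine difficulty is concentrated in Lemma~\ref{DET09-revisited} itself, whose proof I would postpone (as the authors do, to Section~\ref{sec:proof-sign_at}): the hard part there is the complexity analysis of computing the relevant portion of the Sylvester--Habicht (subresultant) remainder sequence of $f$ and $g$, whose degrees and bitsizes are unbalanced, and of deducing the sign of $g(\gamma)$ from sign variations via the Cauchy index while sharing the sequence computation across all $O(d^2)$ roots to gain the $O(d)$-versus-$O(d^2)$ improvement.
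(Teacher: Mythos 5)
Your proposal matches the paper's proof: both reduce to a univariate sign problem via Lemma~\ref{lem:f_F}, pass to $g=pp(f_F)$ and $f=\overline{pp(f_{I,a})}$ with the same degree/bitsize bounds, isolate the roots of $f$ to satisfy the hypotheses of Lemma~\ref{DET09-revisited} (interval boundaries of bitsize $\sO(d_f\tau_f)$, non-roots of $f$), and then apply the one-root and all-roots bounds of that lemma, with the identical arithmetic on the parameters $d_f\in O(d^2)$, $d_g\in O(d^3)$, $\tau_f\in\sO(d^2+d\tau)$, $\tau_g\in\sO(d^3+d^2\tau)$. The bookkeeping and the conclusion are the same as the paper's; this is essentially the same argument.
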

\begin{proof}
By Lemma~\ref{lem:f_F}, the sign of $F$ at the real solutions of $I$,
    is equal to the sign of $pp(f_F)$ at the corresponding roots of $f_{I,a}$,
    or equivalently at those of $\overline{pp(f_{I,a})}$.  {Furthermore, $pp(f_F)$ 
has degree $O(d^3)$, bitsize in
  $\sO(d^3+d^2\tau)$, and it can be computed with $\sO_B(d^7+d^6\tau)$ bit
  operations. On the other hand, } by
      Theorem~\ref{th:rur} and Proposition~\ref{prop:rur-size}, the primitive
      part of $f_{I,a}$ has degree at most $d^2$ and bitsize in
      $\sO(d^2+d\tau)$. Since $f_{I,a}$ is monic (see Equation~\eqref{eq:defRUR}), its
      primitive part can be computed by multiplying it by the lcm of the
      denominators of its coefficients. This lcm
can be computed with $O(d^2)$ lcms of integers whose bitsizes remain in
$\sO(d^2+d\tau)$ (since $f_{I,a}$ is monic and its primitive part has bitsize in
$\sO(d^2+d\tau)$). Each lcm can be computed with $\sOB(d^2+d\tau)$ bit
operations \cite[\S 2.A.6]{Yap-2000}, thus $pp(f_{I,a})$ can be computed in
$\sO(d^4+d^3\tau)$ bit operations.\footnote{Note that is if $f_{I,a}$ has been
  computed using Proposition~\ref{prop:rur-res2}, then instead of computing
  $pp(f_{I,a})$ one can consider $R(T,a)=f_{I,a}(T)\,L_R(a)$ which is a
  polynomial of degree $O(d^2)$ with integer coefficients of bitsize
  $\sO(d^2+d\tau)$ by Lemma~\ref{lem:complexity:shear}.}  The squarefree part of
$pp(f_{I,a})$ can thus be computed in $\sOB(d^4(d^2+d\tau))$ bit operations and
it has bitsize in $\sO(d^2+d\tau)$, by Lemma~\ref{complexity:gcd}.  By
Lemmas~\ref{comp:isolation} and \ref{lem:sepbound}, the isolating intervals (if
not given) of $\overline{pp(f_{I,a})}$ can be computed in
$\sOB((d^2)^3(d^2+d\tau))$ bit operations with intervals boundaries of bitsize
satisfying the hypotheses of Lemma~\ref{DET09-revisited}. Indeed, we can
  ensure during the isolation of the roots of $f=\overline{pp(f_{I,a})}$ that
  the isolating intervals have endpoints with bitsize in $\sO(d_f\tau_f)$,
  similarly as in the proof of Proposition~\ref{prop:computing-boxes}. 
Applying Lemma \ref{DET09-revisited} then concludes the proof. 
\end{proof}

{
\begin{remark}
Theorem~\ref{th:sign_at_rur} holds also if the solutions of $I=\ideal{P,Q}$ are described by the rational parameterization of  
   Gonzalez-Vega and El Kahoui \cite{VegKah:curve2d:96} instead of a RUR. Indeed, such parameterization is defined,
   in the worst case, by $\Theta(d)$  univariate polynomials $f_i$ of degree $d_{f_i}$ whose sum
   $d_f$ is at most
   $d^2$, and by  associated rational one-to-one mappings which are defined, as for the RUR, by
   polynomials of degree $O(d^2)$ and bitsize $O(d^2+d\tau)$. The result of
   Theorem~\ref{th:sign_at_rur} on the sign of $F$ at \emph{one} real solution of $I$ thus trivially
   still holds. For the sign of $F$ at \emph{all} real solutions of $I$ the result also holds from
   the following    observation. In the proofs of Lemmas~\ref{lem:lick-roy} and \ref{DET09-revisited}, the
   computation of one sequence of  unevaluated Sylvester-Habicht transition matrices has complexity
   $\sOB(pH)$ (in proof of Lemma~\ref{lem:lick-roy}) where $p$  is in $O(d_{f_i}+d_g)$ in the proof
   of Lemma~\ref{DET09-revisited}. The sum of the $pH$ over all $i$ is thus $O((d_{f}+dd_g)H)$
   instead of $O((d_{f}+d_g)H)$ as for the RUR. However, $d_gH$ writes in the proof
   of Lemma~\ref{DET09-revisited} as
   $\sO(d_g((d_f+d_g)\tau_f+d_f(\tau_f+\tau_g)))=\sO(d_fd_g(\tau_f+\tau_g)+d_g^2\tau_f)$ which
   writes in the proof of Theorem~\ref{th:sign_at_rur} as
   $\sO(d^2d^3(d^3+d^2\tau)+(d^3)^2(d^2+d\tau))=\sO(d^8+d^7\tau)$. Thus multiplying this by $d$
   remains within the targeted bit complexity. On the other hand, the complexity of the evaluation
   phase in the proofs of Lemmas~\ref{lem:lick-roy} and \ref{DET09-revisited} does not increase when
   considering the representation of Gonzalez-Vega and El Kahoui instead of the RUR because the
   total complexity of the  evaluations depends only on the number of solutions at
   which we evaluate the sign of the other polynomial and on the degree and bitsize of the
   polynomials involved, and both of them are in the same complexity in both representations (only
   the number of polynomials is larger in Gonzalez-Vega and El Kahoui representation). 
\end{remark}
}

\subsubsection{Proof of Lemma~\ref{DET09-revisited}}\label{sec:proof-sign_at}

{As shown in \cite[Theorem 2.61]{BPR06}, the sign of $g(\gamma)$ is $
  V(SRemS(f,f'g;a,b))$ where $V(SRemS(P,$ $ Q;a,b))$} is the number of sign
variations in the signed remainder sequence of $P$ and $Q$ evaluated at $a$
minus the number of sign variations in this sequence evaluated at $b$
{(see Definition 1.7 in \cite{BPR06} for the sequence and Notation 2.32
  for the sign variation).}  {On the other hand, for any $P$ and $Q$
  such that $\deg(P)>\deg(Q)$ and $P(a)\,P(b) \neq 0$ or $Q(a)\,Q(b) \neq 0$, we
  have according to \cite[Theorems 3.2, 3.18 \& Remarks 3.9,
  3.25]{Roy96}\footnote{The same result can be found directly stated,
      in French, in \cite[Theorem 4]{Lombardi90}.}  that
  $V(SRemS(P,Q;a,b))=W(SylH(P,Q;a,b))$ where $SylH$ is the Sylvester-Habicht
  sequence of $P$ and $Q$, and $W$ is the related sign variation
  function.}\footnote{{The Sylvester-Habicht sequence, defined in
    \cite[\S 8.3.2.2]{BPR06} as the Signed Subresultant sequence, can be derived
    from the classical subresultant sequence \cite{Kahoui03}
by   multiplying the two starting subresultants by $+1$ the next two by $-1$ and so
  on.  $W$ is defined as the usual sign variation with the following
  modification for groups of two consecutive zeros: count \emph{one}
  sign variation for the groups $[+,0,0,-]$ and $[-,0,0,+]$, and \emph{two} sign
  variations for the $[+,0,0,+]$ and $[-,0,0,-]$ (see \cite[\S 9.1.3 Notation 9.11]{BPR06}).}}
The following intermediate result  is a consequence of an adaptation of
\cite[Theorem 5.2]{lr-jsc-2001} in the case
where the polynomials $P$ and $Q$ have
different degrees and bitsizes.

\begin{lemma}\label{lem:lick-roy}
  Let $P$ and $Q$ in $\Z[X]$ with $deg(P)=p>q=deg(Q)$ and bitsize respectively
  $\tau_P,\tau_Q$. If $a$ and $b$ are  two rational numbers of bitsize bounded by
  $\sigma$,  
  the computation of $W(SylH (P,Q;a,b))$ can be performed with
 $\sOB((p+q^2)\sigma+p(p\tau_Q+q\tau_P))$ bit operations. 

Moreover, if $a_\ell$ and $b_\ell$, $1\leq \ell\leq u$, are rational numbers of bitsizes that sum to 
  $\sigma$, 
  the computation of $W(SylH (P,Q;a_\ell,b_\ell))$ can be performed  for all $\ell$ with
$\sOB((p+q^2)\sigma+(p+qu)(p\tau_Q+q\tau_P)+pu\tau_P)$ bit operations. 
\end{lemma}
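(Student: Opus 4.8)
The plan is to follow, adapting it to unequal degrees and bitsizes, the algorithm behind \cite[Theorem~5.2]{lr-jsc-2001}: never form the Sylvester--Habicht polynomials $SH_0,SH_1,\dots$ of $P$ and $Q$ explicitly, but only their transition matrices---the $2\times 2$ matrices, each with a single nonconstant polynomial entry (a subresultant quotient), that link two consecutive nonzero terms of the sequence---together with the list of defective indices. One then recovers $\tmop{sign}(SH_j(a))$ and $\tmop{sign}(SH_j(b))$ for all $j$ by propagating signs through the transition matrices evaluated numerically at $a$ and at $b$, and finally reads off $W(SylH(P,Q;a,b))$, which by its definition \cite[\S\,9.1.3]{BPR06} depends only on these signs and on the gap structure of the sequence.

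First I would invoke the half-gcd algorithm of \cite{lr-jsc-2001}, adapted exactly as in the proof of Lemma~\ref{lem:finegcd} (Hadamard's bound now reading $\sO(p\tau_Q+q\tau_P)$), to compute in $\sOB(p(p\tau_Q+q\tau_P))$ bit operations all transition matrices of the Sylvester--Habicht sequence of $P$ and $Q$, the defective indices, and the polynomial $SH_{q-1}$, which up to sign and an explicit power of $\Lc(Q)$ equals $\tmop{prem}(P,Q)$ and has degree $<q$ and bitsize $\sO(\tau_P+p\tau_Q)\subseteq\sO(p\tau_Q+q\tau_P)$. All entries of these matrices have bitsize $\sO(p\tau_Q+q\tau_P)$; exactly one of them, coming from the first Euclidean step, has a nonconstant entry of degree $p-q$, while $O(q)$ further ones have nonconstant entries of degrees summing to $O(q)$. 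Next, using Lemma~\ref{lem:comp:evaluation}, I would evaluate at $a$ and at $b$ \emph{only} $P$, $Q$, $SH_{q-1}$, and the nonconstant entries of the $O(q)$ ``small'' transition matrices---\emph{not} the degree-$(p-q)$ first quotient: this costs $\sOB(p(\tau_P+\sigma))$ for $P$ (whose value has bitsize $\sO(\tau_P+p\sigma)$) and $\sOB(q(p\tau_Q+q\tau_P)+q\sigma)$ in total for $SH_{q-1}$ and the small quotients (whose values have bitsize $\sO(p\tau_Q+q\tau_P+q\sigma)$). Starting from the pair $\bigl(SH_q(a),SH_{q-1}(a)\bigr)$, whose two entries are $Q(a)$ and $\tmop{prem}(P,Q)(a)$ up to explicit nonzero scalars of bitsize $\sO(p\tau_Q)$ and hence have bitsize $\sO(p\tau_Q+q\tau_P+q\sigma)$, I would apply the $O(q)$ numerically evaluated small transition matrices one after another to obtain $\tmop{sign}(SH_j(a))$ at every defective index $j\leq q-1$, proceed likewise at $b$, and read $\tmop{sign}(SH_p(a))$ and $\tmop{sign}(SH_p(b))$ directly from the signs of $P(a)$ and $P(b)$ (the top of the sequence being a scalar multiple of $P$). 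Each matrix--vector product is $O(1)$ multiplications of rationals of bitsize $\sO(p\tau_Q+q\tau_P+q\sigma)$, so the $O(q)$ steps at the two points cost $\sOB(q(p\tau_Q+q\tau_P)+q^2\sigma)$; counting the sign variations is negligible. Summing the three phases gives $\sOB\bigl((p+q^2)\sigma+p(p\tau_Q+q\tau_P)\bigr)$.

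For the several-intervals statement I would compute the transition matrices and $SH_{q-1}$ once, and rerun the evaluation and propagation for each pair $(a_\ell,b_\ell)$ of bitsize $\sigma_\ell$. Over the $2u$ points, evaluating $P$ costs $\sum_\ell\sOB(p(\tau_P+\sigma_\ell))=\sOB(pu\tau_P+p\sigma)$, evaluating $SH_{q-1}$ and the small quotients costs $\sum_\ell\sOB(q(p\tau_Q+q\tau_P)+q\sigma_\ell)=\sOB(uq(p\tau_Q+q\tau_P)+q\sigma)$, and the propagation for $(a_\ell,b_\ell)$ costs $\sOB(q(p\tau_Q+q\tau_P)+q^2\sigma_\ell)$, hence $\sOB(uq(p\tau_Q+q\tau_P)+q^2\sigma)$ in total; adding the one-time $\sOB(p(p\tau_Q+q\tau_P))$ yields $\sOB\bigl((p+q^2)\sigma+(p+qu)(p\tau_Q+q\tau_P)+pu\tau_P\bigr)$, as claimed.

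The hard part is the bookkeeping around the one \emph{unbalanced} step, the division of $P$ by $Q$: handled naively it would force re-evaluating a degree-$(p-q)$ quotient of bitsize $\sO(p\tau_Q+q\tau_P)$ at each of the $2u$ points, contributing a spurious $\sOB(up^2\tau_Q)$ term that does \emph{not} fit the target bound when $p\gg q$. The remedy is to start the propagation at the $Q$-level of the sequence, using the precomputed $SH_{q-1}=\pm\tmop{prem}(P,Q)$ (degree $<q$) and the sign of $P$ at the point for the topmost term, so that only polynomials of degree $O(q)$ (plus $P$ itself) are evaluated at the $2u$ points and only $O(q)$ matrices are propagated. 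A secondary point to verify is that the signed subresultant (``Sylvester--Habicht'') formalism of \cite[\S\,8.3.2.2]{BPR06} keeps every transition-matrix entry polynomial with Hadamard-bounded coefficients, so that the propagated values $SH_j(a_\ell)$ stay of bitsize $\sO(p\tau_Q+q\tau_P+q\sigma_\ell)$ without accumulation of denominators over the $O(q)$ products---which is exactly what keeps the propagation phase at $\sOB(q(p\tau_Q+q\tau_P)+q^2\sigma_\ell)$.
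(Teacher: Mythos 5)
Your proof is correct and follows essentially the same strategy as the paper's: compute only the Sylvester--Habicht transition matrices (not the polynomials) via the half-gcd machinery of Lickteig--Roy in $\sOB(p(p\tau_Q+q\tau_P))$ bit operations, precompute a polynomial of degree $<q$ at the $Q$-level of the sequence so that the degree-$(p-q)$ first quotient is never evaluated at the query points, and then for each pair $(a_\ell,b_\ell)$ evaluate only $P$, $Q$, that precomputed polynomial, and the $O(q)$ remaining quotients of degrees summing to $O(q)$, propagating signs through the evaluated matrices. The one minor organizational difference is that you apply this ``start at level $q$'' trick to the single-pair case as well, whereas the paper only introduces the explicit precomputation of the first regular couple $(Sh_k,Sh_{k-1})$ for the multi-pair case (for a single pair the paper simply evaluates the big quotient too, which is asymptotically harmless); your uniform treatment is slightly cleaner but lands on the same bounds.
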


\begin{proof} 
  Following the algorithm in \cite{lr-jsc-2001}, we first compute the
  consecutive Sylvester-Habicht transition matrices of $P$ and $Q$ denoted by
  $\mathcal{N}_{j,i}$ with $ 0 \leq j < i \leq p$. These matrices link 
  consecutive regular couples\footnote{Regular couples in the
    Sylvester-Habicht sequence are the nonzero Sylvester-Habicht
    polynomials $(Sh_i, Sh_{i-1})$ such that $\deg(Sh_i) > \deg(Sh_{i-1})$.} $(Sh_i,Sh_{i-1})$ and $(Sh_j,Sh_{j-1})$ in
  the Sylvester-Habicht sequence as
  follows:
  \begin{equation}\label{eq:trans-mat}
    \begin{pmatrix} Sh_{j} \\ Sh_{j-1} \end{pmatrix} = \mathcal{N}_{j,i}
    \begin{pmatrix} Sh_{i} \\ Sh_{i-1} \end{pmatrix} 
    \ such \ that \ i \leq p \ and \ \ (Sh_{p}, Sh_{p-1})=(P,Q). 
  \end{equation}
  According to \cite[Theorem. 5.2 \& Corollary 5.2]{lr-jsc-2001}, computing all the
  matrices $\mathcal{N}_{j,i}$ of $P$ and $Q$ can be done
  with $\sOB(pH)$ bit operations, where $H\in \sO(q\tau_P+p\tau_Q)$ is an upper bound
 on the bitsize appearing in the computations given by Hadamard's inequality.

  We  evaluate the Sylvester-Habicht sequence at a rational $a$ by first evaluating   $P$, $Q$,
  and all 
  the  matrices $\mathcal{N}_{j,i}$ at $a$,  and then by applying
  iteratively the above formula. Doing the  same at $b$  yields $W(SylH (P,Q;a,b))$.

  First, note that the evaluation of $P(a)$ and $Q(a)$ can be done with $\sOB(p(\tau_P+\sigma))$ plus
  $\sOB(q(\tau_Q+\sigma))$, that is   $\sOB(p(\tau_P+\tau_Q+\sigma))$ bit operations (since $p>q$), by Lemma~\ref{lem:comp:evaluation}.
 The polynomials appearing in the 
  matrices $\mathcal{N}_{j,i}$ have bitsize at most $H$
  and the sum of their degrees is equal to
  $p$ \cite[Corollary 4.3]{lr-jsc-2001}.\footnote{\cite[Corollary 4.3]{lr-jsc-2001} states that consecutive
    Sylvester-Habicht transition matrices consist of one zero, two integers and a polynomial which
    is, up to a coefficient, the quotient of the division of two consecutive Sylvester-Habicht
    polynomials. These polynomials being proportional to polynomials in the remainder sequence of
    $(P,Q)$, the sum of the degrees of their quotients is equal to the degree of $P$.}
    Thus, all $\mathcal{N}_{j,i}(a)$ have
    bitsize $\sO(p\sigma+H)$ and they can
  be computed in a total of $\sOB(p(\sigma+H))$ bit operations, by Lemma~\ref{lem:comp:evaluation}.  
Moreover, by considering the 
matrices $\mathcal{N}_{j,i}$ other than the first one  $\mathcal{N}_{k,p}$, as
the consecutive transition matrices of the Sylvester-Habicht sequence of the first regular couple
  $(Sh_k,Sh_{k-1})$ after $(Sh_p,Sh_{p-1})$, we have that the polynomials appearing in these
  matrices have the sum of their degrees equal to that of $Sh_k$ 
  which is at most $q$ (since $k\leq p-1$ and $Sh_{p-1}=Q$). Thus, except the first one $\mathcal{N}_{k,p}(a)$,  all evaluated 
  matrices $\mathcal{N}_{j,i}(a)$ have bitsize $\sO(q\sigma+H)$ and they can be computed in a total of  $\sOB(q(\sigma+H))$ bit operations.

We now apply iteratively Equation~\eqref{eq:trans-mat}  for
  computing all the $Sh_i(a)$. Since all Sylvester-Habicht polynomials have bitsize at most $H$ and degree at most $q$ except
  the first one $Sh_p=P$, the bitsize of $Sh_{i<p}(a)$ is in $O(q\sigma+H)$ and that
  of $Sh_{p}(a)$ is in $O(p\sigma+\tau_P)$.
Given $P(a), Q(a)$ and all $\mathcal{N}_{j,i}(a)$, it follows from their  bitsizes that we can
compute iteratively the $Sh_{i}(a)$
in time $\sOB(p\sigma+H)$ for the first regular couple after
$(Sh_p,Sh_{p-1})=(P,Q)$ and in time $\sOB(q\sigma+H)$ for each of the
others. Thus, for computing of $W(SylH (P,Q;a,b))$,
the initial computation of all $\mathcal{N}_{j,i}$ takes $\sOB(pH)$ bit
operations and
the evaluation phase takes $\sOB(p(\tau_P+\tau_Q+\sigma))$ plus
$\sOB(p(\sigma+H)+q(q\sigma+H))$ bit operations, which gives a total of  $\sOB(p(\sigma+H) + q^2\sigma)$ bit operations.

We now consider the case of computing $W(SylH (P,Q;a_\ell,b_\ell))$  for $1\leq \ell\leq u$.
We  slightly change the above algorithm as follows. We only change the way to evaluate
the first regular couple $(Sh_k,Sh_{k-1})$ after $(Sh_p,Sh_{p-1})$ 
at the $a_\ell$ (and $b_\ell$). Once the matrices $\mathcal{N}_{j,i}$ have been computed, we compute the (non-evaluated) first
regular couple $(Sh_k,Sh_{k-1})=\mathcal{N}_{k,p}(Sh_p,Sh_{p-1})$. Since the polynomials in
$\mathcal{N}_{k,p}$ have degree at most $p$ and  bitsize at most $H$, the couple $(Sh_k,Sh_{k-1})$
can be computed in $\sOB(p(H+\tau_P+\tau_Q))=\sOB(pH)$ time \cite[Corollary 8.27]{vzGGer}. 
As noted above, $Sh_k$, and thus also $Sh_{k-1}$,
have degree at most $q$ and they have bitsize at
most $H$, so  they can  be
evaluated at a given $a_\ell$ in time $\sOB(q(\sigma_\ell+H))$ where $\sigma_\ell$ is the bitsize of
$a_\ell$. Now, the  polynomials appearing in the matrices $\mathcal{N}_{j,i}$, other than
the first one $\mathcal{N}_{k,p}$, have bitsize at most $H$
and the sum of their degrees is at most $q$, so similarly as above, all the
$\mathcal{N}_{j,i}(a_\ell)$, 
except $\mathcal{N}_{k,p}(a_\ell)$, can be computed in total bit complexity $\sOB(q(\sigma_\ell+H))$.
Then, we compute as above each of the other regular couples evaluated at $a_\ell$ in time
  $\sOB(q\sigma_\ell+H)$.  Hence, the initial  computation of  all $\mathcal{N}_{j,i}$ and of
  $(Sh_k,Sh_{k-1})$ takes $\sOB(pH)$ bit operations and  the evaluation phase at all the $a_\ell$ takes
  the sum over $\ell$, $1\leq \ell\leq u$, of $\sOB(p(\tau_P+\tau_Q+\sigma_\ell))$  plus $\sOB(q(\sigma_\ell+H)+q(q\sigma_\ell+H))$
  bit operations, that is
$\sOB(p(\tau_P+\tau_Q) +(p+q^2)\sigma_\ell+ qH)$ which sums to $\sOB(pu(\tau_P+\tau_Q) +(p+q^2)\sigma+ quH)$.
Hence the total bit complexity for computing all the $W(SylH (P,Q;a_\ell,b_\ell))$  for $1\leq \ell\leq u$ is
$\sOB((p+q^2)\sigma+(p+qu)H+pu\tau_P)$ which concludes the proof.
\end{proof}

\begin{proof}[Proof of Lemma \ref{DET09-revisited}]
We may assume that $g$ has degree
  greater than one since, if $g$ is a constant the problem is trivial and, if $g(X)=cX-d$,
then the sign of $g(\gamma)$ follows from  (i)  the sign of $c$ if $\frac{d}{c} \not\in (a,b)$ 
and from (ii) the signs of $c$, $f(a)$, and  $f(\frac{d}{c})$ if $\frac{d}{c} \in ( a,b)$; indeed, the signs of $f(a)\neq 0$ and  $f(\frac{d}{c})$ determine whether $\gamma$ lies in
$(a,\frac{d}{c})$, $\{\frac{d}{c}\}$, or $(\frac{d}{c},b)$.
Hence, when $g$ has degree one, the sign of $g(\gamma)$ can be
computed with $\sOB(d_f(\tau_g+d_f\tau_f))$ bit operations according to
Lemma~\ref{lem:comp:evaluation}.

Recall that the sign of $g(\gamma)$ is $ V(SRemS(f,f'g;a,b))$  \cite[Theorem 2.61]{BPR06}.
  When $g$ has degree greater than one, we cannot directly apply
  Lemma~\ref{lem:lick-roy} since $\deg(f)<\deg(f'g)$. However, knowing the sign of
  $f$ and $f'g$ at $a$ and $b$ and noticing that their signed remainder sequence
  starts with $[f,f'g,-f,-rem(f'g,-f),\ldots]$, we can easily compute  the value 
  $c$ such that $V(SRemS(f,f'g;a,b))=V(SRemS(f'g,-f;a,b))+c$.  
  Furthermore, as observed at the beginning of this section and since
  $f(a)f(b)\neq 0$ by hypothesis, $V(SRemS(f'g,-f;a,b))=W(SylH(f'g,-f;a,b))$.
We can now apply Lemma \ref{lem:lick-roy} which thus yields the sign of $g(\gamma)$  with a bit
complexity in  $\sOB((p+q^2)\sigma+p(p\tau_Q+q\tau_P))$ which simplifies  into  $\sOB((d_f^3+d_g^2)\tau_f + (d_f^2+d_fd_g)\tau_g)$.

For the sign
  of $g$ at all the real roots of $f$, isolating intervals of these roots can be computed in
  complexity $\sOB(d_f^3\tau_f)$ (see Lemma~\ref{comp:isolation}) such that  the bitsizes of the
  interval boundaries sum up to $\sO(d_f^2 + d_f\tau_f)$  (a consequence of
  Davenport-Mahler-Mignotte bound, see e.g. \cite[Lemma 6]{det-jsc-2009}).
Similarly as for one root, Lemma \ref{lem:lick-roy} then yields that the sign of $g$  at all the real roots
of $f$ can be computed with a bit
complexity in  $\sOB((p+q^2)\sigma+(p+qu)(p\tau_Q+q\tau_P)+pu\tau_P)$ which writes as
$\sOB((d_f+d_g+d_f^2)d_f\tau_f+(d_f+d_g+d_f^2)((d_f+d_g)\tau_f+d_f(\tau_f+\tau_g))+(d_f+d_g)d_f(\tau_g+\tau_f))$ and simplifies
into  $\sOB((d_f^3+d_f^2d_g+d_g^2)\tau_f + (d_f^3+d_fd_g)\tau_g)$ bit operations.
\end{proof}

\subsection{Over-constrained systems}
\label{sec:overcontraint}

So far, we focused on systems defined by exactly two coprime polynomials. We now
extend our results to compute rational parameterizations of zero-dimensional
systems defined with additional equality or inequality. 
Let $P, Q \in \Z[X,Y]$ be two coprime polynomials of total degree at most $d$
and maximum bitsize $\tau$. In this section, we assume given
$RUR_{I,a}=\{f_{I,a},f_{I,a,1},f_{I,a,X},f_{I,a,Y}\}$ the RUR of the ideal
$I=\ideal{P,Q}$ associated to the separating form $X+aY$, we also assume that
the polynomials of this RUR satisfy the bitsize bound of
Theorem~\ref{th:rur}. Given another polynomial $F \in \Z[X,Y]$,
we have seen in the previous section how to compute the sign of $F$ at the
solutions of $I$.  With a similar approach, we now explain how to split
$RUR_{I,a}$ according to whether $F$ vanishes or not at the solutions of $I$.

Let $F \in \Z[X,Y]$ be of total degree at most $d$ and maximum bitsize $\tau$.
Identifying the roots of ${f_{I,a}}$ with the solutions of the system $I$ via
the RUR, let $f_{F=0}$ (resp. $f_{F\neq 0}$) be the squarefree factor of
$f_{I,a}$ such that 
its roots are exactly the solutions of the system $I$ at which the polynomial
$F$ vanishes (resp. does not vanish).

\begin{lemma}
\label{lem:rursplit}  
Given $RUR_{I,a}$,
the bit complexity of computing $f_{F=0}$ (resp. $f_{F\neq 0}$) is in
$\sOB(d^8+d^7\tau)$ and these polynomials have bitsize in $\sO(d^2+d\tau)$.
\end{lemma}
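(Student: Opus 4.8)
The plan is to reduce the splitting of $f_{I,a}$ to a single gcd computation between the squarefree polynomial $\overline{pp(f_{I,a})}$ and the polynomial $pp(f_F)$ of Lemma~\ref{lem:f_F}, and then to run the complexity analysis with the refined gcd bound of Lemma~\ref{lem:finegcd} (which is exactly the regime of unequal degrees and bitsizes that this lemma addresses). First I would recall the identity established in the proof of Lemma~\ref{lem:f_F}: for every root $\gamma$ of $f_{I,a}$ and the corresponding solution $(\alpha,\beta)=\bigl(\frac{f_{I,a,X}(\gamma)}{f_{I,a,1}(\gamma)},\frac{f_{I,a,Y}(\gamma)}{f_{I,a,1}(\gamma)}\bigr)$ of $I$, one has $f_F(\gamma)=f_{I,a,1}^{d}(\gamma)\,F(\alpha,\beta)$ with $f_{I,a,1}(\gamma)\neq 0$ (since $f_{I,a}$ and $f_{I,a,1}$ are coprime by Proposition~\ref{prop:rur-res2}). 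Hence $f_F(\gamma)=0$ if and only if $F(\alpha,\beta)=0$, and the same equivalence holds with $pp(f_F)$ in place of $f_F$, which has the same roots. Identifying the roots of $f_{I,a}$ with the solutions of $I$, the roots of $f_{I,a}$ at which $F$ vanishes are therefore exactly the common roots of $\overline{pp(f_{I,a})}$ and $pp(f_F)$; since $\overline{pp(f_{I,a})}$ is squarefree, it follows that $f_{F=0}$ equals $\gcd(\overline{pp(f_{I,a})},pp(f_F))$ (normalized, say, primitive in $\Z[T]$) and that $f_{F\neq 0}$ is the gcd-free part of $\overline{pp(f_{I,a})}$ with respect to $pp(f_F)$.

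Next I would bound the two operands of this gcd. By Theorem~\ref{th:rur} and Proposition~\ref{prop:rur-size}, $pp(f_{I,a})$ lies in $\Z[T]$, has degree at most $d^2$ and bitsize in $\sO(d^2+d\tau)$; since $f_{I,a}$ is monic, $\overline{pp(f_{I,a})}$ is the gcd-free part of $pp(f_{I,a})$ with respect to its derivative, so by Lemma~\ref{complexity:gcd} it is in $\Z[T]$, of degree at most $d^2$ and bitsize in $\sO(d^2+d\tau)$, and is computable with $\sOB(d^6+d^5\tau)$ bit operations. By Lemma~\ref{lem:f_F}, $pp(f_F)\in\Z[T]$ has degree in $O(d^3)$ and bitsize in $\sO(d^3+d^2\tau)$, and is computable with $\sOB(d^7+d^6\tau)$ bit operations.

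I would then apply Lemma~\ref{lem:finegcd} with $P=\overline{pp(f_{I,a})}$ (so $p\leq d^2$, $\tau_P\in\sO(d^2+d\tau)$) and $Q=pp(f_F)$ (so $q\in O(d^3)$, $\tau_Q\in\sO(d^3+d^2\tau)$): here $p\tau_Q+q\tau_P$ is in $\sO(d^5+d^4\tau)$ and $\max(p,q)$ is in $O(d^3)$, so a gcd of $P$ and $Q$ (that is, $f_{F=0}$) and the gcd-free part of $P$ with respect to $Q$ (that is, $f_{F\neq 0}$) can both be computed with $\sOB(\max(p,q)(p\tau_Q+q\tau_P))=\sOB(d^8+d^7\tau)$ bit operations. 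Lemma~\ref{lem:finegcd} further bounds the bitsize of the gcd by $O(\min(p+\tau_P,q+\tau_Q))\subseteq\sO(d^2+d\tau)$ and that of the gcd-free part of $P$ by $O(p+\tau_P)\subseteq\sO(d^2+d\tau)$; moreover both $f_{F=0}$ and $f_{F\neq 0}$ divide $\overline{pp(f_{I,a})}$, hence have degree at most $d^2$. Adding the $\sOB(d^7+d^6\tau)$ cost of producing $pp(f_F)$ and the $\sOB(d^6+d^5\tau)$ cost of producing $\overline{pp(f_{I,a})}$ yields the claimed overall bit complexity $\sOB(d^8+d^7\tau)$ and the claimed bitsize bound $\sO(d^2+d\tau)$.

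The step I expect to be the main obstacle is the complexity accounting of the gcd: since $pp(f_F)$ has degree $\Theta(d^3)$ while $\overline{pp(f_{I,a})}$ has degree only $d^2$, feeding these polynomials to the symmetric estimate of Lemma~\ref{complexity:gcd} with the larger degree and bitsize would give only $\sOB(d^9+d^8\tau)$, which overshoots the target. The point is that in $\max(p,q)(p\tau_Q+q\tau_P)$ the two large quantities, the degree $q\in O(d^3)$ of $pp(f_F)$ and its bitsize $\tau_Q\in\sO(d^3+d^2\tau)$, never multiply each other — $q$ is paired with the small bitsize $\tau_P$ and $\tau_Q$ with the small degree $p=d^2$ — so Lemma~\ref{lem:finegcd} is precisely what keeps the exponent at $8$. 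A secondary point to get right is the justification that $\gcd(\overline{pp(f_{I,a})},pp(f_F))$ really is $f_{F=0}$ as defined just before the lemma, which is exactly where the root-matching identity $f_F(\gamma)=f_{I,a,1}^{d}(\gamma)\,F(\alpha,\beta)$ of Lemma~\ref{lem:f_F} is invoked.
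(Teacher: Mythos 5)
Your proposal is correct and follows essentially the same path as the paper's proof: characterize $f_{F=0}$ as a gcd of $\overline{f_{I,a}}$ and $f_F$ (and $f_{F\neq 0}$ as the corresponding gcd-free part), using the root identity $f_F(\gamma)=f_{I,a,1}^{d}(\gamma)\,F(\alpha,\beta)$ from Lemma~\ref{lem:f_F}, then invoke Lemma~\ref{lem:finegcd} on the primitive integer polynomials to keep the cost at $\sOB(d^8+d^7\tau)$ rather than the $\sOB(d^9+d^8\tau)$ that the symmetric bound of Lemma~\ref{complexity:gcd} would give. You actually make explicit two points the paper leaves implicit — the precise split $p\tau_Q+q\tau_P\in\sO(d^5+d^4\tau)$ and the fact that the gcd must be taken on $\overline{pp(f_{I,a})}$ rather than $\overline{f_{I,a}}$ so that Lemma~\ref{lem:finegcd}, stated for $\Z[X]$, applies — so the argument is, if anything, slightly tighter than the original.
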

\begin{proof}
  The polynomial $f_F$ (not to be confused with $f_{F=0}$ or $f_{F\neq 0}$), as defined in
  Lemma~\ref{lem:f_F},   has the same sign as
  $F$ at the {real solutions} 
 of the system $I$. {The same holds for complex solutions by considering the ``sign'' as zero or
 nonzero.} The roots of the squarefree polynomial
  $f_{F=0}=\gcd(\overline{f_{I,a}},f_F)$ thus are the $\alpha + a \beta $ with
  $(\alpha,\beta)$ solution of $I$ and $F \left( \alpha, \beta \right) = 0$. The
  polynomial $f_{F\neq 0}$ defined as the gcd-free part of $\overline{f_{I,a}}$
  with respect to $f_F$ is also squarefree and encodes the solutions such
  that $F \left( \alpha, \beta \right) \neq 0$.
  
  According to Lemma~\ref{lem:f_F} and the proof of
  Theorem~\ref{th:sign_at_rur}, the primitive part of $f_F$ and
  $\overline{f_{I,a}}$ can be computed in, respectively, $\sO_B(d^7+d^6\tau)$
  and $\sOB(d^4(d^2+d\tau))$ bit operations. {Moreover,} these integer polynomials have,
  respectively, bitsize $\sO(d^3+d^2\tau)$ and $\sO(d^2+d\tau)$ and degree
  $O(d^3)$ and $O(d^2)$. Thus, by Lemma~\ref{lem:finegcd}, their gcd and the
  gcd-free part of $\overline{f_{I,a}}$ with respect to $f_F$, i.e. $f_{F=0}$
  and $f_{F\neq0}$, can be computed with $\sOB(d^8+d^7\tau)$ bit operations and
  they have bitsize in $\sO(d^2+d\tau)$.
\end{proof}

For several equality or inequality constraints, iterating this splitting process
gives a parameterization of the corresponding set of constraints. It is worth
noticing that the set of polynomials $\{f_{F=0},f_{I,a,1},f_{I,a,X},f_{I,a,Y}\}$ 
defines a rational parameterization of the solutions of the ideal $\ideal{P,Q,F}$, but this
is not a RUR of this ideal (in the sense of Definition~\ref{def:rur}). 
First,
because multiplicities are lost in the splitting process and second because the
coordinate polynomials of the parameterization are still those of the ideal
$I$. Still, it is possible to compute a RUR of the radical of the corresponding
ideal {(and similarly for the ideal corresponding to $F\neq0$)}:

\begin{proposition}
\label{prop:overconstrained}
  Given $RUR_{I,a}$ and $F \in \Z[X,Y]$ of total degree at most $d$ and maximum
  bitsize $\tau$, the bit complexity of computing the RUR of the radical of the
  ideal $\ideal{P,Q,F}$ is in $\sOB(d^8+d^7\tau)$.
\end{proposition}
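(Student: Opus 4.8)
The plan is to reuse the squarefree polynomial $f_{F=0}$ produced by Lemma~\ref{lem:rursplit} as the first polynomial of the sought RUR, and then to recover the three other polynomials by transporting the coordinate mapping of $RUR_{I,a}$ through a reduction modulo $f_{F=0}$. Write $J=\sqrt{\ideal{P,Q,F}}$. Since $\ideal{P,Q}\subseteq\ideal{P,Q,F}$, the ideal $J$ is zero-dimensional and $V(J)=V(\ideal{P,Q,F})=\{\sigma\in V(I):F(\sigma)=0\}\subseteq V(I)$; in particular $X+aY$, being separating for $V(I)$, is separating for $V(J)$, so $RUR_{J,a}$ is a genuine RUR. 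Because $J$ is radical, Definition~\ref{def:rur} gives $f_{J,a}(T)=\prod_{\sigma\in V(J)}(T-X(\sigma)-aY(\sigma))$, which is exactly the monic normalization of $f_{F=0}$ (its roots are the same simple roots, and normalizing keeps the bitsize in $\sO(d^2+d\tau)$). This step alone costs $\sOB(d^8+d^7\tau)$ and yields a polynomial of degree $O(d^2)$ and bitsize $\sO(d^2+d\tau)$.

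For the remaining polynomials I would proceed as follows. First, since $f_{J,a}$ is squarefree and $X+aY$ is separating, the derivation of Proposition~\ref{prop:rur-res2} in Section~\ref{sec:proof:lem:rur-res2} (which only uses Definition~\ref{def:rur}, the separating property, and Lemma~\ref{lem:eq3-4}, none of which require $J$ to be generated by two polynomials) gives $g_J\equiv 1$ and hence $f_{J,a,1}=f_{J,a}'$, obtained by a single differentiation. Next, for every root $\gamma$ of $f_{J,a}$ the corresponding solution $(\alpha,\beta)\in V(J)\subseteq V(I)$ satisfies $\beta=\frac{f_{I,a,Y}}{f_{I,a,1}}(\gamma)$ via the RUR mapping of $I$ (note $\gamma$ is a root of $f_{I,a}$ since $f_{J,a}\mid\overline{f_{I,a}}\mid f_{I,a}$, and $f_{I,a,1}(\gamma)\neq0$ because $\gcd(f_{I,a,1},f_{I,a})=1$ by Proposition~\ref{prop:rur-res2}); hence, as $f_{J,a,1}(\gamma)=f_{J,a}'(\gamma)\neq0$,
$f_{J,a,Y}(\gamma)=f_{J,a}'(\gamma)\,f_{I,a,Y}(\gamma)\,(f_{I,a,1}^{-1}\bmod f_{J,a})(\gamma)$, the inverse being well-defined since $\gcd(f_{I,a,1},f_{J,a})$ divides $\gcd(f_{I,a,1},f_{I,a})=1$. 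Both $f_{J,a,Y}$ and $f_{J,a}'\,f_{I,a,Y}\,(f_{I,a,1}^{-1}\bmod f_{J,a})\bmod f_{J,a}$ have degree less than $\deg f_{J,a}$ and agree at its $\deg f_{J,a}$ distinct roots, hence they are equal; this yields $f_{J,a,Y}$. Finally $f_{J,a,X}$ follows from the identity $f_{J,a,X}=T f_{J,a,1}-d_T(f_{J,a})\,f_{J,a}-a f_{J,a,Y}$ of Proposition~\ref{prop:rur-res2} (using $\overline{f_{J,a}}=f_{J,a}$), so that $\{f_{J,a},f_{J,a,1},f_{J,a,X},f_{J,a,Y}\}$ is, by Definition~\ref{def:rur}, the RUR of $J$.

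For the complexity, by Theorem~\ref{th:rur}, Proposition~\ref{prop:rur-size} and Lemma~\ref{lem:rursplit} all the polynomials involved have degree $O(d^2)$ and bitsize $\sO(d^2+d\tau)$ (after clearing denominators by a common integer of bitsize $\sO(d^2+d\tau)$). The modular inverse $f_{I,a,1}^{-1}\bmod f_{J,a}$ costs $\sOB((d^2)^2(d^2+d\tau))=\sOB(d^6+d^5\tau)$ and has bitsize $\sO(d^2(d^2+d\tau))=\sO(d^4+d^3\tau)$ by \cite[Corollaries~11.11(ii) and~6.52]{vzGGer}; multiplying it successively by $f_{I,a,Y}$ and $f_{J,a}'$ keeps degree $O(d^2)$ and bitsize $\sO(d^4+d^3\tau)$ at cost $\sOB(d^2(d^4+d^3\tau))=\sOB(d^6+d^5\tau)$ each by \cite[Corollary~8.27]{vzGGer}; the final Euclidean division modulo $f_{J,a}$ of a dividend of degree $O(d^2)$ and bitsize $\sO(d^4+d^3\tau)$ costs $\sOB((d^2)^2(d^4+d^3\tau))=\sOB(d^8+d^7\tau)$ by \cite[Theorem~9.6 and subsequent discussion]{vzGGer}; computing $f_{J,a,1}$ and $f_{J,a,X}$ is cheaper. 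Adding the $\sOB(d^8+d^7\tau)$ cost of Lemma~\ref{lem:rursplit} for $f_{F=0}$ gives the announced bound.

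I expect the main point requiring care to be the bitsize bookkeeping in this transport step: the modular inverse already has bitsize $\sO(d^4+d^3\tau)$, a factor $d^2$ above the final bitsize $\sO(d^2+d\tau)$ guaranteed by Proposition~\ref{prop:rur-size} for $f_{J,a,Y}$, so one must check both that the reduction modulo $f_{J,a}$ — which does return a polynomial of the small bitsize, since it equals $f_{J,a,Y}$ — still fits in the $\sOB(d^8+d^7\tau)$ budget, and that the intermediate products do not blow up further. A secondary subtlety, handled above, is justifying the closed-form expressions for $f_{J,a,1}$ and $f_{J,a,X}$ for the \emph{radical} ideal $J$, which is not of the form $\ideal{P',Q'}$: this works because the relevant parts of the proof of Proposition~\ref{prop:rur-res2} rely only on $J$ being zero-dimensional with $X+aY$ separating (and radical, which forces $g_J=1$).
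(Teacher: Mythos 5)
Your proof follows essentially the same route as the paper's: take $f_{F=0}$ from Lemma~\ref{lem:rursplit} as $f_{J,a}$, note it is squarefree since $J$ is radical, and recover the coordinate polynomials by transporting the mapping of $RUR_{I,a}$ through a modular inverse of $f_{I,a,1}$, with the final Euclidean division modulo $f_{J,a}$ dominating the cost at $\sOB(d^8+d^7\tau)$. The minor variations — computing $f_{J,a,1}$ directly as $f'_{J,a}$ rather than via a gcd-free part that happens to be trivial, deriving $f_{J,a,X}$ from the closed-form identity of Proposition~\ref{prop:rur-res2} instead of a second transport, and making explicit that the relevant formulas of Proposition~\ref{prop:rur-res2} hold for any zero-dimensional ideal with a separating form — are all sound and do not change the substance.
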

\begin{proof}
  Denote by $J$ the radical of the ideal $\ideal{P,Q,F}$. The polynomial
  $f_{F=0}$ computed in
 Lemma~\ref{lem:rursplit} is the first polynomial
  $f_{J,a}$ of $RUR_{J,a}$.  Indeed, it vanishes at the solutions of this ideal
  (with identification of the roots of ${f_{J,a}}$ with the solutions of the
  system $J$) and it is squarefree. Then Proposition~\ref{prop:rur-res2} yields
  that $f_{J,a,1}$ is the gcd-free part of $f'_{J,a}$ with respect to
  $f_{J,a}$. As in the proof of Theorem~\ref{th:sign_at_rur}, $pp(f_{J,a})$ can
  be computed in $\sOB(d^4+d^3\tau)$ and has bitsize in $\sO(d^2+d\tau)$. By
  Lemma~\ref{lem:finegcd}, applied to $pp(f_{J,a})$ and its derivative,
  $f_{J,a,1}$ can be computed in $\sOB(d^6+d^5\tau)$. 

  According to Definition~\ref{def:rur} of a RUR, the $X$-coordinates of the
  solutions of $J$ are given by the polynomial fraction
  $\frac{f_{J,a,X}}{f_{J,a,1}}$ at the roots of
  $f_{J,a}$. On the other hand, the solutions of $J$, seen as solutions of
  $I$, 
  have their $X$-coordinates defined by the polynomial fraction
  $\frac{f_{I,a,X}}{f_{I,a,1}}$. This thus implies
  that$f_{J,a,X}=f_{I,a,1}^{-1}{f_{I,a,X}}{f_{J,a,1}}$ modulo $f_{J,a}$. The
  computation of $f_{I,a,1}^{-1}$ together with the multiplication with other
  polynomials of the RUR has already been studied in the proof of
  Proposition~\ref{prop:computing-boxes}; this can be done in
  $\sOB(d^6+d^5\tau)$ {time} and gives a polynomial of degree $O(d^2)$ and
  bitsize $\sO(d^4+d^3\tau)$. It remains to compute the remainder of the
  division of this polynomial with $f_{J,a}$, 
which can be done in a soft bit complexity of
  the order of the square of the maximum degree times the maximum bitsize,
  i.e. $\sOB(d^8+d^7\tau)$ \cite[Theorem 9.6 and subsequent
  discussion]{vzGGer}. A similar computation gives the polynomial $f_{I,a,Y}$,
  hence the computation of $RUR_{J,a}$ can be done in $\sOB(d^8+d^7\tau)$ {bit
    operations}.
\end{proof}

\section{Conclusion}

{

We studied the problem of solving systems of bivariate polynomials with integer
  coefficients using
  Rational Univariate Representations.  We first showed that the polynomials of the RUR  of a
  system of two polynomials can be expressed by simple formulas which yield a new simple method for
  computing the RUR and also yield a new bound on the bitsize of these polynomials.  This new
  bound implies, in particular, that the total space complexity of such RURs is, in the worst case,
  $\Theta(d)$ smaller than the alternative rational parameterization introduced by Gonzalez-Vega and
  El Kahoui \cite{VegKah:curve2d:96}. Given a RUR, this new bound also yields some improvements on the complexity
  of computing  isolating boxes and  performing sign\_at
  evaluations. Furthermore,  these improvements also hold for the  rational parameterization of  
   Gonzalez-Vega and El Kahoui. We also addressed the problem of computing RURs of over-constrained systems.

The algorithm we presented  for computing a RUR is more of a theoretical  than a practical
interest. Indeed, the computation of the resultant $R(T,S)$ of trivariate polynomials in not very
efficient in practice.     One particular problem of interest is thus the design of  a practical
efficient algorithm for computing RURs of bivariate systems whose bit complexity is as close as
possible to the one presented here. 
  Our complexity analysis shows that  our new algorithm for computing a RUR is
  dominated by that of finding a separating form which is in
  $\sOB(d^{8}+d^7\tau)$ \cite{bouzidi2013SepElt}. However, in a Monte-Carlo probabilistic setting,
  one can choose a candidate separating form randomly. On the other hand, in a Las-Vegas probabilistic
  setting, it is also possible to choose a candidate separating form randomly, compute a RUR-candidate using
  multi-modular arithmetic and taking advantage of our new bound on its bitsize, and verify a
  posteriori using the RUR-candidate if the chosen candidate separating form is actually
  separating. Such approach is the topic of current research and we refer to \cite{bouzidi:2011:inria-00580431:1} for preliminary work on
  the subject. Another problem of interest is to generalize  our  bounds on the bitsize of RURs to higher dimensions.
}

\small 
\bibliographystyle{alpha}
\bibliography{paper_separating_element.bib}
\end{document}